\documentclass[12pt]{article}
\usepackage[pass]{geometry}

\usepackage{graphicx} 
\usepackage{subfigure} 
\usepackage{lscape}	
\usepackage{pst-plot, pstricks} 
\usepackage{fancybox,amssymb,color} 
\usepackage[english]{babel}
\usepackage{setspace} 
\usepackage{epstopdf}
\usepackage[capposition=top]{floatrow} 
\usepackage{float}
\usepackage{enumitem}
\usepackage{longtable}
\usepackage{tabularx,calc}
\usepackage{multirow}
\usepackage{booktabs}
\usepackage{setspace}
\usepackage{lscape}
\usepackage{mathtools}
\usepackage{rotating} 
\usepackage{adjustbox}
\usepackage{blkarray}
	
\usepackage[authoryear]{natbib}
\usepackage{caption} 

\usepackage{amsmath}
\usepackage{amsthm} 
\usepackage{amssymb}
\usepackage{amsbsy}
\usepackage{bm}

\DeclareMathOperator*{\argmin}{arg\,min}

\usepackage{changes}
\definechangesauthor[name={Sascha}, color=green]{Sascha}

\newtheorem{proposition}{Proposition} 
\newtheorem{lemma}{Lemma} 
\newtheorem{corollary}{Corollary} 
\newtheorem{assumption}{Assumption} 
 
\usepackage[dvipsnames]{xcolor}

\let\b\mathbf

\let\mc\mathcal
\let\bs\boldsymbol

\newcommand{\vect}[1]{\mathrm{vec}\left( {#1} \right)}

\newcommand{\R}{\mathbb{R}}	
\newcommand{\N}{\mathbb{N}}	

\newcommand{\E}{\mathbb{E}}	
\newcommand{\var}{\text{Var}}	
\newcommand{\B}{\mathcal{B}}	

\DeclareMathOperator*{\diag}{diag}		
\DeclareMathOperator*{\rank}{rank}

\newtheorem{assumptionalt}{Assumption}[assumption]
\newenvironment{assumptionp}[1]{
  \renewcommand\theassumptionalt{(#1)}
  \assumptionalt
}{\endassumptionalt}

\usepackage[most]{tcolorbox}
\newtcbtheorem{Notes}{\bfseries Notes}{enhanced,drop shadow={black!50!white},
  coltitle=black,
  top=0.3in,
  attach boxed title to top left=
  {xshift=1.5em,yshift=-\tcboxedtitleheight/2},
  boxed title style={size=small,colback=Apricot}
}{notes}
\newtcolorbox[auto counter]{notes}[1][]{title={\bfseries Notes~\thetcbcounter},enhanced,drop shadow={black!50!white},
  coltitle=black,
  top=0.3in,
  attach boxed title to top left=
  {xshift=1.5em,yshift=-\tcboxedtitleheight/2},
  boxed title style={size=small,colback=Apricot},#1}

\newcommand*\dd{\mathop{}\!\mathrm{d}} 

\title{\vspace{-8ex}
Testing strong exogeneity in proxy SVARs with internal synthetic proxies
}
\author{Sascha  A. Keweloh\thanks{TU Dortmund, Department of Economics, D-44221 Dortmund, Germany. Sascha Keweloh gratefully acknowledges financial support by TRR 391 “Spatio-temporal Statistics for the Transition of Energy and Transport” (520388526)  by the Deutsche Forschungsgemeinschaft (DFG, German Research Foundation).} 
 \and
 Shu Wang\thanks{University of G\"ottingen, Chair of Econometrics. European University Viadrina Frankfurt, Department of Statistics, Gro{\ss}e Scharrnstra{\ss}e 59, D-15230 Frankfurt (Oder), Germany. Shu Wang gratefully acknowledges financial support by the DFG  (HE 2188/17-2). We thank Martin Bruns, Taro Fujita, Raffaella Giacomini, Helmut Herwartz, Christian Matthes, Elena Pesavento, Christoph Rothe, Carsten Trenkler, Endong Wang, and organizers and participants at various research seminars (University of Mannheim, University of G\"ottingen) and conferences (EEA-ESEM 2026, AMES-China 2026, Bundesbank Hannover 2026, 4th edition of the UEA ECO Time Series Workshop) for helpful discussions and comments.}}
\date{\footnotesize{\today}}

\begin{document}
	\clearpage\maketitle
	\thispagestyle{empty}
    
    \vspace{-1cm}
	\begin{abstract}
Proxy SVARs often rely on a single credible proxy, leaving proxy exogeneity untestable. 
Narrative arguments used to justify proxy validity imply a stronger form of exogeneity. We formalize the strong exogeneity assumption and propose a corresponding test based on internal synthetic proxies. 
The approach uses the external proxy to locate the target-shock direction
in the reduced-form innovations and synthesize the candidate shock into an additional internal synthetic proxies, which allows to test the joint exogeneity of the original and internal synthetic proxy.
We derive the large-sample theory for the ISP test and show that the identification strength is governed by a direct measure of the target shock's non-Gaussianity.
We study weak identification under asymptotic Gaussianity and develop an identification-robust version of the test. Monte Carlo experiments demonstrate the favorable power of the test when the proxy has a low signal-to-noise ratio. Applications to monetary policy and oil supply news shocks find no evidence against the exogeneity of the benchmark proxies.
\end{abstract}
	
	\vspace{0.2cm} \noindent   
    {\it Keywords:} Structural vector autoregression, proxy SVAR, exogeneity test \\
    \noindent
	{\it JEL Codes:} C12, C32, C51

\newpage 
\section{Introduction}

Proxy variables (also known as external instruments) are widely used to identify structural shocks in SVARs, with prominent applications to monetary policy (\citealt{gertler2015monetary}, \citealt{caldara2019monetary}, \citealt{jarocinski2020deconstructing}, and \citealt{miranda2021transmission}), fiscal policy (\citealt{mertens2014reconciliation} and \citealt{caldara2017analytics}), uncertainty (\citealt{angelini2019exogenous}), and oil markets (\citealt{kanzig2021macroeconomic}). Identification requires a proxy that is relevant for the target shock and exogenous (i.e., orthogonal) to the remaining structural shocks. 
With a single proxy, the exogeneity assumption leaves the impact of the target shock just-identified, leaving no over-identifying restrictions to test exogeneity.
The credibility of the identifying assumption must instead rest largely on the economic narrative underlying the construction of the proxy. \citet{BK2024} observe that this narrative often has stronger implications than the conventional notion of exogeneity that involves only uncorrelatedness. Instead, the economic reasoning typically entails that non-target shocks are not predictable in their locations given the proxy variable, which is stronger than mere uncorrelatedness and can create testable restrictions.

A proxy $z_t$ for a target shock $\varepsilon_{1t}$ is commonly 
viewed as a noisy measurement of $\varepsilon_{1t}$, $z_t = f(\varepsilon_{1t}, v_t)$ (with $v_t$ being the noise). 
\citet{BK2024} argue for the notion of strong proxy exogeneity, namely that a valid proxy should not contain information on the conditional mean of non-target shocks $\bs\varepsilon_{2t}$, i.e., $\E[\bs\varepsilon_{2t}|z_t] = \b 0$. 
Under the notion of strong proxy exogeneity, measurable transformations (e.g., power functions) of the proxy remain exogenous and can be used as \textit{synthetic proxies} (SPs). Combining the original proxy with such transformations produces over-identifying restrictions.
Importantly, strong proxy exogeneity $\E[\bs\varepsilon_{2t}|z_t] = \b 0$ is stronger than the exogeneity assumption $\E[\bs\varepsilon_{2t}z_t] = \b 0$, and the latter suffices for just-identifying the shock. Nevertheless, strong proxy exogeneity offers an attractive way for the researcher to assess her underlying narrative arguments when no additional external information is available.

Despite being intuitive and easy to implement, the SP-based test suffers from an important drawback that limits its practical usefulness.
Higher-order transformations magnify the noise together with the signal,
causing the over-identifying moment restrictions to exhibit a substantially higher sample variation.
As a result,  tests with SPs can have very poor power to detect violations of the null.
The power problem becomes particularly severe when the external proxy has a low signal-to-noise ratio, or in general, is only weakly correlated with the target shock.
In this case, the additional moments remain valid but are measured too imprecisely to provide an informative test, even in relatively large samples.

The main contribution of this paper is to develop the approach of \emph{internal synthetic proxies} (ISPs).
We use the proxy $z_t$ to locate a candidate target shock in the space spanned by the reduced-form innovations.
Under exogeneity, $z_t$ identifies the target shock, and we can apply similar transformations to the identified shock to obtain an ISP.
Such an internal proxy remains exogenous and is free from the measurement error $v_t$ idiosyncratic to the proxy.
Combining the original $z_t$ with the ISP yields an over-identified system and allows us to test the joint exogeneity.

A second contribution is to develop a weak identification theory for the proposed test. 
The analysis reveals that the test draws its identifying information from non-Gaussian features of the target shock.
Specifically, the local Jacobian of the ISP moments is governed by a sequence of direct measures of non-Gaussianity with respect to the selected moments. These measures correspond to the deviations from the Stein identity for standard Gaussian random variables \citep[see,][Lemma 1]{Stein1981}, which we call Stein deviations. The identification content of a quadratic ISP is governed by the skewness of $\varepsilon_{1t}$, a cubic ISP by its excess kurtosis, and higher-order ISPs by departures from the corresponding Gaussian moment. 
Hence, weak proxy (i.e., local-to-zero relevance) does not imply weak ISP identification, as the target shock can continue to supply identifying information internally through a nonzero selected Stein deviation. 
This has also been observed in the recent contribution of \cite{Junare2026}, who shows that with sufficient non-Gaussianity standard asymptotics still apply under weak proxy. 
Local weak identification arises when \textit{both} the relevance of the original proxy vanishes \textit{and} the distribution of the target shock is asymptotically indistinguishable from the Gaussianity with respect to the moment selected. 

Hence, the ISP approach combines identification by an external proxy with statistical information from non-Gaussianity. The proxy selects and economically labels a potential target-shock direction, while selected higher-order moments are used to assess that direction. 
This characterization also places the ISP approach at the intersection of the proxy SVAR literature and the growing literature on statistical identification through non-Gaussianity and higher-order moments \citep[see, e.g.,][]{gourieroux2017statistical, lanne2017identification, keweloh2021generalized}.\footnote{Notable contributions to this literature also include \cite{matteson2017independent, guay2021identification,mesters2022non,amengual2022moment,
karamysheva2022we,lanne2022statistical,lanne2022identifying,herwartz2023srhl,
keweloh2023monetary,drautzburg2023refining,fiorentini2024identification,hafner2025kml,keweloh2025higher,keweloh2025uncertain, Junare2026}.}
The latter literature has shown that exploiting higher-order cross-moments can be useful for shock identification and has so far yielded fruitful results in explaining the transmission of monetary policy shocks, different oil price shocks, etc.
At the same time, ISPs do not require mutual independence of the shocks, so that dependencies beyond their conditional mean (e.g., common time-varying volatility) are generally permitted.
Additionally, the approach imposes no assumption on the non-target shocks other than that they must remain mutually orthogonal.
Our weak-identification theory is related to \citet{LeeMesters2024} and \citet{HoeschLeeMesters2024}, who study weak identification under local-to-Gaussian sequences in models identified through independence and non-Gaussianity (i.e., local-to-Gaussianity asymptotics). 
We provide a directly interpretable characterization of local-to-Gaussianity as a measure of identification strength, which pins down the intuitive moment-based distributional feature that supplies identification and links it precisely to the Jacobian of the population moments. 
Going beyond local weak identification, we suggest a robust test, which is consistent against fixed alternatives and has controlled size under local weak identification, albeit conservative. 

The paper contributes to a growing literature on tests of proxy exogeneity in SVARs based on heteroskedasticity and volatility breaks \citep{schlaak2023monetary,angelini2025invalid}, sign restrictions \citep{braun2023identification}, and non-Gaussianity  \citep{keweloh2023monetary,keweloh2025estimating}. Simulations show that the ISP test provides substantial power gains when the proxy is noisy, while the relative performance depends on the non-Gaussian features of the target shock and on the precision with which the higher-order moments can be estimated. Additional experiments illustrate the behavior of the identification-robust tests and show that omitted lags or nonlinear dynamics can produce over-rejections for a proxy that is exogenous with respect to the shocks of the correctly specified model. We apply both tests to high-frequency proxies for monetary policy shocks \citep{gertler2015monetary} and oil supply news shocks \citep{kanzig2021macroeconomic}. In both applications, we find no evidence against the proxies. These findings speak to whether the benchmark short-rate surprise can be used directly or whether announcement components should first be separated \citep{jarocinski2020deconstructing,degasperi2025}. 

The remainder of the paper is organized as follows. Section~\ref{sec:intuition} provides an intuition of testing with SPs and ISPs. Section~\ref{sec:theory} discusses theoretical properties of the test under identification, while Section~\ref{sec:SP_weak} addresses weak identification. Section~\ref{sec:MC} reports the MC evidence. Section~\ref{sec:app} contains the empirical applications, and  Section~\ref{sec:fazit} concludes. The Appendix collects proofs and additional theoretical, MC and empirical results. Throughout, we use the symbol ``$\Rightarrow$'' for weak convergence, the shorthand $\E[X\mid Y]$ and $\E[X\mid Y,Z]$ for $\E[X\mid\bs\sigma(Y)]$ and $\E[X\mid\bs\sigma(Y,Z)]$, respectively, where conditional expectations are understood to be defined almost surely (a.s.).

\section{The intuition of testing strong exogeneity}\label{sec:intuition}
For endogenous variables $y_t \in \R^n$, we are interested in a finite-order SVAR($p$)  
\begin{align} 
        \label{eq:SVAR}
y_t=\nu+ \sum_{i=1}^{p} A_{i} y_{t-i} +u_t \quad \text{and } \quad  u_t = B_0 \varepsilon_t, 
\end{align}
where $u_t$ contains the reduced-form shocks and $\varepsilon_t = (\varepsilon_{1t}, \ldots, \varepsilon_{nt})'$ the structural shocks, which have zero mean and a diagonal covariance matrix $\E[\varepsilon_t \varepsilon_t'] = \Sigma_\varepsilon = \diag(\sigma_{1}^2, \ldots, \sigma_{n}^2)$. The structural shocks are mapped to the reduced-form model through the non-singular matrix $B_0 \in \R^{n \times n}$, whose diagonal entries are normalized to one and $\E[u_t u_t'] = \Sigma_u = B_0 \Sigma_\varepsilon B_0'$. Let $\varepsilon_{1t}$ denote the target shock, $\bs{\varepsilon_{2t}} \coloneqq (\varepsilon_{2t}, \ldots, \varepsilon_{nt})'$ the non-target shocks, and let $\beta_0 \in \R^{n-1}$ with $b_0 = b(\beta_0) \coloneq (1, \beta_0')'$ denote the first column of $B_0$, which contains the structural impact multipliers of the target shock $\varepsilon_{1t}$. 

Given the estimated reduced form, a proxy SVAR estimates  $b(\beta)$  using a  proxy  $z_t \in \R$. A valid proxy is relevant and exogenous, see \cite{mertens2013dynamic}.

\begin{assumptionp}{A1}\label{ass:exogeneity}
The proxy $z_t$ satisfies
\[
\E[\varepsilon_{1t} z_t] \neq 0 \quad \text{(relevance)}
\qquad
\text{and}
\qquad
\E[\bs{\varepsilon_{2t}}  z_t] = \bs{0} \quad \text{(exogeneity)}. 
\]
\end{assumptionp}
 Assumption~\ref{ass:exogeneity} is sufficient to identify the simultaneous impact of the target shock  by  
\[
\beta_0 = \frac{\E[z_t\bs u_{2t}]}{\E[z_t u_{1t}]},
\qquad
\bs u_{2t}:=\mc J u_t,\quad \mc J=(\iota_2,\ldots,\iota_n)' = 
  (\bs 0_{(n-1) \times 1} \quad I_{n-1}),
\]
where $\iota_i$ is the $i$th canonical basis in $\R^n$. 
However, with a single proxy, the exogeneity assumption is not testable from the data.

\cite{BK2024} argue that exogeneity is usually defended through economic narratives of how the proxy was constructed. 
For example, changes in asset prices within a narrow window around policy  announcements are used as proxies for monetary policy or oil market shocks \citep[see, e.g.,][]{gertler2015monetary,nakamura2018high,jarocinski2020deconstructing,miranda2021transmission,kanzig2021macroeconomic}. The narrative argument is that within the announcement window, the proxy is affected by the policy news, while other non-target shocks are unlikely to arrive within the event window. 
Therefore, the proxy shall be interpreted as a noisy measure of the target shock, without being driven by non-target shocks.

This argument establishes that the proxy is an unknown function of the target shock and noise, $z_t=f(\varepsilon_{1t},v_t)$. However, this argument does not by itself imply even conventional proxy exogeneity  \(\E[\varepsilon_{1t}\varepsilon_{2t}]=0\).
In fact, \cite{kanzig2021macroeconomic} explicitly acknowledged that to be able to interpret the surprises as the intended proxy, it cannot contain any information about other factors such as oil demand or economic activity.
In general, the bridge between the narrative argument establishing $z_t=f(\varepsilon_{1t},v_t)$ and proxy exogeneity  $\E[\varepsilon_{1t}\varepsilon_{2t}]=0$  is the assumption that the target shock and noise contain no information on the expected value of the non-target shocks.\footnote{
In general, the assumption of uncorrelated shocks $\E[\varepsilon_{1t} \varepsilon_{2t}]=0$ is not sufficient to conclude exogeneity $\E[z_t\varepsilon_{2t}]=0$ for an arbitrary functional form $z_t=f(\varepsilon_{1t},v_t)$. For example, the shocks $\varepsilon_{1t} \sim N(0,1)$ and $\varepsilon_{2t}=\varepsilon_{1t}^2-1$ are uncorrelated and the proxy $z_t=f(\varepsilon_{1t})=\varepsilon_{1t}+\varepsilon_{1t}^2$ is only a function of the target shock, but $\E[z_t \varepsilon_{2t}]=2 \neq 0$.
Following Lemma~\ref{lem:conditional_expectation}, it can be shown that $\E[z_t\bs\varepsilon_{2t}]=\bs 0$ holds for arbitrary Borel-measurable $f$ if and only if $\E[\bs\varepsilon_{2t}\mid \varepsilon_{1t},v_t]=\bs 0$. 
The inability to conclude proxy exogeneity from $\E[\varepsilon_{1t} \varepsilon_{2t}]=0$ and $z_t=f(\varepsilon_{1t},v_t)$ established by the narrative argument comes from the fact that the function $f(.)$ is unknown. 
If the narrative argument additionally justified a specific functional form, proxy exogeneity can follow from a weaker assumption than strong exogeneity, e.g., it follows from uncorrelated shocks for a linear function $f(.)$. 
} 
The strong exogeneity test proposed in the next section is built on this assumption.

\begin{assumptionp}{A2}\label{ass:strong_exogeneity}
The proxy  $z_t=f(\varepsilon_{1t},v_t)$ with $f: \R \times \R \mapsto \R$ measurable and $v_t$ uncorrelated with $\varepsilon_{1t}$ satisfies
\[
\E[\varepsilon_{1t} z_t] \neq 0 \quad \text{(relevance)}
\qquad
\text{and}
\qquad
\E[\bs\varepsilon_{2t}\mid \varepsilon_{1t},v_t]=\bs{0} \quad \text{(strong exogeneity)}. 
\]
\end{assumptionp}

While the narrative argument establishes that the proxy is an unknown function of the target shock and noise, $z_t=f(\varepsilon_{1t},v_t)$, the strong  exogeneity assumption \ref{ass:strong_exogeneity} is sufficient to conclude exogeneity in assumption \ref{ass:exogeneity}. Specifically, with  the law of iterated expectations \ref{ass:strong_exogeneity} implies
\(
\E[\varepsilon_{2t}\mid z_t]=0,
\)
and therefore \(\E[\varepsilon_{2t}h(z_t)]=0\) any measurable function $h(.)$ and consequentially \(\E[\varepsilon_{2t}z_t]=0\). \cite{BK2024} define the implication that the proxy contains  no information on the expected value of the non-target shocks as strong proxy exogeneity and develop a strong proxy exogeneity test.

\begin{assumptionp}{A3}\label{ass:strong_proxy_exogeneity}
The proxy $z_t$ satisfies
\[
\E[\varepsilon_{1t} z_t] \neq 0 \quad \text{(relevance)}
\qquad
\text{and}
\qquad
\E[\bs{\varepsilon_{2t}} \mid z_t] = \bs{0} \quad \text{(strong proxy exogeneity)}. 
\]
\end{assumptionp}
Consequently, conventional exogeneity in \ref{ass:exogeneity}  is sufficient for identification of $b(\beta)$, while
strong exogeneity in  \ref{ass:strong_exogeneity} is sufficient for concluding exogeneity  \ref{ass:exogeneity} of a proxy from the narrative argument. Strong proxy exogeneity in \ref{ass:strong_proxy_exogeneity} is an implication of strong exogeneity. 
The strong exogeneity test developed in the next section builds on  \ref{ass:strong_exogeneity}. Rejection of the test indicates that the link of strong exogeneity between the narrative argument and the conclusion of an exogenous proxy is not supported by the data. 

In what follows, the strong exogeneity is the object of interest in our analysis. 
While assuming that the non-target shocks are unpredictable in mean by the target shock, $\E[\bs\varepsilon_{2t}\mid \varepsilon_{1t}]=\bs 0$, seems stronger than assuming uncorrelated shocks, it has been argued that this assumption is implicit when interpreting coefficients in $B$ (and structural MA coefficients) as impulse responses \citep[e.g.,][]{keweloh2025uncertain}.
To illustrate, consider a bivariate SVAR(0) with $y_{1t}=b_{11} \varepsilon_{1t} + b_{12} \varepsilon_{2t}$. The contemporaneous response of $y_{1t}$ to a unit change in the target shock is the difference between a shocked path with $\varepsilon_{1t}=1$ and its potential path, where the variable is already at its ``steady state'' \citep[e.g.,][]{PesaranShin1998,Shephard2019}:
\begin{align}
\label{eq:causal_irf}
\E[y_{1t}|\varepsilon_{1t}=1,y_{s,s<t}] - \E[y_{1t}|\varepsilon_{1t}=0,y_{s,s<t}] 
= b_{11} +b_{12}
    \left(\E[\varepsilon_{2t}|\varepsilon_{1t}=1 ]-\E[\varepsilon_{2t}|\varepsilon_{1t}=0]\right),
\end{align}
where we use the MDS property of the shocks. Without $\E[\varepsilon_{2t} | \varepsilon_{1t}]= 0$, we cannot guarantee that the second term vanishes, which is required to establish the interpretation of $b_{11}$ as the contemporaneous impulse response to the target shock.\footnote{This holds unless the system is Gaussian which implies that $\E[\varepsilon_{2t} | \varepsilon_{1t}]= 0$ is equivalent to $E[\varepsilon_{2t}\varepsilon_{1t}]=0$. In general, the conditional distribution of non-target shocks given a target shock differs from their marginal distributions, and often the conditional expectation has to be computed via simulation \citep[see, e.g.,][]{PesaranShin1998}. This is also discussed in the literature on average treatment effects, where the researcher is interested in the expected average effect of changing from an experimental design for which the treatment is fixed at unity and to another design for which its treatment is fixed at zero \citep[see, e.g.,][]{savje2021average}. Since the averages are computed over different distributions, it is well known that this effect may not correspond to structural parameters of the model.
}
It is worth recalling that the standard assumption in the measurement error literature implies that the conditional expectation of $y_t$ given $(\varepsilon_{1t}, z_t)$ depends only on $\varepsilon_{1t}$, i.e., $\E[y_t | \varepsilon_{1t}, z_t] = \E[y_t |\varepsilon_{1t}]$ (see, e.g., \citealt{carroll2006measurement} Chp.2.6, see also \cite{KolesarPMgoodbadugly}, Sec.3.2).
In other words, once the shock $\varepsilon_{1t}$ was observed, $z_t$ would not hold any further information for predicting $y_t$. In the current example, this is equivalent to 
\[
\E[\bs\varepsilon_{2t}\mid \varepsilon_{1t}] = \E[\bs\varepsilon_{2t}\mid \varepsilon_{1t},v_t],
\]
which coincides with strong exogeneity formalized in~\ref{ass:strong_exogeneity} under mean independence.

Nevertheless, in applications where non-target shocks are related to higher-moments, the strong exogeneity assumption can be inappropriate. 
For example, consider a target shock $\varepsilon_{1t} \sim N(0,1)$ and a non-target shock $\varepsilon_{2t} = (\varepsilon_{1t}^2 - 1) \tilde{\varepsilon}_{2t}$, where $\tilde{\varepsilon}_{2t}  \sim N(0,1)$ and independent of $\varepsilon_{1t}$, such that the volatility of the target shock changes the location of the conditional distribution of the non-target shock.
Such higher-order dependence may arise for uncertainty or risk premium shocks, although a majority of empirical studies adopt an identification scheme that exploits only restrictions up to the second-order moment \citep[e.g.,][]{Bloom2009, BachmannElstnerSims2013, Scotti2016}.
In such applications, the researcher should not invoke strong exogeneity without additional justification. 
However, this scenario  makes it problematic to interpret coefficients in $B_0$ as contemporaneous impulse responses. In the example above, $b_{11}$ loses the interpretation as the contemporaneous response of $y_{1t}$ to a unit change in the target shock.

\cite{BK2024} propose a test for the strong proxy exogeneity condition \ref{ass:strong_proxy_exogeneity}. They define a \emph{synthetic proxy} (SP) as the variable $h(z_t)$, with the original proxy $z_t$ and a measurable function $h(.)$. Strong proxy exogeneity implies that the SP is exogenous, i.e. $\E[h(z_t) \bs\varepsilon_{2t}] = \bs{0}$ by the law of iterated expectations. Therefore, the SP can also identify the impact of the target shock. The strong proxy exogeneity test simply assesses joint exogeneity of the original and synthetic proxy.

To see where the identifying information of the SP comes from, write 
$
B_0 =
\begin{pmatrix}
    1 & b_2'\\
    \beta_0 & B_{22}
\end{pmatrix}
$. For a candidate value, write 
$r_t(\beta) \coloneq \bs{u_{2t}}- \beta u_{1t}$ and at the true value, $r_t(\beta_0)  = (B_{22} - \beta_0 b_2') \bs\varepsilon_{2t}$ is proportional to non-target shocks. For example, consider the quadratic SP $z_t^2$, then
\begin{align}   
\E[z_t^2 r_t(\beta) ] = 
(\beta_0 - \beta) \E[z_t^2 \varepsilon_{1t}]
+ (B_{22} - \beta b_2') \E[z_t^2 \bs\varepsilon_{2t}]
=
(\beta_0 - \beta) \E[z_t^2 \varepsilon_{1t}] =: (\beta_0-\beta)\,\rho_2,
\label{eq:moment_expansion}
\end{align}
where $\rho_2:=\E[z_t^2 \varepsilon_{1t}]$ measures relevance of the SP. Hence, $\E[z_t^2 r_t(\beta_0) ]=\bs  0$ can serve as an over-identifying restriction in addition to $\E[z_t r_t(\beta_0) ] = \bs 0$ and allows for a $J$-test. However, as indicated in~\eqref{eq:moment_expansion}, the power of the test depends on the relevance of the SP. If $z_t^2$ contains little information about the target shock so that $\rho_2$ is small, the test has no power. Hence, the practical usefulness of the test depends on (nonlinear) structure in the proxy-shock relationship. In addition, raising a noisy proxy to higher powers can severely inflate the dispersion of the associated sample moments and thereby reduce the power of over-identification tests. To illustrate this, consider the linear proxy $z_t=\phi\,\varepsilon_{1t}+ v_t$, where the noise $v_t \sim N(0, \sigma_v^2)$ is independent of the structural shocks. 
It can be shown that the variance of the moment statistic is proportional to $\E[z_t^4]=\phi^4\E[\varepsilon_{1t}^4]+6\phi^2\sigma_1^2\sigma_v^2+3\sigma_v^4$, which suffers from noise inflation via $\sigma_v^2$ and $3\sigma_v^4$.

To mitigate the power loss from this noise inflation, we propose to first ``denoise'' the external proxy using the information contained in the reduced-form innovations and only then synthesize it. 
Since the noise is orthogonal to $\varepsilon_t$ and $u_t=B_0\varepsilon_t$ spans the shock space, one can remove the measurement noise by projecting $z_t$ onto the linear span of $u_t$. Consider the normalized population linear projection
\[
\tilde z_t^* = \frac{\E[z_tu_t'] \E[u_t u_t']^{-1}}{\sqrt{\E[z_t u_t'] \E[u_t u_t']^{-1}\E[u_t z_t]}} u_t = \frac{\rho_1 b_0' \Sigma_u^{-1} u_t }{\sqrt{\rho_1^2 b_0' \Sigma_u^{-1} b_0  }}.
\]
It can be shown that under~\ref{ass:strong_exogeneity}, $ \tilde z_t^* = \operatorname{sgn}(\rho_1) \varepsilon_{1t}/ \sigma_1$.\footnote{Under~\ref{ass:strong_exogeneity}, $\E[z_t \bs\varepsilon_{2t}] = \b 0$ and
$\E[\bs\varepsilon_{2t}\mid \varepsilon_{1t}]=\bs 0$, which implies $\E[\varepsilon_{1t}\bs\varepsilon_{2t} ]=\bs 0$, and hence $b_0'\Sigma_u^{-1}u_t
    = \iota_1'\Sigma_\varepsilon^{-1}\varepsilon_t
    = \varepsilon_{1t}/\sigma_1^2$
and
$b_0'\Sigma_u^{-1}b_0
    = \iota_1'\Sigma_\varepsilon^{-1}\iota_1
    = 1/\sigma_1^2.$
}
Because the projection lives in the space spanned by the standardized residuals, we can write it as a function of the parameter $\beta$. This gives rise to the \emph{internal proxy}:
\begin{equation}\label{eq:z_t*(beta)}
    z_t^*(\beta)
:=
\frac{b(\beta)'\Sigma_u^{-1}u_t}{\sqrt{b(\beta)'\Sigma_u^{-1}b(\beta)}}.
\end{equation}
For each $\beta$, the internal proxy $z_t^*(\beta)$ is the unit-variance innovation associated with the candidate structural impact parameter vector $b(\beta)$. 
Evaluated at $\beta_0$, it coincides with the standardized structural shock, $z_t^*(\beta_0)=\varepsilon_{1t}/\sigma_1$. 
Thus, $z_t^*$ is a noise-free proxy for the standardized target shock, and internally synthesized $ z_t^*$, e.g., $h(z_t^{*})$, are referred to as
\emph{internal synthetic proxies} (ISPs). The strong exogeneity test assesses the joint exogeneity of the original and internal synthetic proxy.

The ISP uses the non-Gaussianity of the target shock as identifying information. To see this heuristically, consider a quadratic ISP $z_t^{*}(\beta_0)^2$, then
\begin{align}
    \E\left[(z_t^{*})^2\,r_t(\beta)\right] =
\frac{1}{\sigma_1^2}
\left(
(\beta_0 - \beta) \E[\varepsilon_{1t}^{3}] +
(B_{22}-\beta b_2')\E[\varepsilon_{1t}^2\bs\varepsilon_{2t}]\right)
=
\frac{1}{\sigma_1^2} (\beta_0 - \beta) \E[\varepsilon_{1t}^{3}].
\label{eq:moment_expansion_isp}
\end{align}
Hence, the ISP obtains over-identifying information when the target shock is skewed, but becomes irrelevant when the target shock is Gaussian and the strong exogeneity test loses its power.

The feature that for a valid proxy $z_t$, our proposed ISP is unaffected  by the proxy noise, is the main advantage of our  strong exogeneity test over the strong proxy exogeneity test based on SPs in \cite{BK2024}.
SPs directly synthesize the observed proxy and, consequently, also incorporate its noise, leading to a loss of power when proxies are noisy. In contrast, ISPs first construct an internal, denoised proxy and then synthesize it, thereby avoiding this loss of power.
This distinction is particularly important when proxies are both endogenous and noisy. In such settings, the SP-based test is unlikely to reject exogeneity due to its low power, whereas the ISP-based test retains the ability to detect violations of exogeneity.

The ability to create an internal, denoised proxy relies on the assumption that the SVAR in \eqref{eq:SVAR} is correctly specified.
Under misspecification (e.g., due to  incorrect lag length,  omitted variables, non-fundamentalness, non-linear shock transmission, etc.), the non-target shocks implied by the fitted VAR can remain predictable from the ISPs, even when $z_t$ is indeed strongly exogenous with respect to the shocks in the true model. This is illustrated in the MC results in Section~\ref{sec:MC}.  
Therefore, before employing the suggested exogeneity tests and attributing a rejection to proxy contamination, the researcher should conduct model diagnostics to check whether a given model represents the data generating process of the variables adequately.

\section{Strong exogeneity test}\label{sec:theory}

This section derives the large-sample properties of the strong exogeneity test based on ISP.\footnote{Appendix~\ref{app:SP_asymptotics} provides the large-sample theory for the SP-based strong proxy exogeneity test. \cite{BK2024} develop the SP-based test for static simultaneous equations models (SEMs) or VAR(0), which is no longer valid once the VAR(\(p\)) contains lags ($p>0$). In particular, the test statistic does not have the usual $\chi^2$ limit under the null, unless adaptivity occurs. See also Appendix~\ref{app:adaptivity} on adaptive testing.}

Denote $\pi_0 = \text{vec}([\nu,  A_1,...,\  A_p])$  the vector that collects the reduced-form parameters, define vectors $\b y = \vect{y_1, \ldots, y_T}$ and $Y_t = (1, y_t', \ldots, y_{t-p+1}')'$ of dimension $Tn\times 1$ and $(np+1) \times 1$ respectively, and matrix $Y = (Y_0, \ldots, Y_{T-1})$ of dimension $(np+1) \times T$ , then reduced-form model in \eqref{eq:SVAR} can be compactly written as $y_t = (Y_{t-1}' \otimes I_n) \pi_0 + u_t$. 

Typically, a structural VAR is estimated in two steps. First, the reduced-form parameter $\pi$ is estimated via least squares (LS) by solving the normal equation
\[
\frac1T \sum^T_{t=1}m_t(\pi)=\bs 0, \qquad \text{with } \;   m_t(\pi) = (Y_{t-1} \otimes I_n)u_t(\pi), 
\]
which leads to the LS estimator with the familiar form $\hat \pi_T = [(YY')^{-1}Y\otimes I_n]\b y.$ 
Let $u_t(\hat\pi_T)$ denote the reduced-form residuals implied by $\hat\pi_T$.
In the second step, the simultaneous interaction $u_t = B_0 \varepsilon_t$ is estimated based on the reduced-form residuals $u_t(\hat\pi_T)$.

The projection $z_t^*$ in \eqref{eq:z_t*(beta)} involves unobservable reduced-form residuals $u_t$ and its covariance $\Sigma_u$, both of which depend on parameter $\pi$. To obtain an operational procedure that automatically updates with $(\beta,\pi)$, we define $\omega \coloneqq \vect{\Sigma_u}$ with $\Sigma_u(\omega) \equiv \operatorname{vec}^{-1}(\omega)$, $\Sigma_u(\omega_0) = \Sigma_u$ being positive definite, and write
\begin{equation}
    z_t^*(\beta,\pi,\omega)
:=
\frac{b(\beta)'\Sigma_u(\omega)^{-1}u_t(\pi)}{\dot\sigma_1(\beta,\omega)},
\qquad
\dot\sigma_1^2(\beta,\omega):=b(\beta)'\Sigma_u(\omega)^{-1}b(\beta),
\end{equation}
where $\dot\sigma_1(\beta,\omega):=\sqrt{\dot\sigma_1^2(\beta,\omega)}$ with $\dot\sigma_1(\beta_0,\omega_0) = \sigma_1^{-1}$. 
Consider a sequence of power functions $\{h_i\}_{i = 1}^L$ with $h_i(z) = z^i$, define 
\[
g_{i,t}^*(\beta,\pi,\omega):=h_i \big(z_t^*(\beta,\pi,\omega)\big)\,r_t(\beta,\pi)\in\R^{n-1} .
\]
To exploit information in the external proxy while avoiding noise inflation for its synthetics,
we combine the original linear proxy moment with ISP moments and stack
\[
g_t^{(Z^*)}(\beta,\pi,\omega)
:=
\big(g_{1,t}^*(\beta,\pi)',g_{2,t}^*(\beta,\pi,\omega)',\ldots,g_{L,t}^*(\beta,\pi,\omega)'\big)',
\qquad
g_{1,t}^*(\beta,\pi):=z_t\,r_t(\beta,\pi) 
\]
and $\bar g_T^{(Z^*)}(\beta,\pi,\omega):=T^{-1}\sum_{t=1}^Tg_t^{(Z^*)}(\beta,\pi,\omega)$.

Let $s_t(\pi, \omega) \coloneq \vect{u_t(\pi) \, u_t(\pi)' - \Sigma_u(\omega)}$.
Given the LS estimator $\hat\pi_T$ of the reduced form, let $\hat\omega_T$ solve
$\bar s_T(\hat \pi_T, \omega) \coloneqq \frac1T \sum^T_{t=1} s_t(\hat \pi_T, \omega) = \bs 0$.
The GMM estimator based on ISPs is
\[
\hat\beta_T^*(W)
:=
\argmin_{\beta}\;
\bar g_T^{(Z^*)}(\beta,\hat\pi_T,\hat\omega_T)'\,\hat W_T\,
\bar g_T^{(Z^*)}(\beta,\hat\pi_T,\hat\omega_T).
\]

The \textit{strong exogeneity test} jointly evaluates the original linear proxy moment $g_{1,t}^*(\beta,\pi)$ and the internal synthetic proxy moments $g_{i,t}^*(\beta,\pi,\omega)'$ for $i=2,...,L$ based on the J-statistic
\begin{equation} 
\label{eq:strong_exogeneity_test}
J_T^*(W)
=
T\,\bar g_T^{(Z^*)}(\hat\beta_T^*(W),\hat\pi_T,\hat\omega_T)'\,\hat W_T\,
\bar g_T^{(Z^*)}(\hat\beta_T^*(W),\hat\pi_T,\hat\omega_T).
\end{equation}

The next proposition derives large-sample theory for $\hat\beta_T^*(W)$ and $J_T^*(W)$,
including the optimal weighting matrix that leads to the usual $\chi^2$ limit.

\begin{proposition}\label{thm:beta_ISP}
Under Assumption~\ref{ass:strong_exogeneity}, $\E[g_t^{(Z^*)}(\beta_0,\pi_0,\omega_0)] = \bs 0.$ 
Denote
\[
G_\beta^*:=\E\!\left[\frac{\partial g_t^{(Z^*)}(\beta_0,\pi_0,\omega_0)}{\partial \beta'}\right],\quad
G_\pi^*:=\E\!\left[\frac{\partial g_t^{(Z^*)}(\beta_0,\pi_0,\omega_0)}{\partial \pi'}\right],\quad
G_\omega^*:=\E\!\left[\frac{\partial g_t^{(Z^*)}(\beta_0,\pi_0,\omega_0)}{\partial \omega'}\right],
\]
and $
M \coloneqq  - (\Gamma \otimes I_n).$ Define moment functions
$
g_t^{(Z^*),\perp}
:=
g_t^{(Z^*)}(\beta_0,\pi_0,\omega_0)
-
G_\pi^*M^{-1}m_t(\pi_0)
+
G_\omega^*\,s_t(\pi_0,\omega_0).
$
Then, under Assumptions~\ref{ass:strong_exogeneity},~\ref{Tass:base} and~\ref{Tass:ISP}
$$
\sqrt{T} \left(\hat \beta_T^* (W) - \beta_0 \right) \ \Rightarrow \ N(\bs 0, V_{\beta}^*),
\qquad
V_{\beta}^* = (G_\beta^{*'} W G^{*}_\beta)^{-1} G_\beta^{*'} W  S_{\perp}^* W  G_\beta^* (G_\beta^{*'} W G^{*}_\beta)^{-1},
$$
with
$S_\perp^*:=\E \left[g_t^{(Z^*),\perp}g_t^{(Z^*),\perp\,\prime} \right]$.
Hence, the asymptotically efficient weight is $ S_\perp^{*-1}$ such that $\sqrt{T} \left(\hat \beta^*_T (S_\perp^{*-1}) - \beta_0 \right) \ \Rightarrow \ N \left(\bs 0, (G_\beta^{*'} S_{\perp}^{*-1} G_\beta^{*})^{-1}\right)$. 

Moreover, if $L>1$, under the null $H_0:\ \E[g_t^{(Z^*)}(\beta_0,\pi_0,\omega_0)]=\bs 0$,
\[
 J_T^*\left(S_\perp^{*-1}\right) \Rightarrow \chi^2_{(L-1)(n-1)}.
\]
\end{proposition}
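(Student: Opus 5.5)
First I will verify the population moment condition, and only then derive the asymptotics by a standard three-step GMM argument that accounts for the estimated nuisance parameters $(\pi,\omega)$.

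\textbf{Moment condition.} At $(\beta_0,\pi_0,\omega_0)$ we have $u_t(\pi_0)=u_t=B_0\varepsilon_t$ and $r_t(\beta_0,\pi_0)=(B_{22}-\beta_0 b_2')\bs\varepsilon_{2t}$. As in the footnote of Section~\ref{sec:intuition}, Assumption~\ref{ass:strong_exogeneity} gives $\E[\varepsilon_{1t}\bs\varepsilon_{2t}]=\bs 0$, so $\Sigma_u=B_0\Sigma_\varepsilon B_0'$ with $\Sigma_\varepsilon$ diagonal, and hence $z_t^*(\beta_0,\pi_0,\omega_0)=\varepsilon_{1t}/\sigma_1$. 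Each ISP moment $h_i(\varepsilon_{1t}/\sigma_1)\,r_t$ and the linear moment $f(\varepsilon_{1t},v_t)\,r_t$ are then measurable functions of $(\varepsilon_{1t},v_t)$ multiplied by a linear map of $\bs\varepsilon_{2t}$. By the law of iterated expectations and $\E[\bs\varepsilon_{2t}\mid\varepsilon_{1t},v_t]=\bs 0$, every block has mean zero; the moment conditions in Assumption~\ref{Tass:ISP} guarantee that these expectations exist.

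\textbf{Asymptotic normality.} Consistency of $\hat\beta_T^*(W)$ will come from the identification and compactness conditions in Assumption~\ref{Tass:ISP} together with a uniform LLN. For the LS step, the normal equations give $\sqrt T(\hat\pi_T-\pi_0)=-M^{-1}T^{-1/2}\sum_t m_t(\pi_0)+o_p(1)$, where $M=-(\Gamma\otimes I_n)$ is the Jacobian of $\bar m_T$. Since $\partial s_t/\partial\omega'=-I$ and $\E[\partial s_t/\partial\pi']=\bs 0$ (because $\E[u_t Y_{t-1}']=0$ under the MDS/base assumptions), we get $\sqrt T(\hat\omega_T-\omega_0)=T^{-1/2}\sum_t s_t(\pi_0,\omega_0)+o_p(1)$. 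A mean-value expansion of $\bar g_T^{(Z^*)}(\beta_0,\hat\pi_T,\hat\omega_T)$, with a ULLN applied to the derivatives, then yields
\[
\sqrt T\,\bar g_T^{(Z^*)}(\beta_0,\hat\pi_T,\hat\omega_T)=T^{-1/2}\sum_t g_t^{(Z^*),\perp}+o_p(1).
\]
A martingale-difference CLT (Assumption~\ref{Tass:base}) gives the limit $N(\bs 0,S_\perp^*)$. Here I must check that $g_t^{(Z^*),\perp}$ is an MDS: $g_t$ at the truth is a function of $(\varepsilon_t,v_t)$, and $m_t$ and $s_t$ are MDS given the past. The first-order condition of the GMM problem, linearized in $\beta$, then gives the sandwich form $V_\beta^*$. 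Efficiency with $W=S_\perp^{*-1}$ follows from the standard Hansen argument. The $J$-limit follows from writing $\sqrt T\,\bar g_T$ at $\hat\beta_T^*$ as $(I-G_\beta^*(G_\beta^{*\prime}S^{*-1}_\perp G_\beta^*)^{-1}G_\beta^{*\prime}S^{*-1}_\perp)$ applied to the normal vector above. The resulting quadratic form is $\xi'(I-P)\xi$ with $\xi\sim N(\bs 0,I_{L(n-1)})$ and $P$ a projection of rank $n-1$, since $G_\beta^*$ has full column rank $n-1$ by relevance/Assumption~\ref{Tass:ISP}. This gives $\chi^2_{(L-1)(n-1)}$.

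\textbf{Main obstacle.} The delicate step is the expansion in $(\pi,\omega)$. One must verify differentiability of $z_t^*$ through $\dot\sigma_1(\beta,\omega)$ and $\Sigma_u(\omega)^{-1}$, and obtain the sign conventions that produce exactly the $-G_\pi^*M^{-1}m_t+G_\omega^*s_t$ correction. In addition, $G_\pi^*$ is generally nonzero: it involves terms such as $\E[h_i'(z_t^*)\,\cdot\,Y_{t-1}\,r_t]$ and $\E[h_i(z^*_t)Y_{t-1}'\beta]$, which do not vanish because $h_i(z_t^*)$ correlates with $u_t$. I would compute these terms explicitly only to the extent needed to confirm domination, which is what the ULLN requires.
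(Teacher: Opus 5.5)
Your proposal is correct and follows essentially the same route as the paper. The paper verifies the moment condition by iterated expectations; you condition jointly on $(\varepsilon_{1t},v_t)$, whereas the paper conditions separately on $z_t$ and on $\varepsilon_{1t}$ via Lemma~\ref{lem:ISP_ass}, which is equivalent. The paper then plugs $\sqrt T(\hat\pi_T-\pi_0)=-M^{-1}T^{-1/2}\sum_t m_t(\pi_0)+o_p(1)$ and $\sqrt T(\hat\omega_T-\omega_0)=T^{-1/2}\sum_t s_t(\pi_0,\omega_0)+o_p(1)$ into the linearized first-order condition, giving the sandwich variance and the rank-$(L-1)(n-1)$ idempotent quadratic form for $J_T^*$, exactly as you do.

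As a minor aside, your explanation of why $G_\pi^*\neq\bs 0$ is off. Under the i.i.d.\ assumption the $h_i'$-term vanishes at the truth, because $Y_{t-1}$ is independent of $\varepsilon_t$ and $\E[\varepsilon_{1t}^{i-1}\bs\varepsilon_{2t}]=\bs 0$. The nonzero part is therefore $-\E[(\varepsilon_{1t}/\sigma_1)^i]\,\E[Y_{t-1}']\otimes(\mc J-\beta_0\iota_1')$, which is nonzero because $Y_{t-1}$ contains the constant, not because $h_i(z_t^*)$ correlates with $u_t$. This does not affect your argument.
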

\begin{proof}
    The proof is provided in Appendix~\ref{app:proof}.
\end{proof}
Efficiency and valid inference for the over-identifying restrictions require a weighting matrix that accounts for first-step estimation of the reduced-form parameter $\pi$ and the second-step estimation of the reduced-form covariance $\omega=\vect{\Sigma_u}$.\footnote{
Appendix~\ref{app:adaptivity} provides some adaptivity properties when the researcher employs a single quadratic ISP ($L=2$). We show that a simple set of primitive conditions yields \emph{joint adaptivity} with respect to $(\pi,\omega)$, so that their estimation uncertainty becomes asymptotically negligible for testing $\beta$.
} 
A feasible estimator is obtained by the plug-in covariance of the orthogonalized moment sequence:
\[\hat S_{\perp,T}^* \;:=\; \frac1T\sum_{t=1}^T
\hat g_{t,T}^{(Z^*),\perp}\,\hat g_{t,T}^{(Z^*),\perp\,\prime},
\]
where $\hat g_{t,T}^{(Z^*),\perp}
:=
g_t^{(Z^*)}(\tilde\beta_T^*,\hat\pi_T,\hat\omega_T)
-\hat G_{\pi,T}^*
\hat M_T^{-1}\, m_t(\hat\pi_T)
+\hat G_{\omega,T}^*\, s_t(\hat\pi_T,\hat\omega_T)$ with $\tilde\beta_T^*$ being any preliminary consistent estimator of $\beta$ (e.g., the one-step GMM estimator using identity or naive weighting). Appendix~\ref{app:weighting_est} provides further implementation details, including closed-form expressions for $\hat M_T^{-1}$, $\hat G_{\pi,T}^*$ and $\hat G_{\omega,T}^*$.

The strong exogeneity test in \eqref{eq:strong_exogeneity_test} examines the (joint) validity of the $L(n-1)$ over-identifying moment restrictions $\E\!\left[g_t^{(Z^*)}(\beta_0,\pi_0,\omega_0)\right]=\b 0$. A rejection therefore does not by itself reveal which individual restriction is responsible for the rejection. In particular, it cannot determine whether the moment condition for the original proxy or those for the ISP are violated. 
However, under the strong exogeneity assumption \ref{ass:strong_exogeneity}, the original and ISP moment conditions, $\E\!\left[g_t^{(Z^*)}(\beta_0,\pi_0,\omega_0)\right]=\b 0$. 
Therefore, a rejection of the strong exogeneity $J$-test yields evidence against the strong exogeneity assumption \ref{ass:strong_exogeneity}. 
This directly implies that the impulse responses to the target shock obtained from the original proxy cannot be interpreted causally in the sense of \eqref{eq:causal_irf}, since rejection of strong exogeneity implies that expected value of the non-target value is non-zero in response to the target shock.
Additionally, the strong exogeneity assumption is a sufficient condition which allows to conclude that the proxy is exogenous, $\E[z_t \varepsilon_{2t}]=0$, from the narrative argument establishing that the proxy is an unknown function of the target shock and noise, $z_t=f(\varepsilon_{1t},v_t)$, see Section \ref{sec:intuition}. Consequently, rejecting the strong exogeneity $J$-test yields evidence that the narrative argument may not imply conventional proxy exogeneity.

Finally, the suggested strong exogeneity test requires the researcher to choose the number of synthetics $L$.
Increasing $L$ enlarges the set of implications being tested, but also introduces higher moments, which are known to be difficult to estimate reliably and sensitive to outliers.
Hence, adding higher-order synthetics can introduce bias and cause oversized tests in finite samples.\footnote{Since optimal weighting also involves estimating the variation of these higher-order moments, i.e., $\E[z_t^{2L}]$ and $\E[\varepsilon_{1t}^{2L}]$, the resulting GMM estimator and test statistic may become imprecise for large $L$,  especially in samples of the size commonly available in macroeconomic applications.}  In addition, higher moments also become marginally less informative for a proxy defined on a compact set (e.g., $z_t^{10}$ and $z_t^{12}$ can be highly correlated for small $z_t$). Therefore, the number of synthetic moments $L$ should remain small.
\\
In practice, the decision about $L$ can also be made based on the non-Gaussianity of the target shock, which governs the information content of the ISP. The next section establishes that the identifying strength of a quadratic ISP is governed by $\delta_2=\E[X_t^3]$, whereas a cubic ISP additionally exploits
$\delta_3=\E[X_t^4]-3$, with $X_t=\varepsilon_{1t}/\sigma_1$. 
Hence, sample skewness and excess kurtosis of the estimated target shock 
$\hat z_t^* := z_t^*(\hat \beta, \hat\pi, \hat\omega)$, with $\hat \beta$ being an estimate based on the just-identifying proxy alone, are useful for gauging the additional identification content of $h_2(z_t^*)$ and $h_3(z_t^*)$, respectively.
Higher values of $L$ should be employed only in larger samples and with a clear economic or
reliable distributional justification. It is also worth noting that selecting $L$ based on a first-stage pretest can induce size distortion in standard inference \citep[see][for issues related to pretesting]{Andrews_weak_2019}.

\section{Weak proxy and identification-robust test} 
\label{sec:SP_weak}
The last section assumed $\rank(G_\beta^*)=n-1$ such that $\beta$ is identified. A sufficient condition for this is that the proxy stays relevant, i.e., $\rho_1 \neq 0$. Proxies, especially those obtained from a high-frequency identification (HFI), are often found to be weak. 
This section studies the   behavior of the strong exogeneity test under weak proxies.

Let $\rho_{1,T} = \E_T[z_{t,T}\varepsilon_{1t} ]$ and let $G_{\pi,T}^*$, $G_{\omega,T}^*$ and $G_{\beta,T}$ be defined analogously.
To simplify notation, we suppress the dependence of objects (e.g., $\E_T$, $z_{t,T}$, $G_{\bullet, T}$ etc.) on $T$ whenever no confusion arises while keeping the triangular array setting in mind.\footnote{
Without complicating the notation, the technical Assumption~\ref{Tass:base} and the corresponding identification assumptions~\ref{ass:strong_proxy_exogeneity} and~\ref{ass:strong_exogeneity} will be understood to hold along the triangular-array sequence. This requires that the conditions in \ref{Tass:base} hold uniformly in $T$, e.g., $\sup_T\E_T|z_t|^{2L+\delta}<\infty$, $\Gamma:=\lim_{T\to\infty}\E_T[Y_{t-1}Y_{t-1}']$ exist and is nonsingular, $\E_T[\varepsilon_{2t,T}\mid \varepsilon_{1t,T},v_{t,T}]=0$ holds for all $T$.
}
It is illustrative to follow the weak identification literature \citep[e.g.,][]{StaigerStock1997, StockWright2000} by assuming that the covariance between the original proxy and the target shock is local to zero.
\begin{assumptionp}{W1}\label{ass:weak1}
There exists $\gamma_1 \in \R$, such that $\sqrt{T} \rho_{1,T} \to \gamma_1$ as $T\to \infty$.
\end{assumptionp}
While Assumption~\ref{ass:weak1} provides a standard framework for analyzing proxy SVARs \citep[see, e.g.,][]{MontielOleaStockWatson2021}, it remains elusive for the moment what this means for the synthetic proxies. 
In Appendix~\ref{app:weakSP} we show that weak identification arises for the SP-based strong proxy exogeneity test, if all SP are jointly weak and we provide a weak-identification-robust test under this condition. The higher-order synthetics used in the ISP approach, however, are not functions of $z_t$, but those of the parameters. It turns out that in case the external proxy is weak,  identification of the ISP criterion can be supplied by non-Gaussian features of the target shock, as demonstrated below.\footnote{To make the analysis more transparent and insightful, the discussion in the main text focuses on the scenario where the non-target shock are conditionally homoskedastic, while leaving the general case for Lemma~\ref{lem:Stein} in Appendix~\ref{app:proof}.} 
\begin{proposition}\label{thm:G_beta_weak}
Let $\Sigma_{22}:=\E[\bs\varepsilon_{2t}\bs\varepsilon_{2t}']$.
Suppose $\E[\bs\varepsilon_{2t}\bs\varepsilon_{2t}'\mid \varepsilon_{1t}]=\Sigma_{22}$, let	$X_{t,T} := \frac{\varepsilon_{1t}}{\sigma_1}$,
write $h_i'(x) = \frac{\mathsf d}{\mathsf{d} x} h_i(x) = i x^{i-1}$ for $i =2, 3, \ldots, L$ and define $\bs\delta_T^{(L)}
:=(\delta_{2,T},\ldots,\delta_{L,T})'\in \R^{L-1},$ where
\begin{equation}\label{eq:delta_stein}
	\delta_{i,T} \equiv \delta_{i,T}(X_{t,T}) := \E_T[X_{t,T}h_i(X_{t,T})] - \E_T[h_i'(X_{t,T})],
\end{equation}
provided that the expectations exist.
Then, under Assumption~\ref{ass:strong_exogeneity}, 
\begin{equation}
G_{\beta, T}^* :=
\E_T\!\left[
\frac{\partial g_t^{(Z^*)}(\beta_0,\pi_0,\omega_0)}
{\partial\beta'}
\right] = -\begin{pmatrix}
    \rho_{1, T}\\
	\sigma_1\bs\delta_T^{(L)}
    \end{pmatrix}\otimes I_{n-1}.	
\end{equation}
\end{proposition}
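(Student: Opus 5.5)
The plan is a direct calculation. I will differentiate each block of $g_t^{(Z^*)}$ with respect to $\beta'$ at $(\beta_0,\pi_0,\omega_0)$ and take expectations, using three facts. The first is that $r_t(\beta_0)=(B_{22}-\beta_0 b_2')\bs\varepsilon_{2t}=:C\bs\varepsilon_{2t}$. The second is that $z_t^*(\beta_0,\pi_0,\omega_0)=X_t$ and $\dot\sigma_1(\beta_0,\omega_0)^{-1}=\sigma_1$. The third is that \ref{ass:strong_exogeneity} gives $\E[\bs\varepsilon_{2t}\mid\varepsilon_{1t}]=\bs 0$ by iterated expectations, which kills every term of the form $\E[\bs\varepsilon_{2t}\,\phi(\varepsilon_{1t})]$.

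For the linear proxy block, $\partial (z_t r_t(\beta))/\partial\beta'=-z_t u_{1t} I_{n-1}$. Since $u_{1t}=\varepsilon_{1t}+b_2'\bs\varepsilon_{2t}$ and $\E[z_t\bs\varepsilon_{2t}]=\bs 0$, this has expectation $-\rho_{1,T}I_{n-1}$, which is the first row of the claimed Kronecker product.

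For $i\ge 2$, the product rule splits the derivative into two terms:
\[
\frac{\partial g^*_{i,t}}{\partial\beta'}=h_i'(z_t^*)\, r_t(\beta)\,\frac{\partial z_t^*}{\partial\beta'}-h_i(z_t^*)\,u_{1t} I_{n-1}.
\]
The second term is easy. At $\beta_0$ its expectation is $-\E[h_i(X_t)(\varepsilon_{1t}+b_2'\bs\varepsilon_{2t})]I_{n-1}=-\sigma_1\E[X_t h_i(X_t)]I_{n-1}$.

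For the first term I use $b(\beta)=\iota_1+\mathcal J'\beta$ to write
\[
\frac{\partial z_t^*}{\partial\beta'}=\dot\sigma_1^{-1}\Big(u_t'\Sigma_u^{-1}\mathcal J' - z_t^*\,\dot\sigma_1^{-1}\, b(\beta)'\Sigma_u^{-1}\mathcal J'\Big).
\]
The piece proportional to $z_t^*$ contributes $\E[C\bs\varepsilon_{2t} X_t h_i'(X_t)]=\bs 0$ by mean independence. For the remaining piece I substitute $\Sigma_u^{-1}u_t=B_0'^{-1}\Sigma_\varepsilon^{-1}\varepsilon_t$, which leaves
\[
\sigma_1\, C\,\E[h_i'(X_t)\bs\varepsilon_{2t}\varepsilon_t']\,\Sigma_\varepsilon^{-1}B_0^{-1}\mathcal J'.
\]
The cross block $\E[h_i'(X_t)\bs\varepsilon_{2t}\varepsilon_{1t}]$ vanishes by mean independence. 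The conditional homoskedasticity assumption gives $\E[h_i'(X_t)\bs\varepsilon_{2t}\bs\varepsilon_{2t}']=\Sigma_{22}\E[h_i'(X_t)]$. Then $\Sigma_{22}$ cancels against the lower block of $\Sigma_\varepsilon^{-1}$, and the term reduces to
\[
\sigma_1\E[h_i'(X_t)]\; C\,(B_0^{-1})_{2:n,\cdot}\,\mathcal J'.
\]
This is the place where homoskedasticity is genuinely needed; without it a weighted matrix $\E[h_i'\bs\varepsilon_{2t}\bs\varepsilon_{2t}']$ would survive, which is the general case deferred to Lemma~\ref{lem:Stein}.

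The main obstacle is showing that the matrix $C(B_0^{-1})_{2:n,\cdot}\mathcal J'$ equals $I_{n-1}$, and I would argue it algebraically as follows. Since $r_t(\beta_0)=(-\beta_0\ \ I_{n-1})u_t$ and $r_t(\beta_0)=C\bs\varepsilon_{2t}$, we have $(-\beta_0\ \ I_{n-1})B_0=(\bs 0\ \ C)$. Multiplying on the right by $B_0^{-1}$ gives $C(B_0^{-1})_{2:n,\cdot}=(-\beta_0\ \ I_{n-1})$, and post-multiplying by $\mathcal J'$ yields $I_{n-1}$.

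Collecting the two terms, the $i$th block has expectation
\[
\sigma_1\big(\E[h_i'(X_t)]-\E[X_th_i(X_t)]\big)I_{n-1}=-\sigma_1\delta_{i,T}I_{n-1}.
\]
Stacking over $i$ and the linear block gives $G^*_{\beta,T}=-(\rho_{1,T},\sigma_1\bs\delta_T^{(L)\prime})'\otimes I_{n-1}$. The existence of all expectations follows from the moment conditions in Assumption~\ref{Tass:base}, which hold uniformly in $T$.
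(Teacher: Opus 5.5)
Your proposal is correct and follows essentially the same route as the paper's proof via Lemma~\ref{lem:Stein}. Both arguments apply the product rule at $(\beta_0,\pi_0,\omega_0)$, use mean independence to drop the cross terms, invoke conditional homoskedasticity, and rely on the block identity $(B_0^{-1})_{2:n,\cdot}\,\mathcal J'=C_0^{-1}$, which you obtain neatly from $(-\beta_0\ \ I_{n-1})B_0=(\bs 0\ \ C_0)$. There are only two differences. The paper cancels the $\varepsilon_{1t}$-components of $\partial z_t^*/\partial\beta'$ pathwise, whereas you remove them in expectation. You also treat the external-proxy block $z_t r_t$ explicitly, which Lemma~\ref{lem:Stein}, stated for $h(z_t^*)$, does not literally cover.

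One small correction to your last sentence. Assumption~\ref{Tass:base} only provides $4+\delta$ moments of $\varepsilon_t$ and none for $z_t$. It therefore does not guarantee existence of terms such as $\E[X_{t,T}^{L+1}]$ once $L\ge 4$. You should rely instead on the statement's proviso that the expectations exist.
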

\begin{proof}
	The proposition follows directly from Lemma~\ref{lem:Stein}.
\end{proof}

Proposition~\ref{thm:G_beta_weak} entails that the squared singular values of the Jacobian $G_{\beta, T}^*$ are $ \rho_{1, T}^2 + \sigma_1^2 \|\bs\delta_T^{(L)} \|^2 $. Hence, when the original proxy is weak s.t., the first term vanishes in the limit, the remaining identifying strength is measured by $\|\bs\delta_T^{(L)} \|$. For example, $\delta_{2,T} = \E_T[X_{t,T}^{3}]$ and $\delta_{3,T} = \E_T[X_{t,T}^{4}] - 3$, etc., so that the identifying information via the quadratic and cubic ISPs is provided by skewness and excess kurtosis of the standardized target shock.

Note that $\E[X_{t,T}] = 0$ and $\E[X_{t,T}^2] = 1$. A well-known result from \citet[][Lemma 1]{Stein1981} is that if $X \sim N(0,1)$ is standard Gaussian, then 
\[
\delta_{h}(X) = \E[Xh(X)] - \E[h'(X)] = 0
\]
 for every continuously differentiable function $h$ for which the expectations are well-defined. For the polynomial basis $h_i(x) = x^i$ in particular, we obtain the Gaussian moment recursion formula 
 $\E[X^{i+1}] = i\E[X^{i-1}]$ for $i \geq 2$, i.e., $\delta_{i}(X) = 0$.\footnote{The result follows by integration by parts. In fact, the converse also holds if $\delta_{h} = 0$ holds for every continuously differentiable function for which both $h$ and $h'$ are bounded.} 
 By a contrapositive argument, $\delta_{h} \neq 0$ implies non-Gaussianity. 
 Hence, $\delta_{h}(X)$ measures the deviation from the Stein identity and thereby the deviation from Gaussianity once a broad class of $h$ is considered.

Therefore, a weak proxy does not pose a problem for standard asymptotics described above, as long as the test can draw identification information from non-Gaussianity of the shock. This path has recently been exploited in \cite{Junare2026}.
 However, the weak identification problem does arise once the target shock is ``weakly Gaussian'' in the sense that, as $T$ becomes larger, the distribution of $X_{t,T}$ becomes eventually indistinguishable from a Gaussian distribution with respect to the selected collection of $\bs \delta_T^{(L)}$. 
 	ormally, a local weak identification setting is that there exists some $\bs\gamma^*:=(\gamma_1,\bs\kappa^{(L)\prime})' \in \R^{L}$ such that
\begin{equation}\label{eq:W3_local}
\sqrt T
\begin{pmatrix}
\rho_{1,T}\\
\sigma_1\bs\delta_T^{(L)}
\end{pmatrix}
\longrightarrow
\begin{pmatrix}
\gamma_1\\
\bs \kappa^{(L)}
\end{pmatrix} 
= \bs \gamma^*, 
\qquad
\text{as } \ T \to \infty,
\end{equation}
i.e., the objective function is asymptotically flat near the true value $\beta_0$. 
Despite being insightful, the local condition in \eqref{eq:W3_local} does not control the population ISP moment function away from $\beta_0$. 
Let 
$\bs \psi_T^*(\beta) := \E_T\!\left[ g_t^{(Z^*)}(\beta,\pi_0,\omega_0) \right]$.
We start by making a global weak identification assumption and will show that it implies local weak identification. 
\begin{assumptionp}{W2}\label{ass:weak3}
For every $T \in \N$, $\bs\psi_T^*$ is continuously differentiable and there exists a continuous matrix-valued function
$\bs G^*:\mc B\to\R^{L(n-1)\times (n-1)}$, s.t.,
\begin{equation}\label{eq:W3_derivative_uniform}
\sup_{\beta\in\mc B}
\left\|
\sqrt T\,
\frac{\partial \bs \psi_T^*(\beta)}{\partial\beta'}
-\bs G^*(\beta)
\right\|
\longrightarrow0.
\end{equation}
\end{assumptionp}

\begin{lemma}\label{lem:weak3_glb}
Under the conditions of
Proposition~\ref{thm:G_beta_weak}, \ref{ass:strong_exogeneity}, \ref{ass:weak3} and
\ref{Tass:ISPweak}(i),(iii), \eqref{eq:W3_local} holds for some
\(\bs\gamma^*\in\R^L\) and 
there exists a function $\bs\Psi^*:\mc B\to\R^{L(n-1)}$ that is continuously differentiable on \(\operatorname{int}(\mc B)\),
s.t., $\bs\Psi^*(\beta_0)=\bs 0$ and 
\begin{equation}\label{eq:ISP_global_drift}
\sup_{\beta\in\mc B}
\left\|
\sqrt T\,\bs\psi_T^*(\beta)-\bs\Psi^*(\beta)
\right\|
\longrightarrow0.
\end{equation}
Moreover, 
$\frac{\partial\bs \Psi^*(\beta)}{\partial\beta'}
=\bs G^*(\beta)$, $\forall \beta\in\operatorname{int}(\mc B)$ and 
$\frac{\partial\bs \Psi^*(\beta_0)}{\partial\beta'}
=\bs G^*(\beta_0)
=-\bs \gamma^*\otimes I_{n-1}.$
\end{lemma}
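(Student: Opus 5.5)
The plan is to integrate the uniform derivative convergence in \ref{ass:weak3}, anchored at $\beta_0$ where the moments vanish, and then read off the local Jacobian from Proposition~\ref{thm:G_beta_weak}.

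\emph{Anchor at $\beta_0$.} Under \ref{ass:strong_exogeneity}, Proposition~\ref{thm:beta_ISP} (applied along the triangular array) gives $\bs\psi_T^*(\beta_0)=\E_T[g_t^{(Z^*)}(\beta_0,\pi_0,\omega_0)]=\bs 0$ for every $T$. Hence $\sqrt T\,\bs\psi_T^*(\beta_0)=\bs 0$ for all $T$.

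\emph{Local condition \eqref{eq:W3_local}.} By definition, $\partial\bs\psi_T^*(\beta_0)/\partial\beta'=G_{\beta,T}^*$, where I use \ref{Tass:ISPweak}(i),(iii) to interchange differentiation and expectation and to guarantee the needed moments exist. Proposition~\ref{thm:G_beta_weak} then gives
\[
\sqrt T\,G^*_{\beta,T}=-\sqrt T\begin{pmatrix}\rho_{1,T}\\ \sigma_1\bs\delta_T^{(L)}\end{pmatrix}\otimes I_{n-1}.
\]
Evaluating \eqref{eq:W3_derivative_uniform} at $\beta=\beta_0$, the left side converges to $\bs G^*(\beta_0)$. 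Since the map $a\mapsto -a\otimes I_{n-1}$ is linear and injective, and its image is closed, the vector $\sqrt T(\rho_{1,T},\sigma_1\bs\delta_T^{(L)\prime})'$ converges to some $\bs\gamma^*\in\R^L$ with $\bs G^*(\beta_0)=-\bs\gamma^*\otimes I_{n-1}$. This is \eqref{eq:W3_local}.

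\emph{Global drift.} Define $\bs\Psi^*(\beta):=\int_0^1\bs G^*(\beta_0+s(\beta-\beta_0))(\beta-\beta_0)\,ds$. The mean value theorem in integral form, applied to the $C^1$ function $\bs\psi_T^*$ with $\bs\psi_T^*(\beta_0)=\bs 0$, gives
\[
\sqrt T\,\bs\psi_T^*(\beta)-\bs\Psi^*(\beta)=\int_0^1\Big[\sqrt T\,\partial\bs\psi_T^*/\partial\beta'-\bs G^*\Big](\beta_0+s(\beta-\beta_0))\,ds\,(\beta-\beta_0).
\]
Its norm is bounded by $\sup_{\mc B}\|\cdot\|\cdot\operatorname{diam}(\mc B)$, which tends to zero. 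This requires segments from $\beta_0$ to lie in $\mc B$. I expect the main obstacle to be this geometric point: I will invoke \ref{Tass:ISPweak}(iii) as providing compactness and convexity (or star-shapedness about $\beta_0$) of $\mc B$. If only compactness is available, I would instead integrate along piecewise-linear paths in a connected $\mc B$ of bounded length, which still gives uniform convergence. Clearly $\bs\Psi^*(\beta_0)=\bs 0$.

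\emph{Differentiability of the limit.} To show $\partial\bs\Psi^*/\partial\beta'=\bs G^*$ on $\operatorname{int}(\mc B)$, I avoid differentiating the integral directly. Fix $\beta_1\in\operatorname{int}(\mc B)$ and a small ball around it. For $\beta$ in the ball,
\[
\sqrt T\,\bs\psi_T^*(\beta)-\sqrt T\,\bs\psi_T^*(\beta_1)=\int_0^1\sqrt T\,\partial\bs\psi_T^*(\beta_1+s(\beta-\beta_1))/\partial\beta'\,ds\,(\beta-\beta_1).
\]
Letting $T\to\infty$ and using both uniform convergences gives $\bs\Psi^*(\beta)-\bs\Psi^*(\beta_1)=\int_0^1\bs G^*(\beta_1+s(\beta-\beta_1))\,ds\,(\beta-\beta_1)$. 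Continuity of $\bs G^*$ then yields Fréchet differentiability at $\beta_1$ with derivative $\bs G^*(\beta_1)$. Since $\bs G^*$ is continuous, $\bs\Psi^*$ is $C^1$ on $\operatorname{int}(\mc B)$. Combined with the local step, this gives $\partial\bs\Psi^*(\beta_0)/\partial\beta'=\bs G^*(\beta_0)=-\bs\gamma^*\otimes I_{n-1}$, assuming $\beta_0$ is interior.
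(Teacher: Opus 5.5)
Your proposal is correct and follows the paper's proof essentially step by step: the same anchoring $\bs\psi_T^*(\beta_0)=\bs 0$, the same definition of $\bs\Psi^*$ as the segment integral of $\bs G^*$ with the $\sup\|\beta-\beta_0\|$ bound, the same pass-to-the-limit argument for $\partial\bs\Psi^*/\partial\beta'=\bs G^*$, and the same injectivity/closed-image argument for $\bs\gamma^*$. The only slip is that the convexity of $\mc B$ comes from \ref{Tass:ISPweak}(i), not (iii), so the lemma's hypotheses already give it and your piecewise-linear fallback is unnecessary.
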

\begin{proof}
	The proof is provided in Appendix~\ref{app:proof}.
\end{proof}
In the local identification condition, $\bs \gamma^*$ describes the local slope of the population function at the true value
and $\bs\Psi^*(\beta)$ in Lemma~\ref{lem:weak3_glb} describes the population drift at any $\beta$.\footnote{
Note that for the SP-based test discussed in Appendix~\ref{app:weakSP}, 
the population drift term is $(\bs\gamma\otimes I_{n-1})(\beta_0-\beta)$ under weak identification, which is linear and determined by the slope at $\beta_0$, i.e., $\bs\gamma$. This is not the case for the ISP-based tests as $\bs\Psi^*(\beta)$ is generally  nonlinear and is not determined by $\bs \gamma^*$ alone.
} 
We define the stochastic process
\begin{align*}
\mc G_T^*(\beta)
:=
\frac1{\sqrt T}\sum_{t=1}^T
\Big[
g_t^{(Z^*)}(\beta,\pi_0,\omega_0)
-\bs\psi_T^*(\beta)
-G_{\pi}^*(\beta)M^{-1}m_t(\pi_0)
+G_{\omega}^*(\beta)s_t(\pi_0,\omega_0)
\Big],
\end{align*}
i.e., a random $\R^{L(n-1)}$-valued function continuous on $\mc B$. 
Suppose $\mc G_T^*(\beta)\Rightarrow\mc G^*(\beta)$, where \(\mc G^*(\beta)\) is a zero-mean Gaussian stochastic process with covariance function 
$
\Omega^*(\beta_1,\beta_2) := \E\!\left[ \mc G^*(\beta_1)\mc G^*(\beta_2)' \right],$ 
and $\Rightarrow$ denotes
weak convergence of random functions on $\mc B$. We use $\Omega^*(\beta)$ as a shorthand notation for $ \Omega^*(\beta, \beta)$.
Define the (rescaled) sample objective function for a continuous updating estimator
\begin{equation}\label{eq:Q_T_ISP_weak}
Q_T^*(\beta)
:=
T\,
\bar g_T^{(Z^*)}(\beta,\hat\pi_T,\hat\omega_T)'
\hat\Omega_T^*(\beta)^{-1}
\bar g_T^{(Z^*)}(\beta,\hat\pi_T,\hat\omega_T).
\end{equation}
The following proposition provides the weak limit of the objective function. 
\begin{proposition}\label{thm:weak_Q_ISP}
Under the conditions of
Proposition~\ref{thm:G_beta_weak}, \ref{ass:strong_exogeneity}, \ref{ass:weak3},
\ref{Tass:base}, and \ref{Tass:ISPweak}, \begin{equation}\label{eq:Q_ISP_process_limit}
Q_T^*(\beta)
\Rightarrow
\left[
\mc G^*(\beta)+\bs\Psi^*(\beta)
\right]'
\Omega^*(\beta)^{-1}
\left[
\mc G^*(\beta)+\bs\Psi^*(\beta)
\right]
=:
Q_\infty^*(\beta)
\end{equation}
on $\mc B$ with respect to the sup norm, and hence, at the true value, $Q_T^*(\beta_0)\Rightarrow\chi^2_{L(n-1)}.$ Let $\hat\beta_{CUE,T}^*
:=\argmin_{\beta\in\mc B}Q_T^*(\beta)$ and denote $J_{CUE,T}^*$ the corresponding $J$-statistic, then 
$\hat\beta_{CUE,T}^*\Rightarrow
\argmin_{\beta\in\mc B}Q_\infty^*(\beta),$ and
$J_{CUE,T}^*\Rightarrow
\inf_{\beta\in\mc B}Q_\infty^*(\beta).$
\end{proposition}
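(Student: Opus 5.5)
The plan is to reduce $\sqrt T\,\bar g_T^{(Z^*)}(\beta,\hat\pi_T,\hat\omega_T)$ to $\mc G_T^*(\beta)+\sqrt T\,\bs\psi_T^*(\beta)$ uniformly in $\beta\in\mc B$. After that, the limit of $Q_T^*$ follows from continuous mapping, and the limits of the CUE and $J$-statistic follow from an argmin/inf continuous-mapping argument.

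\textbf{Step 1: uniform expansion in the first-step estimators.} I would take a mean-value expansion of $\bar g_T^{(Z^*)}(\beta,\pi,\omega)$ around $(\pi_0,\omega_0)$:
\[
\sqrt T\,\bar g_T^{(Z^*)}(\beta,\hat\pi_T,\hat\omega_T)=\sqrt T\,\bar g_T^{(Z^*)}(\beta,\pi_0,\omega_0)+\hat G_{\pi}(\beta)\sqrt T(\hat\pi_T-\pi_0)+\hat G_{\omega}(\beta)\sqrt T(\hat\omega_T-\omega_0).
\]
The two first-step expansions are the standard ones:
\[
\sqrt T(\hat\pi_T-\pi_0)=-M^{-1}T^{-1/2}\sum_t m_t(\pi_0)+o_p(1),\qquad
\sqrt T(\hat\omega_T-\omega_0)=T^{-1/2}\sum_t s_t(\pi_0,\omega_0)+o_p(1).
\]
The second holds because $\partial s_t/\partial\omega'=-I$ and $\E[\partial s_t/\partial \pi']=\bs 0$ by the MDS property. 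I would then use a ULLN from \ref{Tass:ISPweak} to replace the sample derivative matrices by $G_\pi^*(\beta)$ and $G_\omega^*(\beta)$ uniformly on $\mc B$. The moments $g_t^{(Z^*)}$ are polynomial in $z_t^*$ and linear in $r_t$, so the moment bounds in \ref{Tass:base} (e.g.\ $\sup_T\E_T|z_t|^{2L+\delta}<\infty$) together with compactness of $\mc B$ and smoothness in $(\pi,\omega)$ give a uniform Lipschitz envelope. This step yields
\[
\sqrt T\,\bar g_T^{(Z^*)}(\beta,\hat\pi_T,\hat\omega_T)=\mc G_T^*(\beta)+\sqrt T\,\bs\psi_T^*(\beta)+o_p(1)
\]
uniformly in $\beta$.

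\textbf{Step 2: the centred process and the drift.} The centred term converges by hypothesis, $\mc G_T^*\Rightarrow\mc G^*$. If the paper's assumption \ref{Tass:ISPweak} only supplies finite-dimensional convergence plus stochastic equicontinuity, I would verify these through the same Lipschitz envelope, since $g_t^{(Z^*)}$ is smooth in $\beta$ with square-integrable derivative. For the drift, Lemma~\ref{lem:weak3_glb} gives $\sqrt T\,\bs\psi_T^*\to\bs\Psi^*$ uniformly. For the weighting matrix, I need $\hat\Omega_T^*(\beta)\to\Omega^*(\beta)$ uniformly in probability, with $\Omega^*(\beta)$ uniformly positive definite on $\mc B$, which I expect to be part of \ref{Tass:ISPweak}. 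The drift $\bs\psi_T^*(\beta)=O(T^{-1/2})$, so the centred and uncentred second moments coincide asymptotically, and the plug-ins $\hat\pi_T,\hat\omega_T$ are handled by the same ULLN. Since the map $(f,A)\mapsto f'A^{-1}f$ is continuous in sup norm on this set, continuous mapping gives \eqref{eq:Q_ISP_process_limit}.

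\textbf{Step 3: the true value, the CUE, and the $J$-statistic.} At $\beta_0$, $\bs\Psi^*(\beta_0)=\bs 0$ and $\mc G^*(\beta_0)\sim N(\bs 0,\Omega^*(\beta_0))$, so $Q_\infty^*(\beta_0)\sim\chi^2_{L(n-1)}$. For the CUE, I would apply the argmin continuous mapping theorem (e.g.\ van der Vaart--Wellner Thm.~3.2.2) on compact $\mc B$. This requires sample paths of $Q_\infty^*$ to have an a.s.\ unique minimiser, which I would assume or inherit from \ref{Tass:ISPweak}; continuity of sample paths follows from continuity of $\mc G^*$, $\bs\Psi^*$ and $\Omega^*$. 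For the $J$-statistic, $J_{CUE,T}^*=\inf_{\mc B}Q_T^*$, and the functional $f\mapsto\inf_{\mc B}f$ is sup-norm continuous, so $J_{CUE,T}^*\Rightarrow\inf_{\mc B}Q_\infty^*$ directly.

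\textbf{Main obstacle.} I expect Step 1 to be the hardest part. The derivatives $G_\pi^*(\beta)$ and $G_\omega^*(\beta)$ must be controlled uniformly over $\beta$, including away from $\beta_0$. There the ISP moments do not vanish, and $z_t^*(\beta)$ involves the normalisation $\dot\sigma_1(\beta,\omega)$, which must stay bounded away from zero on $\mc B$. I would first check that positive definiteness of $\Sigma_u$ and $b(\beta)\neq 0$ give $\inf_{\mc B}\dot\sigma_1>0$, which makes $z_t^*$ smooth in $(\beta,\pi,\omega)$ with polynomially bounded derivatives.
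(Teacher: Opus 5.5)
Your proposal is correct and follows the paper's own route. The paper first uses \ref{Tass:ISPweak}(iii) to expand uniformly in $(\hat\pi_T,\hat\omega_T)$, obtaining $\sqrt T\,\bar g_T^{(Z^*)}(\beta,\hat\pi_T,\hat\omega_T)=\mc G_T^*(\beta)+\sqrt T\,\bs\psi_T^*(\beta)+o_p(1)$, then applies \ref{Tass:ISPweak}(ii), Lemma~\ref{lem:weak3_glb}, \ref{Tass:ISPweak}(iv) and continuous mapping, with the CUE and $J$ limits handled as in Stock and Wright.

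Your caveat that the argmin limit needs an a.s.\ unique minimiser of $Q_\infty^*$ on $\mc B$ is a condition the paper leaves implicit. The FCLT and the uniform consistency of the weight, which you were unsure about, are assumed directly in \ref{Tass:ISPweak}(ii) and (iv).
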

\begin{proof}
	The proof is provided in Appendix~\ref{app:proof}.
\end{proof}
Hence, the GMM estimator need not be consistent and the limiting distribution of  $J_{CUE,T}^*$ is nonstandard. Despite this, $Q_T^*(\beta_0)\Rightarrow\chi^2_{L(n-1)}$ still allows us to construct a test statistic $\varphi_{T}^*(\alpha)$ based on the corresponding confidence set.
Specifically, for a nominal level $\alpha$, define the set $\mc C_T^*(1-\alpha)$ and test statistic $\varphi_{T}^*(\alpha)$:
\begin{align}\label{eq:CT_varphi_star}
\mc C_T^*(1-\alpha)
    :=
    \left\{
    \beta\in\mc B:
Q_T^*(\beta) \leq \chi^2_{L(n-1),1-\alpha}
    \right\},
    \qquad
\varphi_{T}^*(\alpha)
    :=
\mathbf 1 \left\{ \mc C_T^*(1-\alpha)=\varnothing \right\},
\end{align}
where $\mathbf 1\{.\}$ is the indicator function.
The null hypothesis of strong exogeneity is rejected if no point in
\(\mc B\) satisfies the moment restrictions at the $\alpha$-level critical value so that 
$\varphi_{T}^*(\alpha) = 1$. Notably, the set $\mc C_T^*(1-\alpha)$ can be used to construct the so-called `$S$-sets' in \cite{StockWright2000}, which are asymptotically the nonlinear GMM analogue of the \cite{AndersonRubin1949} confidence sets and have been recently suggested by \cite{MontielOleaStockWatson2021} for constructing weak proxy robust pointwise confidence intervals for impulse responses identified in proxy SVARs.
The robust strong exogeneity test is consistent against fixed alternatives and has size that is asymptotically bounded by $\alpha$.
\begin{proposition}\label{thm:weak_test_ISP}
Under the assumptions of Proposition~\ref{thm:weak_Q_ISP}, under $H_0: \ \E [ g_t^{(Z^*)}(\beta_0,\pi_0,\omega_0)] =\bs0,$
\[
\lim_{T\to\infty} \Pr \left(\beta_0\in \mc C_T^*(1-\alpha) \right)  = 1-\alpha,
\qquad
\limsup_{T\to\infty} \Pr \left(\varphi_{T}^*(\alpha)=1 \right)
        \le \alpha.
\]
Under the fixed $H_1: \liminf_{T \to \infty} \inf_{\beta}\,\big\|\E[g_t^{(Z^*)}(\beta,\pi_0,\omega_0)] \big\| > 0$, 
$\Pr \left(\varphi_{T}^*(\alpha)=1 \right)\to 1.$
\end{proposition}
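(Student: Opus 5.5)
Under $H_0$ the plan is to read both size statements directly off Proposition~\ref{thm:weak_Q_ISP}. By construction, $\beta_0\in\mc C_T^*(1-\alpha)$ if and only if $Q_T^*(\beta_0)\le\chi^2_{L(n-1),1-\alpha}$. Proposition~\ref{thm:weak_Q_ISP} gives $Q_T^*(\beta_0)\Rightarrow\chi^2_{L(n-1)}$. This limit holds because $\bs\Psi^*(\beta_0)=\bs 0$ by Lemma~\ref{lem:weak3_glb} and $\Omega^*(\beta_0)^{-1/2}\mc G^*(\beta_0)\sim N(\bs 0,I_{L(n-1)})$. The chi-square law is continuous, so the critical value is a continuity point. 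The portmanteau theorem then gives $\Pr(Q_T^*(\beta_0)\le\chi^2_{L(n-1),1-\alpha})\to1-\alpha$, which is the first claim. For the second, I would use the event inclusion $\{\varphi_T^*(\alpha)=1\}=\{\mc C_T^*(1-\alpha)=\varnothing\}\subseteq\{\beta_0\notin\mc C_T^*(1-\alpha)\}$. It follows that $\limsup_T\Pr(\varphi_T^*(\alpha)=1)\le 1-\lim_T\Pr(\beta_0\in\mc C_T^*)=\alpha$. This argument also shows why the test is conservative.

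Under the fixed alternative the task is to show $\inf_{\beta\in\mc B}Q_T^*(\beta)\to\infty$ in probability. The weak-identification drift is no longer $O(T^{-1/2})$ here, so Proposition~\ref{thm:weak_Q_ISP} does not apply directly. Instead, I would prove a uniform law of large numbers,
\[
\sup_{\beta\in\mc B}\big\|\bar g_T^{(Z^*)}(\beta,\hat\pi_T,\hat\omega_T)-\bs\psi_T^*(\beta)\big\|=o_p(1).
\]
I would obtain it in two steps. First, $\sup_\beta\|\bar g_T^{(Z^*)}(\beta,\pi_0,\omega_0)-\bs\psi_T^*(\beta)\|=O_p(T^{-1/2})$; this follows from the tightness of $\mc G_T^*$, or alternatively from a ULLN under the moment bounds in \ref{Tass:base}, together with compactness of $\mc B$. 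Second, $\hat\pi_T\to\pi_0$ and $\hat\omega_T\to\omega_0$, and $g_t^{(Z^*)}$ is smooth in $(\pi,\omega)$ with derivatives dominated by an integrable envelope. Here I would use that $\dot\sigma_1(\beta,\omega)$ is bounded away from zero uniformly on $\mc B$ near $\omega_0$, because $\Sigma_u$ is positive definite and the first entry of $b(\beta)$ equals $1$. A mean-value expansion then makes the plug-in error uniformly $o_p(1)$.

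Next I would bound the weighting matrix from below. The estimator $\hat\Omega_T^*(\beta)$ converges uniformly in probability to $\Omega^*(\beta)$, which I assume (via \ref{Tass:ISPweak}) has eigenvalues bounded above by some $\bar\lambda<\infty$ uniformly on $\mc B$. Hence $\hat\Omega_T^*(\beta)^{-1}\ge(2\bar\lambda)^{-1}I$ with probability approaching one. This yields
\[
\inf_\beta Q_T^*(\beta)\ \ge\ \frac{T}{2\bar\lambda}\Big(\inf_\beta\|\bs\psi_T^*(\beta)\|-o_p(1)\Big)^2 .
\]
By $H_1$ the bracket is eventually bounded below by a positive constant with probability approaching one, so the right-hand side diverges. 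Consequently $\Pr(\inf_\beta Q_T^*(\beta)>\chi^2_{L(n-1),1-\alpha})\to1$, that is, $\mc C_T^*=\varnothing$ and $\varphi_T^*(\alpha)=1$ with probability tending to one.

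I expect the main obstacle to be this alternative step, specifically the uniform control of the weighting matrix. Under a fixed alternative the data-generating process lies outside the local sequence, so one must verify that $\hat\Omega_T^*(\beta)$ stays bounded uniformly in $\beta$ rather than exploding. If it exploded, $Q_T^*$ could remain small. I would handle this by using the $2L+\delta$ moment bounds to bound $\sup_\beta\|\hat\Omega_T^*(\beta)\|=O_p(1)$ directly. This estimate requires only an upper bound on the eigenvalues, not invertibility in the limit. With it, the argument goes through even though the process limit in Proposition~\ref{thm:weak_Q_ISP} is unavailable.
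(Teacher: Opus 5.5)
Your proposal is correct and follows essentially the same route as the paper, which proves the result by reusing the argument for Proposition~\ref{thm:weak_test}: $Q_T^*(\beta_0)\Rightarrow\chi^2_{L(n-1)}$ gives coverage, the inclusion $\{\mc C_T^*(1-\alpha)=\varnothing\}\subseteq\{Q_T^*(\beta_0)>\chi^2_{L(n-1),1-\alpha}\}$ gives size control, and a uniform LLN together with a uniform lower bound on $\lambda_{\min}(\hat\Omega_T^*(\beta)^{-1})$ gives consistency. You justify that last bound through an upper bound on the eigenvalues of $\Omega^*(\beta)$, which follows from continuity on the compact $\mc B$; this is the right direction, whereas the paper loosely attributes the bound to uniform positive definiteness.
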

The proposition follows by the same arguments as in the proof of Proposition~\ref{thm:weak_test} in Appendix~\ref{app:weakSP}. The test is asymptotically conservative.
In particular, when $\bs\Psi^*(\beta)$ is small over a substantial part of $\mc B$, the set
\(\mc C_T^*(1-\alpha)\) can be large. 
In practice, the test is implemented by solving the continuous minimization
problem $\inf_{\beta\in \mc B}Q_T^*(\beta)$
The null is rejected if the computed
minimum exceeds $\chi^2_{L(n-1),1-\alpha}$. One can resort to a global
optimization over the compact set \(\mc B\) using multiple starting values.  Rejection requires numerical evidence that the global minimum is above the critical value, so if a value $\tilde \beta$ with
$Q_T^*(\tilde\beta)\le\chi^2_{L(n-1),1-\alpha}$ is found, the set is nonempty and the
test does not reject.

\section{Monte Carlo evidence}\label{sec:MC}

This section compares the finite-sample properties of the ISP-based strong exogeneity test to those of the SP-based strong proxy exogeneity test of \cite{BK2024}. We first consider a correctly specified SVAR and study how size and power vary
with the degree of proxy contamination, proxy relevance, non-Gaussianity, and the sample size. We then
examine the identification-robust tests developed in Section~\ref{sec:SP_weak}. We show that ISP is generally more powerful than SP when proxies have low to moderate relevance.
Finally, we depart
from the correct model specification through an omitted nonlinear transmission and an incorrect lag order
to gauge the test performance under potential VAR misspecification.

\subsection{Baseline design}

We consider a three-variable SVAR(4)
\begin{equation}
    y_t=\sum_{i=1}^{4}A_i y_{t-i}+B_0\varepsilon_t,
    \qquad
    B_0=
    \begin{pmatrix}
        1&0&0\\
        1&1&0\\
        -0.5&0&1
    \end{pmatrix},
    \qquad
    \varepsilon_t=(\varepsilon_{1t},\varepsilon_{2t},\varepsilon_{3t})',
\end{equation}
where the coefficient matrices $\{A_i\}_{i=1}^4$ generate a stable process with
moderate persistence and declining lag coefficients. 
The structural shocks are mutually independent with zero mean, unit variance, skewness $-1$ and excess kurtosis $2$ in the baseline setting. 
The first structural shock is the target shock.

Let $v_t\sim N(0,1)$ be independent of $\varepsilon_t$ and define the proxy
\[
z_t = \rho\varepsilon_{1t} + \kappa\rho\varepsilon_{2t} + \sqrt{1-\rho^2(1+\kappa^2)}\,v_t.
\]
Note that $\var(z_t) = 1$ such that the correlations between the proxy and the shocks are 
$\operatorname{Corr}(z_t,\varepsilon_{1t})=\rho$ and $\operatorname{Corr}(z_t,\varepsilon_{2t})=\kappa\rho$. Hence, $\rho$ directly measures the correlation of the proxy with the target
shock, whereas $\kappa$ is the ratio of its non-target and target correlations. The value
$\kappa=0$ corresponds to an exogenous proxy. Negative values of $\kappa$ introduce
contamination by the second structural shock, with larger values of $|\kappa|$ representing
more severe violations of proxy exogeneity. 
We consider sample sizes $T\in\{250,500,1000\}$, proxy correlations
$\rho\in\{0.10,0.20,0.35,0.50,0.60\}$ and $\kappa\in\{0,-0.1,\ldots,-1\}$.\footnote{Since $|\kappa|\leq 1$ throughout, we assume that the researcher is able to obtain or construct a somewhat reasonable proxy that is at least as strongly correlated with the target shock as with the contaminating shock.}
The design includes proxy correlations between $0.20$ and $0.35$, which cover
the range frequently encountered in HFI applications, as well as a particularly weak calibration with $\rho=0.10$.
To relate the correlation calibration to the relevance diagnostics commonly reported in empirical applications, for sample size $T=500$, the five values of population correlation $\rho$ correspond to first-stage $F$ statistics 
$F_T(\rho)\in\{5.0,20.8,69.5,166.0,280.1\}$. 
In the baseline experiments, the researcher estimates a correctly specified reduced-form
VAR(4), retrieves the reduced-form residuals and constructs the SPs and ISPs up to
$L=3$. The standard inference uses a two-step feasible optimal-weight GMM, where a preliminary
estimate is used to estimate the covariance matrix of the nuisance-adjusted influence function $g_t^{(Z), \perp}$ and $g_t^{(Z^*), \perp}$, and its inverse is used to obtain the final GMM estimate and the associated $J$-statistic. The test decisions are formed based on the standard $\chi^2_{(L-1)(n-1)}$ limit at the nominal level $\alpha=0.1$. Each experiment is performed with $1{,}000$ replications.

\begin{figure}[ht]
    \centering
    \caption{Finite-sample size and power of the standard SP and ISP exogeneity tests}
    \label{fig:mc-main-standard-rho-kappa}
    \includegraphics[width=\textwidth]{\detokenize{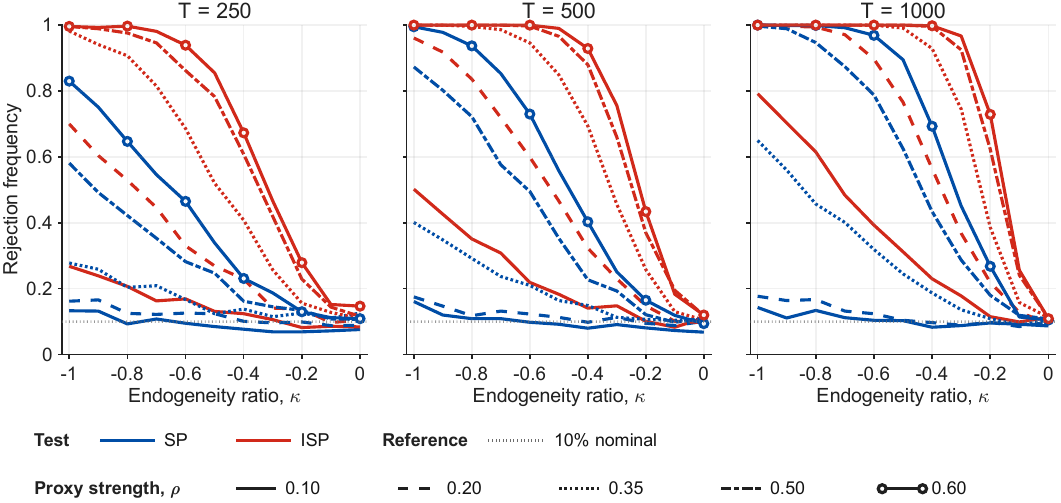}}
    \vspace{0.35em}
    \begin{minipage}{0.98\textwidth}
        \footnotesize
        \textit{Notes:} The DGP is the baseline SVAR(4) with the baseline proxy.
   \end{minipage}
\end{figure}

Figure~\ref{fig:mc-main-standard-rho-kappa} reports the rejection frequencies of the standard SP- and
ISP-based tests. Two findings stand out. First, under exogeneity $\kappa=0$, both tests
have rejection frequencies close to the nominal level across the different sample sizes and
proxy correlations (incl.\ designs with very low $\rho\in\{0.1,0.2\}$). This confirms that feasible estimation of the nuisance-adjusted optimal weighting accounts adequately for the first-step estimation uncertainty so that the usual $\chi^2$ limit of the $J$-statistic is restored. Second, under proxy endogeneity, rejection frequencies generally increase with $|\kappa|$ (stronger evidence against the null in the data), $\rho$ and $T$ (higher power of the tests), while the proposed ISP-based test has more power than the SP-based test to detect the null violations in each configuration.
The power difference is especially pronounced for low to moderately relevant proxies, e.g., $\rho\in\{0.2,0.35,0.5\}$. At the weakest calibration $\rho=0.10$, both procedures eventually lose power, although the proposed test still has nearly three times (more than five times) the chance to correctly reject the null with $T=500$ ($T=1,000$) compared to the SP test at $\kappa = -1$.

\subsection{Non-Gaussianity and testing power}

\begin{figure}[t]
    \centering
    \caption{Finite-sample power and non-Gaussianity}
    \label{fig:mc-nongaussianity-heatmaps}
    \includegraphics[width=\textwidth]{\detokenize{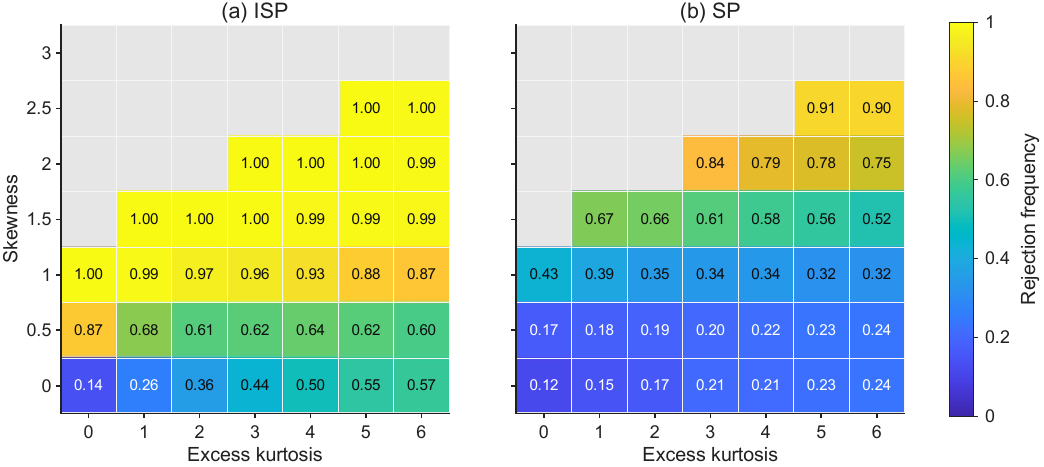}}
    \vspace{0.35em}
    \begin{minipage}{0.98\textwidth}
        \footnotesize
        \textit{Notes:} The panels report rejection frequencies for $T=500$, $\rho=0.50$, and $\kappa=-0.50$. The DGP is the baseline SVAR(4). In a given cell, all three structural shocks are mutually independent and have mean zero and variance one, with the skewness displayed on the vertical axis, and the excess kurtosis displayed on the horizontal axis. 
    \end{minipage}
\end{figure}

As described in the previous section, the power gains come from exploiting the non-Gaussianity of the shocks.
Figure~\ref{fig:mc-nongaussianity-heatmaps} presents systematically how the distribution of
the structural shocks affects the power properties of the tests. For this, we set $T=500$, $\rho=0.50$, and $\kappa=-0.50$ 
(i.e., $z_t = 0.50\varepsilon_{1t} - 0.25\varepsilon_{2t} + \sqrt{0.6875}v_t$),
and vary the skewness and excess kurtosis of the shocks. Moment combinations that do not correspond to a valid distribution are omitted from the figure.

The ISP-based test is more powerful than the SP-based test in every configuration.
For symmetrically distributed shocks, increasing excess kurtosis raises ISP rejection from approximately $0.14$ to $0.57$,
whereas SP rejection increases only from approximately $0.12$ to $0.24$. In this case,
the quadratic ISP has no local identifying content because $\delta_2=\E[\varepsilon_{1t}^3]=0$, but the cubic ISP exploits
$\delta_3=\E[\varepsilon_{1t}^4]-3$, which becomes more informative as excess kurtosis increases.
Once $\E[\varepsilon_{1t}^3]\neq 0$, the quadratic ISP supplies a direct source of identifying information,
and its rejection rises (to unity) rapidly. The improvement is much smaller for the SP test as the quadratic transformation of the external proxy also transforms its measurement noise. 
Nevertheless, once skewness is already substantial, increasing excess kurtosis may eventually reduce rejection frequencies in finite samples. Although excess kurtosis supplies additional identifying information through $\delta_3$, the finite-sample power of the test also depends on whether the test statistic can be reliably estimated. As mentioned in Section~\ref{sec:theory},
heavier tails increase the sampling variation of higher-order moments and can create outliers, making the optimal GMM estimator and the test statistic imprecise.

\begin{figure}[ht]
    \centering
    \caption{Standard and identification-robust exogeneity tests in the baseline design}
    \label{fig:mc-main-standard-weak-rho-kappa}
    \includegraphics[width=\textwidth]{\detokenize{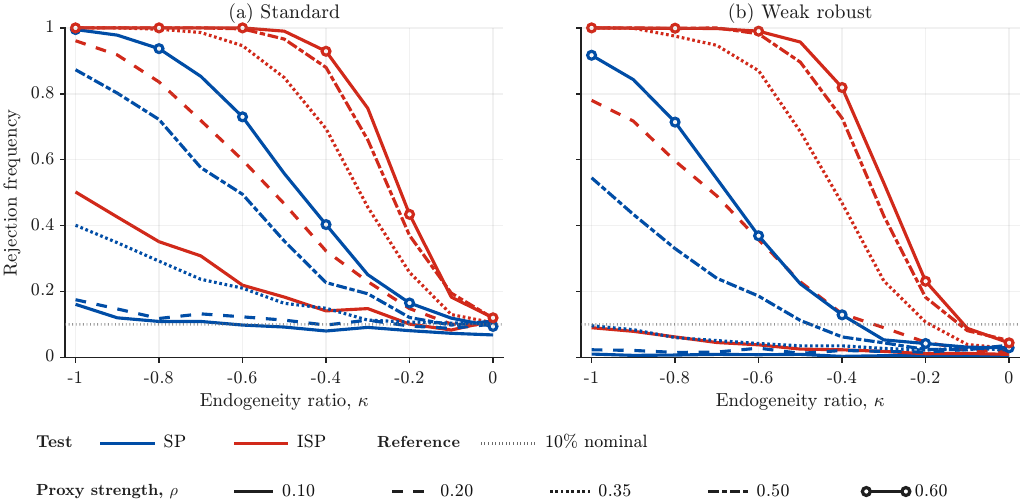}}
    \vspace{0.35em}
    \begin{minipage}{0.98\textwidth}
        \footnotesize The DGP is the baseline SVAR(4) with the baseline proxy.  Panel (a) reports the standard
        SP and ISP $J$-tests; panel (b) reports their identification-robust candidate-set
        counterparts. 
    \end{minipage}
\end{figure}

\subsection{Identification-robust tests and weak proxies}

We next compare the standard procedures with the identification-robust tests from
Section~\ref{sec:SP_weak} and Appendix~\ref{app:weakSP}. The identification-robust candidate sets are obtained by evaluating the objective function over a grid in the parameter space $\mathcal{B}$ and the tests reject at level $\alpha$ in case that 
no candidate value of $\beta$ is compatible with the moment restrictions at the specified level, s.t., 
$\mc C_T(1-\alpha) = \varnothing$ and $\mc C_T^*(1-\alpha)= \varnothing$, respectively.   

Figure~\ref{fig:mc-main-standard-weak-rho-kappa} compares the standard and robust procedures for
$T=500$ in the baseline design. Under $\kappa=0$, the robust tests reject less often
than the nominal level, as expected from the asymptotic conservative property established in
Proposition~\ref{thm:weak_test_ISP}. Under the alternative, when the proxy is strongly relevant, the candidate sets contract and the robust tests retain considerable power, with
the robust ISP-based test continuing to dominate its SP counterpart. For weakly and moderately
relevant proxies, however, the identification-robust tests reject much less often than their standard counterparts. 
Hence, although the identification-robust tests have bounded type-I error rate, this comes at a potentially substantial power cost when the data contain insufficient information to rule out every candidate value of $\beta$.

\begin{figure}[ht]
    \centering
    \caption{Standard and identification-robust exogeneity tests under weak-proxy and Gaussianity}
    \label{fig:mc-weak-proxy-comparison}
    \includegraphics[width=\textwidth]{\detokenize{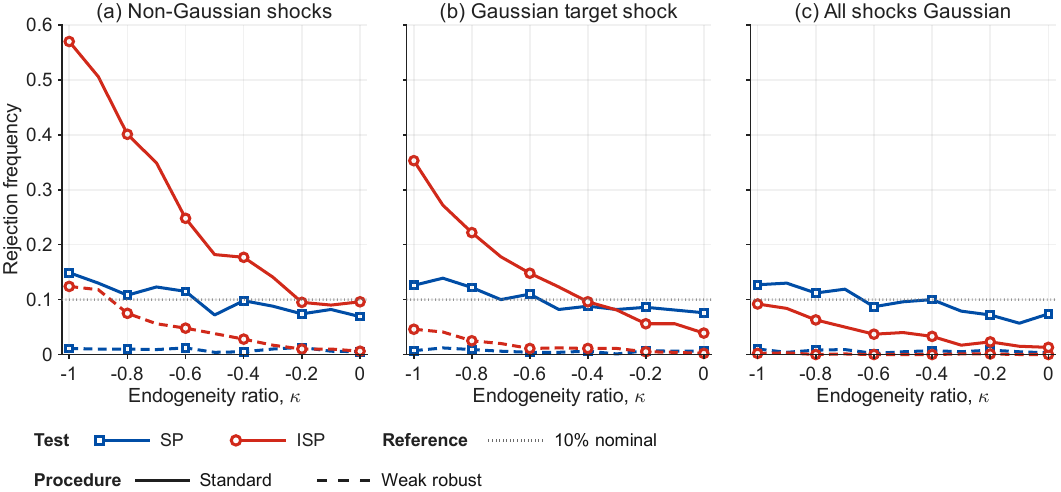}}
    \vspace{0.35em}
    \begin{minipage}{0.98\textwidth}
        \footnotesize
        \textit{Notes:} The panels report rejection frequencies for $T=500$ and target first-stage strength $F=6$ with $\rho\approx0.109$. In panel~(a), all structural shocks follow the baseline non-Gaussian specification, in panel~(b) the target shock is Gaussian, and in panel~(c) all shocks are Gaussian. Solid and dashed lines report the standard and identification-robust tests, respectively.
    \end{minipage}
\end{figure}

To examine the test properties for a weak proxy, we calibrate the correlation between the proxy and the target shock as $\rho=\sqrt{6/504}\approx0.109$, which corresponds to $F_T(\rho)=6$ with $T=500$.
Panel~(a) of Figure~\ref{fig:mc-weak-proxy-comparison} illustrates that a weak proxy does not imply weak identification for the proposed ISP-based test. The non-Gaussianity of the target shock generates
nonzero Stein deviations, so that the quadratic and cubic ISPs continue to supply identifying information. Accordingly, the standard ISP test has substantial power against
proxy contamination despite the low first-stage $F$ statistic. 
However, when the target shock is Gaussian, its skewness and excess kurtosis vanish, and the ISP moments no longer provide the additional local identifying information, so that the power falls accordingly. Finally, when all shocks are Gaussian in panel~(c), rejection frequencies of the standard procedures remain close to or below the nominal level and display little systematic response to $\kappa$, while the robust tests almost never reject. This is the worst case scenario, in which neither the external proxy nor the distributional feature of any shock inside the VAR model can supply the test with identifying content.

\subsection{Model misspecification}
As discussed in Section~\ref{sec:intuition}, the validity of ISPs depend on a correctly specified SVAR. Once the chosen SVAR model is not appropriate to represent the DGP, the asymptotic null distribution of the strong exogeneity test is no longer valid. To illustrate this, we first consider a system in which the true transmission mechanism is nonlinear, 
\[
y_t - \sum_{i=1}^{4}A_i y_{t-i} \equiv u_t =
 B_0\varepsilon_t + \gamma\iota_2 H_2(\varepsilon_{1t}),
 \qquad
 H_2(x)=x^2-1,
 \quad \text{with }\
 \gamma\in\{0.1,0.2\},
\]
which is omitted in a linear VAR representation. 
The econometrician nevertheless estimates a linear SVAR(4) with the correct lag specification. 
Figure~\ref{fig:mc-misspec-nonlinearity-rho-kappa} reports the resulting rejection frequencies.
At $\kappa=0$, the proxy is exogenous with respect to the non-target shocks, but the
structural shocks implied by the misspecified linear representation do not satisfy the
mean-independence restrictions.
To see this, evaluating the linear decomposition at $B_0$ yields
$e_t:=B_0^{-1}u_t
=\varepsilon_t+\gamma B_0^{-1}\iota_2H_2(\varepsilon_{1t})
=\bigl(\varepsilon_{1t},
\varepsilon_{2t}+\gamma H_2(\varepsilon_{1t}),
\varepsilon_{3t}\bigr)'$, such that the pseudo non-target shocks can be predicted by the target shock in mean $\E[e_{2t}\mid e_{1t}]=\gamma H_2(e_{1t})\neq0$.
As a result,  both tests reject substantially more often than the nominal level,
particularly for higher $\rho$ when the proxy is relevant. 
For example, with $\rho=0.60$, rejection frequencies of the SP-based tests are about $0.17$ and $0.32$ for $\gamma=0.1$ and $\gamma=0.2$, respectively.
The ISPs are more sensitive to the higher-order dependence generated by the omitted nonlinear 
relationship between the target and non-target shocks, so that the tests have higher rejection rates
of nearly $0.6$ and $0.95$, for $\gamma=0.1$ and $\gamma=0.2$, respectively.

\begin{figure}[ht]
    \centering
    \caption{Rejection frequencies under omitted nonlinear transmission}
    \label{fig:mc-misspec-nonlinearity-rho-kappa}
    \includegraphics[width=\textwidth]{\detokenize{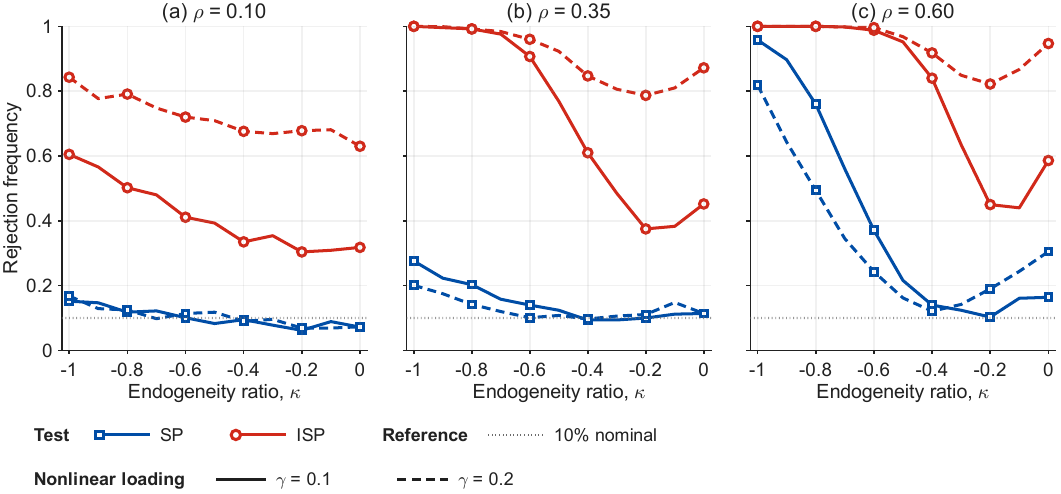}}
    \vspace{0.35em}
    \begin{minipage}{0.98\textwidth}
        \footnotesize
        \textit{Notes:} The panels report  rejection frequencies
        for $T=500$. The DGP is the baseline
        SVAR(4) with an additional nonlinear term such that
        $u_{2t}=\varepsilon_{1t}+\varepsilon_{2t}
        +\gamma(\varepsilon_{1t}^2-1)$.
    \end{minipage}
\end{figure}

Note that in this case, the structural impact multipliers of the shocks have to be interpreted with caution as the true effect may depend on the nuisance parameter describing the nonlinear relationship between the shocks. For example, the impact multiplier of the target shock on $y_{2t}$ becomes
$\E[y_{2t}\mid\varepsilon_{1t}=1,y_{s,s<t}] - \E[y_{2t}\mid\varepsilon_{1t}=0,y_{s,s<t}] = 1+\gamma$.
Furthermore, the standard proxy SVAR approach does not recover the correct structural parameter even asymptotically.\footnote{Note that $\E[z_tu_{1t}]=\rho$, while $\E[z_tu_{2t}]
=\rho(1+\kappa)
+\rho\gamma\E[\varepsilon_{1t}(\varepsilon_{1t}^2-1)]
=\rho(1+\kappa-\gamma)$ as $\E[\varepsilon_{1t}^3]=-1$ in the baseline design. The ratio of two population moments gives $1-\gamma+\kappa$.}
In fact, conditional on the misspecified model, the proxy is endogenous as it correlated with a non-target pseudo shock $e_{2t}$. Hence, the rejection of the proposed strong exogeneity test may yield an important signal for the researcher and prompt her to reflect on the model specification.

\begin{figure}[ht]
    \centering
    \caption{Rejection frequencies under lag-misspecification}
    \label{fig:mc-misspec-lag-order-rho-kappa-var1}
    \includegraphics[width=\textwidth]{\detokenize{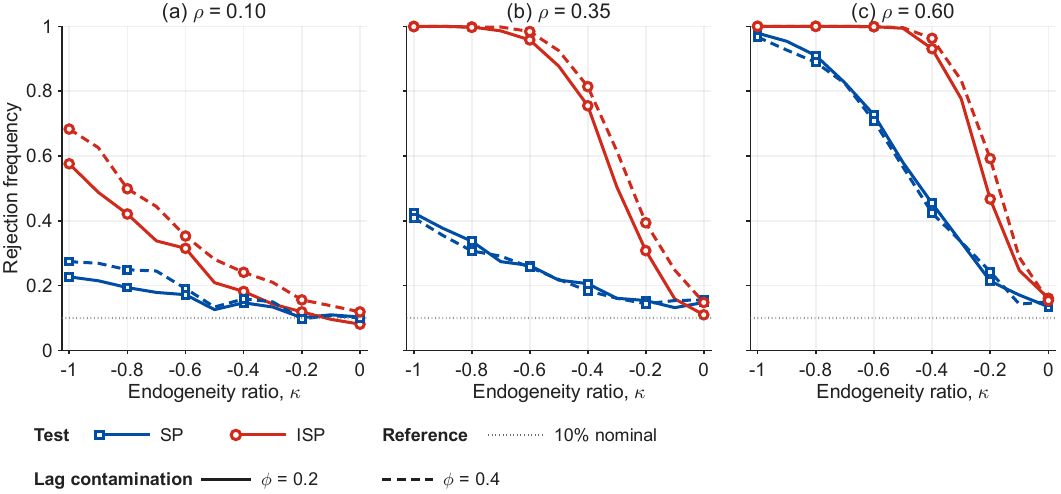}}
    \vspace{0.35em}
    \begin{minipage}{0.98\textwidth}
        \footnotesize
        \textit{Notes:} The panels report rejection frequencies
        for $T=500$. Data are generated by the baseline SVAR(4), but the econometrician estimates a VAR(1). The proxy is
        $z_t=\rho\varepsilon_{1t}+\kappa\rho\varepsilon_{2t}
        +\phi\varepsilon_{1,t-1}
        +\sqrt{1-\rho^2(1+\kappa^2)-\phi^2}\,v_t$.

    \end{minipage}
\end{figure}

In the second experiment, we consider the case in which the data continue to be generated by the baseline SVAR(4), but the researcher estimates a VAR(1). We also allow the proxy to depend  on the lagged target shock
\[
z_t
 = \rho\varepsilon_{1t} + \kappa\rho\varepsilon_{2t} +
 \phi\varepsilon_{1,t-1} + \sqrt{1-\rho^2(1+\kappa^2)-\phi^2}\, v_t
 \qquad
 \text{with} \
 \phi\in\{0.2,0.4\}.
\]
Figure~\ref{fig:mc-misspec-lag-order-rho-kappa-var1} displays rejection frequencies under lag-misspecification. The proposed ISP-based test has substantially greater power than the SP test to correctly detect null violations.  However, at $\kappa=0$, the proxy 
$z_t$  remains strongly exogenous with respect to the current non-target shocks, while 
both tests exhibit type-I error rates slightly above the $10\%$ level for the more relevant proxies. 
Compared to model misspecification due to omitted nonlinearity, however, the oversize issue is rather mild.
In the higher-relevance calibrations ($\rho = 0.35, 0.6$), the type-I error rates of both tests stay approximately within the range 
$11\%$--$16\%$. Note that the residual from the misspecified VAR(1) model still contains higher-order lagged terms of $y_t$, which in turn contain lagged structural shocks.
Since the proxy also loads on $\varepsilon_{1,t-1}$, it is correlated with components of the misspecified residual. The tests treat this correlation a a violation of the null although it originates from omitted dynamics.

\section{Application}\label{sec:app}

Proxy variables constructed from high-frequency financial market data are widely used to identify structural shocks in macroeconomic SVARs. The high-frequency identification approach builds on the insight that asset price movements in narrow event windows around policy announcements are driven predominantly by a single structural shock, making the high-frequency variation a valid external instrument for that shock in the low-frequency SVAR.
Prominent applications include monetary policy shocks identified via interest rate movements around FOMC announcements \citep{gertler2015monetary,miranda2021transmission,nakamura2018high,jarocinski2020deconstructing} and oil supply news shocks identified via oil futures price changes around OPEC announcements \citep{kanzig2021macroeconomic}. This section examines the benchmark oil supply news and monetary policy applications in detail, while Appendix~A.4 applies the tests to additional proxies from a broader range of the existing proxy SVAR literature.

Recent work emphasizes that policy announcements may convey information about multiple structural shocks. For example, FOMC announcements may convey multiple monetary-policy shocks, for instance, surprises to the target rate, forward-guidance shocks, and information shocks \citep{gurkaynak2005actions,cieslak2019non,jarocinski2020deconstructing,swanson2021measuring,bauer2023reassessment,jarocinski2024estimating,lewis2025announcement}.
Similarly, OPEC announcements may contain both supply-news and demand-news components \citep{degasperi2025}. 
Conceptually, a proper separation of high-frequency surprises into multiple latent components requires a system with multiple high-frequency proxies (at least as many as the number of components one aims to recover) together with additional identifying restrictions. In many applications, however, researchers have access to only one high-frequency proxy series that is deemed credible for a given shock.
For example, \cite{gertler2015monetary} work with the short-term interest rate surprise in narrow FOMC windows \citep{gertler2015monetary}, \cite{corsetti2021} use a proxy based on  overnight interest rate swaps for the Euro Area, and \cite{kanzig2021macroeconomic} constructs an OPEC-based oil supply news proxy. In the empirically relevant single-proxy scenario, the researcher faces a stark trade-off: either (i) use the full proxy series as an external instrument for one target shock, accepting the risk that it may embed additional announcement components, or (ii) attempt to split the proxy into multiple components using feasible restrictions based on observable co-movement patterns. A prominent example is the so-called ``poor man's sign restriction''\footnote{The expression ``poor man's sign restrictions'' is used in \citet{jarocinski2020deconstructing} to describe the binary classification that assigns events to one of two shocks based on the sign of equity-price responses.}, which has been applied in the monetary policy context \citep{jarocinski2020deconstructing} and in oil-market applications \citep{degasperi2025, Gazzani_etal_2024}.\footnote{Of course, one could properly separate different announcement components if (a) the researcher collects high-frequency surprises of multiple assets during the announcement window (at least of the same number as the announcement components she intends to identify), and (b) if the researcher imposes credible identifying restrictions on these instruments \citep[see, e.g.,][]{jarocinski2020deconstructing, swanson2021measuring, kroencke2021fomc, jarocinski2024estimating}.
}

Our proposed exogeneity tests provide statistical evidence that speaks directly to whether a single-proxy strategy is defensible. Rejection of the test indicates that the proxy contains information about the conditional mean of at least one non-target structural shock, suggesting that a separation step is warranted. 
In both applications below, we find no evidence against exogeneity of the benchmark high-frequency proxies. This supports the original use of single-proxy approach in \cite{gertler2015monetary} and \cite{kanzig2021macroeconomic}. Importantly, non-rejection does not imply that additional high-frequency shocks are absent, but rather that they do not materially affect the chosen proxy or the low-frequency SVAR responses. This interpretation is consistent with recent evidence in \cite{jarocinski2024estimating}, who finds that short-term interest rates are driven predominantly by the standard monetary policy shock, while other shocks primarily load on proxies based on long-term rates and stock prices. The weighting matrices used in the tests are implemented as detailed in Appendix~\ref{app:weighting_est}.

\subsection{Oil Supply News Shocks}\label{sec:app-oil}

We replicate the baseline SVAR specification in \cite{kanzig2021macroeconomic} using monthly data from 1974 to 2017. The VAR includes six variables: the real oil price, world oil production, world oil inventories, world industrial production, U.S.\ industrial production, and U.S.\ CPI. We estimate a reduced-form VAR with 12 lags and collect residuals $\hat{u}_t$. The proxy is the first principal component of oil futures price changes (spanning the first year of the term structure) around OPEC announcements. 

\cite{kanzig2021macroeconomic} uses variation in oil futures prices around OPEC announcements as a proxy to identify oil supply news shocks. The identifying narrative is that OPEC announcements reveal information primarily about future oil supply, so that futures-price movements within the event window are driven by supply news and are orthogonal to other structural shocks in the monthly SVAR. \cite{degasperi2025} challenges this assumption, arguing that in imperfect information environments, market participants also revise their beliefs about demand conditions during OPEC announcements. In this case, the proxy in \cite{kanzig2021macroeconomic} would be correlated with a non-target shock and fail to be exogenous.

If the proxy captures both supply and demand shocks, the estimates in \cite{kanzig2021macroeconomic} would be biased.  \cite{degasperi2025} proposes separating the proxy into supply and demand components using sign restrictions, finding larger supply shock effects than \cite{kanzig2021macroeconomic}. Conversely, if the proxy is exogenous, this separation would introduce selection bias and the original single-proxy estimates are more reliable.

\begin{table}[htbp]
\centering
\caption{Proxy exogeneity tests: Oil supply news shocks}
\label{tab:kaenzig-exog}
\begin{tabular}{llcc}
\toprule
Test & Version & Statistic & $p$-value \\
\midrule
ISP & Standard & 8.19 & 0.611 \\
SP & Standard & 9.56 & 0.480 \\
ISP & Weak robust & 8.30 & 0.911 \\
SP & Weak robust & 8.68 & 0.893 \\
\bottomrule
\end{tabular}

\end{table}
We apply the proposed SP and ISP tests (both the standard and identification-robust versions) to the \citet{kanzig2021macroeconomic} proxy, using moments up to $L=3$. As shown in Table~\ref{tab:kaenzig-exog}, neither test rejects proxy exogeneity at conventional significance levels. The proxy exhibits moderate non-Gaussianity (skewness $= -0.19$, excess kurtosis $= 2.49$), indicating that the test has power to detect exogeneity violations. This result supports the identification scheme in \cite{kanzig2021macroeconomic}, i.e., any demand-news component in the announcement-window futures variation -- if present -- does not generate a detectable violation of the conditional-mean exogeneity restrictions with respect to the remaining monthly SVAR shocks and has at most a negligible effect on oil futures price variation around OPEC announcements.

\subsection{Monetary Policy Shocks}\label{sec:app-mp}

We replicate the baseline SVAR specification in \cite{jarocinski2020deconstructing} using monthly U.S.\ data from 1984 to 2016. The VAR includes five variables: the one-year Treasury yield, log S\&P 500 index, log real GDP, log GDP deflator, and the excess bond premium. We estimate a reduced-form VAR with 12 lags and collect residuals $\hat{u}_t$. The proxy is the high-frequency change in the three-month-ahead Fed Funds futures rate around FOMC announcements. Data are from the \cite{jarocinski2020deconstructing} replication package.

High-frequency short-rate surprises around central bank announcements are a standard proxy for monetary policy shocks (\citealt{gertler2015monetary}, see also \citealt{corsetti2021} for evidence in the Euro Area). \citet{jarocinski2020deconstructing} argue that announcements may also convey central bank information shocks, i.e., signals about the Fed's assessment of economic conditions, which can induce systematic co-movements between short-rate surprises and equity prices. They propose separating policy and information components using sign and co-movement restrictions in the high-frequency system.

As in the oil application, the empirical issue is whether the benchmark short-rate proxy can be used to identify a monetary policy shock in the SVAR. If information shocks materially load on the short-rate proxy and have different implications for macroeconomic responses, then a single-proxy strategy can confound the two components. Conversely, if short-rate surprises are predominantly driven by the conventional policy shock, separation schemes that condition on realized co-movements, especially in binary ``poor man's'' implementations, can discard signal and induce selection effects.

\begin{table}[htbp]
\centering
\caption{Proxy exogeneity tests: Monetary policy shocks}
\label{tab:jk-exog}
\begin{tabular}{llcc}
\toprule
Test & Version & Statistic & $p$-value \\
\midrule
ISP & Standard & 5.22 & 0.734 \\
SP & Standard & 9.51 & 0.301 \\
ISP & Weak robust & 5.34 & 0.946 \\
SP & Weak robust & 9.25 & 0.682 \\
\bottomrule
\end{tabular}

\end{table}

We apply the proposed tests to the high-frequency short-rate proxy, using moments up to $L=3$. 
As documented in Table~\ref{tab:jk-exog}, the tests fail to reject the null. The proxy exhibits pronounced non-Gaussianity (skewness $= -1.73$, excess kurtosis $= 5.02$), suggesting that the test has power to detect exogeneity violations.  The empirical evidence therefore supports the view that central bank information shocks, while potentially present in broader high-frequency systems, do not induce a detectable violation of conditional-mean exogeneity for this benchmark short-rate proxy with respect to the remaining monthly SVAR shocks. Accordingly, the single-proxy identification of monetary policy shocks appears empirically defensible in this setting. The information shocks, if present, have at most a small effect on short-term interest rate surprises around FOMC announcements. The recent study by \cite{jarocinski2024estimating} provides empirical evidence supporting this finding. 

\section{Conclusion}\label{sec:fazit}

This paper develops a framework for testing strong exogeneity in proxy SVARs when researchers have access to only one credible proxy for a target shock.
Building on the notion of strong exogeneity, this paper introduces the approach of internal synthetic proxies (ISPs). The approach synthesizes over-identifying restrictions from a noise-free proxy constructed internally from reduced-form information and obtains test power by exploiting non-Gaussianity of the target shock.
We provide weak identification theory for the test and show that in case of weak proxy, the non-Gaussianity of the target shock can sustain ISP identification. The identifying content can be summarized by a novel measure of identification strength based on deviations from the Stein identity, which gives a direct and interpretable moment-based quantification of non-Gaussianity. We study the test's behavior when the non-Gaussian information vanishes in the limit. 
Our MC results show that the proposed ISP-based strong exogeneity tests deliver substantial power gains compared to the SP-based strong proxy exogeneity tests of \cite{BK2024}, especially when the proxy has a low signal-to-noise ratio. In empirical applications to high-frequency proxy SVARs, the proposed tests provide direct evidence on whether a single-proxy strategy is defensible. In the benchmark monetary policy and oil market applications, we find no evidence against exogeneity of the standard proxies.

\section{Declaration of generative AI-assisted technologies.}
The authors used OpenAI Codex to assist with programming  and creation of figures. The authors reviewed and edited the code as needed and take full responsibility for the content.

 \bibliography{literatur}

\clearpage
\appendix  
\section{Appendix}
\subsection{Technical assumptions}\label{app:tech_ass}

\begin{assumptionp}{T1}\label{Tass:base}
\begin{enumerate}[label=(\roman*)]
\item The VAR($p$) in \eqref{eq:SVAR} is stable s.t., the lag polynomial $A(z) = I_n - \sum^p_{i = 1} A_i z^i$ satisfies $\det A(z) \neq 0$ for all $|z|\leq 1$.

\item Matrix $B_0$ is non-singular and $\{(\varepsilon_t, z_t)\}_{t\in\N}$ is an i.i.d.\ sequence with $\E[\varepsilon_t]=0$ and $\E[\varepsilon_t\varepsilon_t']=\Sigma_\varepsilon$ diagonal and positive definite. Moreover, there exists $\delta>0$ such that $\E\|\varepsilon_t\|^{4+\delta}<\infty$.

\end{enumerate}
\end{assumptionp}

\begin{assumptionp}{T2}\label{Tass:ISP}
\begin{enumerate}[label=(\roman*)]
\item The true values $(\pi_0,\beta_0,\omega_0)$ lie in the (relative) interior of the corresponding parameter spaces, which are compact subspaces in $\R^{d_\pi}$, $\R^{n-1}$ and $\R^{n^2}$, respectively.

\item The map $\omega\mapsto \Sigma_u(\omega)$ is continuously differentiable and $\Sigma_u(\omega)$ is finite and symmetric positive definite for all $\omega$. 

\item There exists a symmetric positive definite matrix $W\in\R^{L(n-1)\times L(n-1)}$ such that $\|\hat W_T-W\|=o_p(1)$.

\item The map $(\beta,\pi,\omega)\mapsto g_t^{(Z^*)}(\beta,\pi,\omega)$ is almost surely continuously differentiable in a neighbourhood $\mathcal{N}^*$ of $(\beta_0,\pi_0,\omega_0)$ and
\[
\E\sup_{(\beta,\pi,\omega)\in\mathcal{N}^*}\left\|\frac{\partial g_t^{(Z^*)}(\beta,\pi,\omega)}{\partial\pi'}\right\|<\infty,\quad
\E\sup_{(\beta,\pi,\omega)\in\mathcal{N}^*}\left\|\frac{\partial g_t^{(Z^*)}(\beta,\pi,\omega)}{\partial\beta'}\right\|<\infty,
\]
\[
\E\sup_{(\beta,\pi,\omega)\in\mathcal{N}^*}\left\|\frac{\partial g_t^{(Z^*)}(\beta,\pi,\omega)}{\partial\omega'}\right\|<\infty.
\]
Moreover, $\E\|g_t^{(Z^*)}(\beta_0,\pi_0,\omega_0)\|^{2+\delta}<\infty$ for some $\delta>0$.


\item $\rank(G_\beta^*)=n-1$.
\end{enumerate}
\end{assumptionp}

Note that since $\Sigma_u$ is symmetric, $\omega_0$ does not live in an unrestricted $n^2$-dimensional interior.
For notational convenience, derivatives with respect to $\omega$ are
represented by the ambient derivatives of the corresponding smooth matrix
functions evaluated at $\Sigma_{u}(\omega_0) \equiv \Sigma_u$.\footnote{An equivalent formulation can be $\operatorname{vech}(\Sigma_u)$, which yields the same results, but would require introducing duplication and elimination matrices throughout the derivative calculations and the
proofs.}  
Assumption~\ref{Tass:ISP}(ii) together with compactness in \ref{Tass:ISP}(i) implies that
$\sup_{\omega}\|\Sigma_u(\omega)\|<\infty$ and $\sup_{\omega}\|\Sigma_u(\omega)^{-1}\|<\infty$ so that the internal proxy is well-defined and smooth uniformly. Finally, for polynomial ISPs up to degree $L$, the moment condition
$\E\|g_t^{(Z^*)}(\beta_0,\pi_0,\omega_0)\|^{2+\delta}<\infty$ in \ref{Tass:ISP}(iv) is satisfied, for example,
when $\E|z_t|^{(2+\delta)(L+1)/L}<\infty$ and $\E\|\varepsilon_t\|^{(2+\delta)(L+1)}<\infty$.\footnote{Note that for the first block, we can bound $\E[|z_t|^{2+\delta}\|\bs\varepsilon_{2t}\|^{2+\delta}]$ by $\left(\E|z_t|^{(2+\delta)(L+1)/L} \right)^{L/(L+1)} \left( \E\|\bs\varepsilon_{2t}\|^{(2+\delta)(L+1)} \right)^{1/(L+1)}$ by the H\"older's inequality.}
Again, a sufficient condition for Assumption~\ref{Tass:ISP}(v) is the relevance condition of the external proxy, i.e., $\rho_1 \neq 0$.

\begin{assumptionp}{T2-W}\label{Tass:ISPweak}
\begin{enumerate}[label=(\roman*)]
\item The true values \((\pi_0,\beta_0,\omega_0)\) lie in the (relative) interiors of
their respective parameter spaces. In particular,
\(\beta_0\in\operatorname{int}(\mc B)\), where \(\mc B\subset\R^{n-1}\)
is compact and convex. The map $\omega\mapsto \Sigma_u(\omega)$ is continuously differentiable and $\Sigma_u(\omega)$ is finite and symmetric positive definite for all $\omega$. 
\item $\mc G_T^*(\beta)\Rightarrow\mc G^*(\beta)$ as a random function on \(\mc B\) with respect to the sup norm, where
\(\mc G^*(\beta)\) is a zero-mean Gaussian process with a.s.\ continuous sample paths and covariance function
$
\Omega^*(\beta_1,\beta_2) := \E\!\left[ \mc G^*(\beta_1)\mc G^*(\beta_2)' \right].$ 
The map
\(\beta\mapsto\Omega^*(\beta):=\Omega^*(\beta,\beta)\) is continuous and
uniformly positive definite on \(\mc B\).

\item The map
\((\beta,\pi,\omega)\mapsto
g_t^{(Z^*)}(\beta,\pi,\omega)\) is almost surely continuously
differentiable on
\(\mc B\times\mc N_\pi\times\mc N_\omega\), where
\(\mc N_\pi\) and \(\mc N_\omega\) are neighbourhoods of
\(\pi_0\) and \(\omega_0\), and
$\sup_T\E_T$
$\sup_{\substack{\beta\in\mc B,\,
\pi\in\mc N_\pi,\,
\omega\in\mc N_\omega}}
\left(
\left\|
\frac{\partial g_t^{(Z^*)}(\beta,\pi,\omega)}
{\partial\pi'}
\right\|
+
\left\|
\frac{\partial g_t^{(Z^*)}(\beta,\pi,\omega)}
{\partial\omega'}
\right\|
+
\left\|
\frac{\partial g_t^{(Z^*)}(\beta,\pi,\omega)}
{\partial\beta'}
\right\|
\right)
<\infty.$

\item There exists a feasible estimator
\(\hat\Omega_T^*(\beta)\), s.t., $
\sup_{\beta\in\mc B}
\left\|
\hat\Omega_T^*(\beta)-\Omega^*(\beta)
\right\|
=o_p(1).$
\end{enumerate}
\end{assumptionp}
Assumption~\ref{Tass:ISPweak}(ii) is the analogue of the functional central limit condition in Assumption~\ref{Tass:SPweak}(ii). Assumption~\ref{Tass:ISPweak}(iii) allows for differentiation under the expectation so that $G_{\beta, T}^*  = \frac{\partial \bs \psi_T^*(\beta_0)}{\partial\beta'}$. It also implies  
\begin{align*}
&\sup_{\beta\in\mc B}
\Bigg\|
\sqrt T
\left[
\bar g_T^{(Z^*)}(\beta,\hat\pi_T,\hat\omega_T)
-\bar g_T^{(Z^*)}(\beta,\pi_0,\omega_0)
\right]
-G_{\pi,T}^*(\beta)\sqrt T(\hat\pi_T-\pi_0)
-G_{\omega,T}^*(\beta)\sqrt T(\hat\omega_T-\omega_0)
\Bigg\|
=o_p(1).
\end{align*}

\subsection{Proofs of the main results and lemmata}\label{app:proof}

\begin{proof}[Proof of lemma~\ref{lem:weak3_glb}]
For $\beta\in\mc B$, define the line segment 
$\beta(s):=\beta_0+s(\beta-\beta_0)$ for $0\leq s\leq1$.
Since $\mc B$ is convex, $\beta(s) \in \mc B$, on which $\bs\psi_T^*$ is continuously differentiable,
the fundamental theorem of calculus (FTC) and 
the chain rule applied to $s \mapsto \bs\psi_T^*(\beta(s))$ 
entail that 
\begin{align*}
\sqrt T \bs\psi_T^*(\beta)=	\sqrt T \left[\bs\psi_T^*(\beta(1))-\bs\psi_T^*(\beta(0)) \right] &=
\int_0^1
\sqrt T\frac{\partial\bs\psi_T^*(\beta_0+s(\beta-\beta_0))}
{\partial\beta'} (\beta-\beta_0)\,\dd s 
\end{align*}
since $\bs\psi_T^*(\beta(0)) = \bs\psi_T^*(\beta_0) = 0$ under 
Assumption~\ref{ass:strong_exogeneity}.
Define 
\[
\bs \Psi^*(\beta) := \int_0^1
\bs G^*(\beta_0 + s (\beta - \beta_0)) (\beta-\beta_0)\,\dd s,
\]
and note that $\bs \Psi^*(\beta_0) = \bs 0$ by construction.
By Assumption~\ref{ass:weak3} and compactness of $\mc B$,
\begin{align*}
	\sup_{\beta\in\mc B}
\left\|
\sqrt T\,\bs\psi_T^*(\beta)-\bs\Psi^*(\beta)
\right\|
&= \sup_{\beta\in\mc B}
\left\|
\int_0^1 
\left( \sqrt T\frac{\partial\bs\psi_T^*(\beta(s))}
{\partial\beta'}
 - \bs G^*(\beta(s))  \right)(\beta-\beta_0) \,\dd s
\right\| \\
& \leq 
\sup_{\beta\in\mc B}
\left\| \beta-\beta_0 \right\| 
\sup_{s \in [0,1]}
\left\|  \sqrt T\frac{\partial\bs\psi_T^*(\beta(s))}
{\partial\beta'}  -  \bs G^*(\beta(s)) \right\|  \int_0^1 \,\dd s \\
& \leq 
\sup_{\beta\in\mc B}
\left\| \beta-\beta_0 \right\| 
\sup_{\tilde \beta\in\mc B}
\left\|  \sqrt T\frac{\partial\bs\psi_T^*(\tilde \beta)}
{\partial\beta'}  -  \bs G^*(\tilde \beta) \right\| \longrightarrow 0.
\end{align*}
Note that the continuity of $\bs G^*$ extends to $\bs\Psi^*$.
To show its derivative, we start by showing that $\bs\Psi^*$ is differentiable. 
For this, define for $\beta, \beta + \Delta \beta \in \mc B$ the line segment 
$\beta + s(\beta + \Delta \beta - \beta ) = \beta + s \Delta \beta$ with $0\leq s\leq1$. Then, FTC entails that
\[
\sqrt T \left[\bs\psi_T^*(\beta + \Delta  \beta)-\bs\psi_T^*(\beta) \right] =
\int_0^1
\sqrt T\frac{\partial\bs\psi_T^*(\beta + s \Delta \beta)}
{\partial\beta'} \Delta \beta\,\dd s.
\]

By \eqref{eq:ISP_global_drift} on the left-hand side and by \eqref{eq:W3_derivative_uniform} for the right-hand side, 
\begin{align*}
	\bs\Psi^*(\beta + \Delta \beta)- \bs\Psi^*(\beta) 
	&= \int_0^1 \bs G^*(\beta + s \Delta \beta)  \Delta \beta\,\dd s. \\
\bs\Psi^*(\beta + \Delta \beta)- \bs\Psi^*(\beta) - \bs G^*(\beta) \Delta \beta
&= 	\int_0^1 \left( \bs G^*(\beta + s \Delta \beta) - \bs G^*(\beta) \right)   \Delta \beta\,\dd s.
\end{align*}
Hence,
\begin{align*}
\left\|
\bs\Psi^*(\beta + \Delta \beta)- \bs\Psi^*(\beta) - \bs G^*(\beta) \Delta \beta
\right \|	
& \leq 
\sup_{s \in [0,1]}
\left\|
 \bs G^*(\beta + s \Delta \beta) - \bs G^*(\beta) 
\right \|	\left\|\Delta \beta \right \|
 = o(\left\|\Delta \beta \right \|),
\end{align*}
where the last equality follows by continuity of $\bs G^*$ under 
Assumption~\ref{ass:weak3}. 
Hence, $\frac{\partial\bs \Psi^*(\beta)}{\partial\beta'}
=\bs G^*(\beta)$, $\forall \beta\in\operatorname{int}(\mc B)$.
Evaluating $\beta = \beta_0$, by \ref{Tass:ISPweak}(iii) Proposition~\ref{thm:G_beta_weak} entails 
\[
G_{\beta, T}^* 
= \frac{\partial \bs \psi_T^*(\beta_0)}{\partial\beta'}
= -\begin{pmatrix}
    \rho_{1, T}\\
	\sigma_1\bs\delta_T^{(L)}
    \end{pmatrix}\otimes I_{n-1}.
\]
Hence, 
$\sqrt{T}\frac{\partial \bs \psi_T^*(\beta_0)}{\partial\beta'}
= -\sqrt{T}\begin{pmatrix}
    \rho_{1, T}\\
	\sigma_1\bs\delta_T^{(L)}
    \end{pmatrix}\otimes I_{n-1}
\longrightarrow \bs G^*(\beta_0) = \frac{\partial\bs \Psi^*(\beta_0)}{\partial\beta'}.
$ 
Since $\left\{x\otimes I_{n-1}:x\in\R^L\right\}$ is a closed linear subspace of $\R^{L(n-1)\times(n-1)}$ and the map $x\mapsto x\otimes I_{n-1}$ is one-to-one, there exists a unique $\bs\gamma^*\in\R^L$, s.t., $\bs G^*(\beta_0)=-\bs\gamma^*\otimes I_{n-1}$. This verifies the validity of condition~\eqref{eq:W3_local} and completes the proof.
\end{proof}

\begin{proof}[Proof of Proposition~\ref{thm:beta_ISP}]
We first verify $\E[g_t^{(Z^*)}(\beta_0,\pi_0,\omega_0)]=0$.
By Lemma~\ref{lem:ISP_ass}, $\E[\boldsymbol\varepsilon_{2t}\mid z_t]=\mathbf 0$ and $\E[\boldsymbol\varepsilon_{2t}\mid \varepsilon_{1t}]=\mathbf 0$.
At $(\beta_0,\pi_0)$, we have $r_t(\beta_0,\pi_0)=(\mc J-\beta_0\iota_1')u_t(\pi_0)$ and
$(\mc J-\beta_0\iota_1')b_0=0$, hence $r_t(\beta_0,\pi_0)$ depends only on $\boldsymbol\varepsilon_{2t}$.
Therefore,
\[
\E[g_{1,t}^*(\beta_0,\pi_0)]
=
\E[z_t\,r_t(\beta_0,\pi_0)]
=
\E\!\left[\E[z_t\,r_t(\beta_0,\pi_0)\mid z_t]\right]
=
\E\!\left[z_t\,\E[r_t(\beta_0,\pi_0)\mid z_t]\right]
=
0,
\]
while the internal proxy $z_t^*(\beta_0,\pi_0,\omega_0)$ is $\bs\sigma(u_t(\pi_0))$-measurable and, under Assumption~\ref{ass:strong_exogeneity}, its population counterpart is a function of $\varepsilon_{1t}$ only. Hence, for $i\ge2$,
$h_i(z_t^*(\beta_0,\pi_0,\omega_0))$ is measurable with respect to $\bs \sigma(\varepsilon_{1t})$ and
\[
\E[g_{i,t}^*(\beta_0,\pi_0,\omega_0)]
=
\E\!\left[h_i(z_t^*(\beta_0,\pi_0,\omega_0))\,\E[r_t(\beta_0,\pi_0)\mid \varepsilon_{1t}]\right]
=
0,
\]
where the last equality follows from $\E[\boldsymbol\varepsilon_{2t}\mid \varepsilon_{1t}]=\mathbf 0$. Stacking $i=1,\ldots,L$ yields $\E[g_t^{(Z^*)}(\beta_0,\pi_0,\omega_0)]=0$.

Next, it has already been established in Proposition~\ref{thm:beta_SP}  that  $\sqrt{T}(\hat \pi_T - \pi_0) = -M^{-1} \frac{1}{\sqrt{T}}\sum_{t=1}^T m_t(\pi_0) + o_p(1).$ 
For $\hat\omega_T$, denote $S_\pi \coloneqq \E\left[\frac{\partial s_t(\pi_0, \omega_0)}{\partial \pi'} \right]$ and $S_\omega \coloneqq  \E\left[\frac{\partial s_t(\pi_0, \omega_0)}{\partial \omega'} \right].$ Recall $s_t(\pi,\omega)=\vect{u_t(\pi)u_t(\pi)'-\Sigma_u(\omega)}$, by our parametrization, we have $S_\omega = -I_{n^2}$, while $\E[u_t(\pi_0)Y_{t-1}']=0$ and therefore $S_\pi = - \E\left[((u_t(\pi_0) \otimes I_n) + ( I_n \otimes u_t(\pi_0) ) )(Y_{t-1}' \otimes I_n)\right] = \bs 0$. Expanding the moment function around the true value and inserting the influence function for $\hat \pi_T$ yields
 \begin{align*}
        \sqrt{T}(\hat \omega_T - \omega_0) &= \frac{1}{\sqrt{T}}\sum_{t=1}^T -S_\omega^{-1} \left( s_t(\pi_0, \omega_0) - S_\pi M^{-1} m_t(\pi_0) \right) + o_p(1)\\
        &= \frac{1}{\sqrt{T}}\sum_{t=1}^T s_t(\pi_0, \omega_0) + o_p(1).
\end{align*}
For the GMM estimator $\hat\beta_T^*(W)$, expanding the first order condition yields   
 \[
 \sqrt{T}(\hat\beta_T^*(W)-\beta_0)
=
-(G_\beta^{*'}WG_\beta^*)^{-1}G_\beta^{*'}W\,
\frac1{\sqrt{T}}\sum_{t=1}^T g_t^{(Z^*)}(\beta_0,\hat\pi_T,\hat\omega_T)
+o_p(1),
 \]
where 
 \begin{align*}
      \frac{1}{\sqrt{T}} \sum_{t=1}^T g_t^{(Z^*)}(\beta_0,\hat \pi_T, \hat \omega_T)  &=  \frac{1}{\sqrt{T}} \sum_{t=1}^T g_t^{(Z^*)}(\beta_0, \pi_0,  \omega_0) + G_\pi^*  \sqrt{T}(\hat \pi_T - \pi_0)  + G_\omega^*  \sqrt{T}(\hat \omega_T - \omega_0)  + o_p(1) \\
      &=  \frac1{\sqrt{T}}\sum_{t=1}^T\left[g_t^{(Z^*)}(\beta_0,\pi_0,\omega_0) - G_\pi^*M^{-1}m_t(\pi_0) + G_\omega^*s_t(\pi_0,\omega_0) \right]+o_p(1). 
\end{align*}
Define $g_t^{(Z^*),\perp}$ as in the proposition. Then
\[
\sqrt{T}(\hat\beta_T^*(W)-\beta_0)
=
- (G_\beta^{*'}WG_\beta^* )^{-1}G_\beta^{*'}W\,
\frac1{\sqrt{T}}\sum_{t=1}^T g_t^{(Z^*),\perp}
+o_p(1),
\]
where
\[
\frac1{\sqrt{T}}\sum_{t=1}^T g_t^{(Z^*),\perp}\ \Rightarrow\ N(0,S_\perp^*),
\qquad
S_\perp^*:=\E \left[g_t^{(Z^*),\perp}g_t^{(Z^*),\perp\,\prime} \right].
\]
This yields the expression for $V_\beta^*$. The efficiency claim follows by the same argument as in Proposition~\ref{thm:beta_SP}, where the optimal weighting is $S_\perp^{*-1}$, with the asymptotic variance given by $(G_\beta^{*'}S_\perp^{*-1}G_\beta^*)^{-1}$.

Assume $L>1$. Then under $H_0:\E[g_t^{(Z^*)}(\beta_0,\pi_0,\omega_0)]=0$, the same expansion yields
\[
\sqrt{T}\,\bar g_T^{(Z^*)}(\hat\beta_T^*(W),\hat\pi_T,\hat\omega_T)
=
\Big(I_{L(n-1)}-G_\beta^*(G_\beta^{*'}WG_\beta^*)^{-1}G_\beta^{*'}W\Big)
\frac1{\sqrt{T}}\sum_{t=1}^T g_t^{(Z^*),\perp}
+o_p(1).
\]
Let $S_\perp^{*1/2}$ be the symmetric square root of $S_\perp^*$ and set
$\bs\zeta_T:=T^{-1/2}\sum_{t=1}^T S_\perp^{*-1/2}g_t^{(Z^*),\perp}$, so that $\bs\zeta_T\Rightarrow N(0,I_{L(n-1)})$. Note that $J_T^*(S_\perp^{*-1})$ is asymptotically a quadratic form in $\bs\zeta_T$ with a symmetric idempotent matrix of rank $(L-1)(n-1)$. The null distribution and the consistency under fixed alternatives follows by the same argument as in Proposition~\ref{thm:beta_SP}.
\end{proof}

\begin{proof}[Proof of Proposition~\ref{thm:weak_Q_ISP}]
The proof follows by the same argument in the proof of Proposition~\ref{thm:weak_Q}. 
Note that
under Assumption~\ref{Tass:ISPweak}(iii),  the influence function 
can be written as, uniformly on
$\mc B$,
\begin{align*}
\sqrt T
\bar g_T^{(Z^*)}(\beta,\hat\pi_T,\hat\omega_T)
&=
\sqrt T
\bar g_T^{(Z^*)}(\beta,\pi_0,\omega_0)
-G_{\pi,T}^*(\beta)M^{-1}
\frac1{\sqrt T}\sum_{t=1}^T m_t(\pi_0) \\
&+G_{\omega,T}^*(\beta)
\frac1{\sqrt T}\sum_{t=1}^T s_t(\pi_0,\omega_0)
+o_p(1) =
\mc G_T^*(\beta)
+\sqrt T\,\bs\psi_T^*(\beta)
+o_p(1).
\end{align*}
Assumption~\ref{Tass:ISPweak}(ii) and Lemma~\ref{lem:weak3_glb} imply
$
\sqrt T\,
\bar g_T^{(Z^*)}(\beta,\hat\pi_T,\hat\omega_T)
\Rightarrow
\mc G^*(\beta)+\bs\Psi^*(\beta)
$
uniformly on \(\mc B\) and \eqref{eq:Q_ISP_process_limit} follows from Assumption~\ref{Tass:ISPweak}(iv), continuity and
uniform positive definiteness of \(\Omega^*(\beta)\).
At \(\beta=\beta_0\), 
$\bs\Psi^*(\beta_0)=\bs0$ by Lemma~\ref{lem:weak3_glb} and hence
\[
\sqrt T\,
\bar g_T^{(Z^*)}(\beta_0,\hat\pi_T,\hat\omega_T)
\Rightarrow
\mc G^*(\beta_0)
\sim
N\!\left(\bs0,\Omega^*(\beta_0)\right),
\]
which proves $Q_T^*(\beta_0)\Rightarrow\chi^2_{L(n-1)}$ and the limits of the continuous updating estimator as well as the associated $J$-statistic.
\end{proof}

\begin{lemma}\label{lem:ISP_ass}
Under Assumption~\ref{ass:strong_exogeneity}, $\E[\boldsymbol\varepsilon_{2t}\mid z_t]=\bs 0$,
$\E[\boldsymbol\varepsilon_{2t}\mid \varepsilon_{1t}]=\bs 0$ a.s., and $\E[v_t\varepsilon_t]=\bs 0$.
\end{lemma}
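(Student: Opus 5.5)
The plan is to derive all three claims from the defining property of strong exogeneity, $\E[\bs\varepsilon_{2t}\mid \varepsilon_{1t},v_t]=\bs 0$ a.s., by using the tower property of conditional expectations and the fact that both $z_t$ and $\varepsilon_{1t}$ are measurable with respect to $\bs\sigma(\varepsilon_{1t},v_t)$.

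\textbf{Step 1: conditioning on $z_t$.} Since $z_t=f(\varepsilon_{1t},v_t)$ with $f$ Borel measurable, we have $\bs\sigma(z_t)\subseteq\bs\sigma(\varepsilon_{1t},v_t)$. The law of iterated expectations then gives
\[
\E[\bs\varepsilon_{2t}\mid z_t]=\E\big[\E[\bs\varepsilon_{2t}\mid \varepsilon_{1t},v_t]\mid z_t\big]=\bs 0 \quad\text{a.s.}
\]
Integrability of $\bs\varepsilon_{2t}$ is guaranteed by the finite moments in Assumption~\ref{Tass:base}(ii), so every conditional expectation used here is well defined.

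\textbf{Step 2: conditioning on $\varepsilon_{1t}$.} The same argument applies with $\bs\sigma(\varepsilon_{1t})\subseteq\bs\sigma(\varepsilon_{1t},v_t)$, which yields $\E[\bs\varepsilon_{2t}\mid \varepsilon_{1t}]=\bs 0$ a.s.

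\textbf{Step 3: the cross moment $\E[v_t\varepsilon_t]$.} Treat the two components separately.
\begin{itemize}
\item For the first component, $\E[v_t\varepsilon_{1t}]=0$ holds directly, because Assumption~\ref{ass:strong_exogeneity} requires $v_t$ to be uncorrelated with $\varepsilon_{1t}$.
\item For the non-target block, $v_t$ is $\bs\sigma(\varepsilon_{1t},v_t)$-measurable, so
\[
\E[v_t\bs\varepsilon_{2t}]=\E\big[v_t\,\E[\bs\varepsilon_{2t}\mid\varepsilon_{1t},v_t]\big]=\bs 0 .
\]
\end{itemize}
Stacking the two components gives $\E[v_t\varepsilon_t]=\bs 0$.

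\textbf{Main obstacle: integrability of $v_t\bs\varepsilon_{2t}$.} The "take out what is known" step in Step 3 needs $v_t\bs\varepsilon_{2t}$ to be integrable. Assumption~\ref{ass:strong_exogeneity} implicitly supplies a second moment for $v_t$, since uncorrelatedness with $\varepsilon_{1t}$ presupposes that covariance exists. With $\E[v_t^2]<\infty$ and $\E\|\bs\varepsilon_{2t}\|^2<\infty$, Cauchy--Schwarz gives $\E|v_t\varepsilon_{jt}|<\infty$, and the step is valid.

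If finite variance of $v_t$ were not granted, I would fall back on a truncation argument. For each $K$, define $v_t^{K}:=v_t\mathbf 1\{|v_t|\le K\}$; this is bounded, so $\E[v_t^{K}\bs\varepsilon_{2t}]=\bs 0$. Letting $K\to\infty$ and applying dominated convergence, with dominating function $|v_t|\,\|\bs\varepsilon_{2t}\|$, recovers the claim whenever $\E[v_t\bs\varepsilon_{2t}]$ is defined.
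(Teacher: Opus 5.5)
Your proposal is correct and follows the paper's proof essentially verbatim. It applies the tower property to $\bs\sigma(z_t)\subseteq\bs\sigma(\varepsilon_{1t},v_t)$ and to $\bs\sigma(\varepsilon_{1t})\subseteq\bs\sigma(\varepsilon_{1t},v_t)$, uses the assumed uncorrelatedness for $\E[v_t\varepsilon_{1t}]$, and uses iterated expectations for $\E[v_t\bs\varepsilon_{2t}]$. The differences are cosmetic: you take $v_t$ out of $\E[\cdot\mid\varepsilon_{1t},v_t]$ directly rather than first conditioning on $v_t$ as the paper does, and you spell out the integrability of $v_t\bs\varepsilon_{2t}$, which the paper leaves implicit.
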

\begin{proof}[Proof of Lemma~\ref{lem:ISP_ass}]
Since $z_t=f(\varepsilon_{1t},v_t)$ is measurable with respect to the $\bs\sigma(\varepsilon_{1t},v_t)$, the law of iterated expectations and Assumption~\ref{ass:strong_exogeneity} give
$
\E[\boldsymbol\varepsilon_{2t}\mid z_t] = \E\left[\E[\boldsymbol\varepsilon_{2t}\mid \varepsilon_{1t},v_t]\mid z_t\right] =\E[\bs 0\mid z_t] = \bs 0.$
Similarly,
$ \E[\boldsymbol\varepsilon_{2t}\mid \varepsilon_{1t}] = \E\left[\E[\boldsymbol\varepsilon_{2t}\mid \varepsilon_{1t},v_t]\mid \varepsilon_{1t}\right] =\bs 0.$
Finally, $\E[v_t\varepsilon_{1t}]=0$ holds by assumption, while $ \E[v_t\boldsymbol\varepsilon_{2t}]
= \E\left[v_t\,\E[\boldsymbol\varepsilon_{2t}\mid v_t]\right]= \E\left[v_t\,\E\!\left[\E[\boldsymbol\varepsilon_{2t}\mid \varepsilon_{1t},v_t]\mid v_t\right]\right] = \bs 0.$
\end{proof}

\begin{lemma}\label{lem:G_omega_star}
Under the conditions in Proposition~\ref{thm:beta_ISP}, denote $G_{\omega, i}^*:=\E[\partial g_{i,t}^*(\beta_0,\pi_0,\omega_0)/\partial\omega']$, if  $\E[\bs\varepsilon_{2t}\bs\varepsilon_{2t}'\mid \varepsilon_{1t}]=\Sigma_{22}$ a.s., then 
\[
G_{\omega, 1}^* = 0, \qquad 
G_{\omega, i}^*
=
C_i\,\E[\varepsilon_{1t}^{i-1}],
\;
\text{for } \ i =2, \ldots, L,
\]
for some finite constant matrix $C_i$. Consequently, if $\E[\varepsilon_{1t}^{i-1}]=0$ for all $2 \leq i \leq L$, then $G_\omega^*=0$ and the estimation of $\Sigma_u$ does not influence the asymptotic variance of the GMM estimator for $\beta$.
\end{lemma}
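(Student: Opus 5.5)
I will differentiate each block $g_{i,t}^*$ with respect to $\omega$ directly, evaluate at the true values, and take expectations using the structure at the truth: $u_t(\pi_0)=B_0\varepsilon_t$ and $r_t(\beta_0,\pi_0)=(\mc J-\beta_0\iota_1')B_0\varepsilon_t=:R\,\bs\varepsilon_{2t}$, with $R=B_{22}-\beta_0b_2'$.

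\emph{First block.} The block $g_{1,t}^*=z_t r_t(\beta,\pi)$ does not involve $\omega$, so $G_{\omega,1}^*=0$ immediately.

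\emph{Blocks $i\ge2$.} By the chain rule,
\[
\frac{\partial g_{i,t}^*}{\partial\omega'}=r_t\,h_i'(z_t^*)\,\frac{\partial z_t^*}{\partial\omega'},
\qquad h_i'(z)=iz^{i-1}.
\]
Write $z_t^*=a(\omega)'u_t/\dot\sigma_1(\omega)$ with $a(\omega)=\Sigma_u(\omega)^{-1}b_0$. Using $\dd\Sigma^{-1}=-\Sigma^{-1}(\dd\Sigma)\Sigma^{-1}$, one obtains
\[
\partial(a'u_t)/\partial\omega'=-\big(u_t'\Sigma_u^{-1}\otimes b_0'\Sigma_u^{-1}\big),
\qquad
\partial\dot\sigma_1/\partial\omega'=-\tfrac{1}{2\dot\sigma_1}\big(b_0'\Sigma_u^{-1}\otimes b_0'\Sigma_u^{-1}\big).
\]
Hence
\[
\frac{\partial z_t^*}{\partial\omega'}=-\frac{1}{\dot\sigma_1}\big(u_t'\Sigma_u^{-1}\otimes b_0'\Sigma_u^{-1}\big)+\frac{z_t^*}{2\dot\sigma_1^2}\big(b_0'\Sigma_u^{-1}\otimes b_0'\Sigma_u^{-1}\big).
\]
At the truth the footnote computation gives $b_0'\Sigma_u^{-1}u_t=\varepsilon_{1t}/\sigma_1^2$, $\dot\sigma_1=\sigma_1^{-1}$, $z_t^*=\varepsilon_{1t}/\sigma_1$, and $u_t'\Sigma_u^{-1}=\varepsilon_t'\Sigma_\varepsilon^{-1}B_0^{-1}$.

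\emph{Expectations.} Substituting these expressions, $G_{\omega,i}^*$ is a sum of two terms.

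The second term is a constant matrix times
\[
\E\big[R\bs\varepsilon_{2t}\,\varepsilon_{1t}^{i-1}\varepsilon_{1t}\big]=R\,\E\big[\varepsilon_{1t}^{i}\,\E[\bs\varepsilon_{2t}\mid\varepsilon_{1t}]\big]=0
\]
by Lemma~\ref{lem:ISP_ass}.

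The first term requires
\[
\E\big[R\bs\varepsilon_{2t}\,\varepsilon_{1t}^{i-1}\,\varepsilon_t'\big]\Sigma_\varepsilon^{-1}B_0^{-1}\otimes b_0'\Sigma_u^{-1}.
\]
Split $\varepsilon_t'=(\varepsilon_{1t},\bs\varepsilon_{2t}')$.
\begin{itemize}
\item The $\varepsilon_{1t}$ component gives $\E[\varepsilon_{1t}^{i}\bs\varepsilon_{2t}]=0$, again by mean independence.
\item The $\bs\varepsilon_{2t}$ component gives $\E\big[\varepsilon_{1t}^{i-1}\E[\bs\varepsilon_{2t}\bs\varepsilon_{2t}'\mid\varepsilon_{1t}]\big]=\Sigma_{22}\E[\varepsilon_{1t}^{i-1}]$, using the conditional homoskedasticity assumption.
\end{itemize}
Collecting the constants, which involve $i$, $\sigma_1$, $R$, $\Sigma_{22}$, $B_0^{-1}$, $\Sigma_\varepsilon^{-1}$ and $\Sigma_u^{-1}b_0$, into $C_i$ yields $G_{\omega,i}^*=C_i\E[\varepsilon_{1t}^{i-1}]$. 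This is finite under the moment bounds in Assumption~\ref{Tass:ISP}(iv), which also justify interchanging differentiation and expectation.

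\emph{Consequence.} If all these moments vanish, then $G_\omega^*=0$, and the term $G_\omega^*s_t$ drops out of $g_t^{(Z^*),\perp}$ in Proposition~\ref{thm:beta_ISP}. Therefore $S_\perp^*$, and with it $V_\beta^*$, is unaffected by the estimation of $\Sigma_u$.

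\emph{Main obstacle.} The step I expect to need the most care is the derivative bookkeeping for $\partial z_t^*/\partial\omega'$ under the vec parametrization and the ambient-derivative convention for the symmetric matrix $\Sigma_u$. The key structural point is that the term proportional to $z_t^*$ contributes $\E[\varepsilon_{1t}^i\bs\varepsilon_{2t}]=0$. Only the term containing $u_t$ can pick up $\bs\varepsilon_{2t}$, and so only that term, through the conditional second moment, leaves a nonzero factor $\E[\varepsilon_{1t}^{i-1}]$.
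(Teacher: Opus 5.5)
Your proposal is correct and follows essentially the same route as the paper. Both differentiate $z_t^*$ in $\omega$ and evaluate at the truth. Both then remove every piece proportional to $\varepsilon_{1t}$ via $\E[\bs\varepsilon_{2t}\mid\varepsilon_{1t}]=\bs 0$, and use conditional homoskedasticity on the $\bs\varepsilon_{2t}$ piece to extract $\Sigma_{22}\E[\varepsilon_{1t}^{i-1}]$. The only difference is cosmetic: the paper merges the two $\varepsilon_{1t}$-proportional terms into $-\varepsilon_{1t}/(2\dot\sigma_1)$ and writes the remaining term via $B_2'=\mc J B_0'$. You instead keep the terms separate and use $u_t'\Sigma_u^{-1}=\varepsilon_t'\Sigma_\varepsilon^{-1}B_0^{-1}$, which yields the same $C_i$.
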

\begin{proof}[Proof of Lemma~\ref{lem:G_omega_star}]
For $i=1$, since the raw instrument does not depend on $\omega$, $G_{\omega,1}^* = 0$. For $i = 2, \ldots, L$, taking differential of $z_t^*$ yields
    \[
d z_t^*(\beta,\pi,\omega) = -\frac{b(\beta)'(\Sigma_u^{-1} (d \Sigma_u) \Sigma_u^{-1})u_t(\pi)}{\dot\sigma_1(\beta,\omega)} +
\frac{z_t^*(\beta,\pi,\omega)}{2\dot\sigma_1^2(\beta,\omega)} b(\beta)'(\Sigma_u^{-1} (d \Sigma_u) \Sigma_u^{-1}) b(\beta)
\]
and hence, at the true value
\[
\frac{\partial z_t^*(\beta_0,\pi_0,\omega_0)}{\partial \omega'} =
-\frac{\varepsilon_{1t}}{2\dot\sigma_1} (b_0' \Sigma_{u,0}^{-1} \otimes b_0' \Sigma_{u,0}^{-1})  -
\frac{1}{\dot\sigma_1} (\bs\varepsilon_{2t}'  \mc J B_0'  \Sigma_{u,0}^{-1} \otimes  b_0' \Sigma_{u,0}^{-1}),
\]
where $\dot\sigma_1 \coloneqq \dot\sigma_1(\beta_0,\omega_0) = \sigma_1^{-1}$. Note that $r_t(\beta_0,\pi_0) = (B_{22} - \beta_0 b_2') \bs\varepsilon_{2t}$, the component proportional
to $\varepsilon_{1t}$ has expectation zero by
$\E[\bs\varepsilon_{2t}\mid\varepsilon_{1t}]=\bs0$. Conditional
homoskedasticity therefore gives
\begin{align*}
G_{\omega,i}^*
&=
-\frac{i}{\dot\sigma_1}
\E\left[
\left(\frac{\varepsilon_{1t}}{\sigma_1}\right)^{i-1}
\right]
\left(
(B_{22}-\beta_0b_2')\Sigma_{22}B_2'\Sigma_{u,0}^{-1}
\otimes
b_0'\Sigma_{u,0}^{-1}
\right) =
C_i\,\E[\varepsilon_{1t}^{i-1}].
\end{align*}
This proves the Lemma with $C_i := -\frac{i}{\dot\sigma_1\sigma_1^{i-1}}
\left( (B_{22}-\beta_0b_2')\Sigma_{22}B_2'\Sigma_{u,0}^{-1} \otimes b_0'\Sigma_{u,0}^{-1} \right)$.
\end{proof}

\begin{lemma}\label{lem:G_pi_star}
    Under the conditions in Proposition~\ref{thm:beta_ISP}, denote $G_{\pi, i}^*:=\E[\partial g_{i,t}^*(\beta_0,\pi_0,\omega_0)/\partial\pi']$, then 
    $G_{\pi,1}^*=-\E[z_t Y_{t-1}']\otimes(\mc J-\beta_0\iota_1'),$
    and for $2 \leq i\leq L$,
    \[
    G_{\pi,i}^*
    =
    -\E\left[\left(\frac{\varepsilon_{1t}}{\sigma_1}\right)^i Y_{t-1}'\right]\otimes(\mc J-\beta_0\iota_1')
-\frac{i}{\dot\sigma_1} (B_{22} - \beta_0 b_2') 
\E\left[\left(\frac{\varepsilon_{1t}}{\sigma_1}\right)^{i-1} \bs\varepsilon_{2t}(Y_{t-1}'\otimes b_0'\Sigma_{u,0}^{-1})\right].
    \]
    Consequently, if $z_t$ is lag exogenous (s.t., $\E[z_t Y_{t-1}']=0$),
    $\varepsilon_t$ is serially independent and $\E[Y_{t-1}] = 0$, then $G_\pi^*=0$ and estimation of $\pi$ does no influence the asymptotic variance of the GMM estimator for $\beta$.
\end{lemma}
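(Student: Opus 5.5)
The plan is to differentiate each block of $g_t^{(Z^*)}$ with respect to $\pi$ directly and evaluate at $(\beta_0,\pi_0,\omega_0)$. The tools are the identities $u_t(\pi)=y_t-(Y_{t-1}'\otimes I_n)\pi$, hence $\partial u_t(\pi)/\partial\pi'=-(Y_{t-1}'\otimes I_n)$, and $r_t(\beta,\pi)=(\mc J-\beta\iota_1')u_t(\pi)$.

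For the first block, $g_{1,t}^*=z_t(\mc J-\beta_0\iota_1')u_t(\pi)$, so
\[
\partial g_{1,t}^*/\partial\pi'=-z_t(\mc J-\beta_0\iota_1')(Y_{t-1}'\otimes I_n)=-z_tY_{t-1}'\otimes(\mc J-\beta_0\iota_1'),
\]
using the mixed-product rule $A(Y'\otimes I_n)=Y'\otimes A$ for a row vector $Y'$. Taking expectations gives $G_{\pi,1}^*$.

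For $i\ge2$, the product rule splits the derivative into two terms.

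\textbf{Term (a).} This is $h_i(z_t^*)\,\partial r_t/\partial\pi'=-(z_t^*)^iY_{t-1}'\otimes(\mc J-\beta_0\iota_1')$. By the earlier argument (footnote in Section~\ref{sec:intuition} together with Lemma~\ref{lem:ISP_ass}), $z_t^*(\beta_0,\pi_0,\omega_0)=\varepsilon_{1t}/\sigma_1$, which yields the first displayed term.

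\textbf{Term (b).} This is $r_t(\beta_0,\pi_0)\,h_i'(z_t^*)\,\partial z_t^*/\partial\pi'$. Since $\dot\sigma_1$ does not depend on $\pi$, we have
\[
\partial z_t^*/\partial\pi'=-\dot\sigma_1^{-1}b_0'\Sigma_{u,0}^{-1}(Y_{t-1}'\otimes I_n)=-\dot\sigma_1^{-1}(Y_{t-1}'\otimes b_0'\Sigma_{u,0}^{-1}).
\]
Inserting $h_i'(z_t^*)=i(\varepsilon_{1t}/\sigma_1)^{i-1}$ and $r_t(\beta_0,\pi_0)=(B_{22}-\beta_0b_2')\bs\varepsilon_{2t}$ (from the proof of Proposition~\ref{thm:beta_ISP}) and taking expectations gives the second term. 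The moment bounds in \ref{Tass:ISP}(iv) justify exchanging differentiation and expectation.

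For the consequence, note first that $Y_{t-1}$ is measurable with respect to the past, because stability (\ref{Tass:base}(i)) gives a causal MA representation in $\varepsilon_{t-1},\varepsilon_{t-2},\dots$. Serial independence of $\varepsilon_t$ then makes $Y_{t-1}$ independent of $\varepsilon_t$. The expectations therefore factor:
\begin{itemize}
\item $\E[(\varepsilon_{1t}/\sigma_1)^iY_{t-1}']=\E[(\varepsilon_{1t}/\sigma_1)^i]\,\E[Y_{t-1}']=0$ by $\E[Y_{t-1}]=0$;
\item $\E[\varepsilon_{1t}^{i-1}\bs\varepsilon_{2t}(Y_{t-1}'\otimes\cdot)]=\E[\varepsilon_{1t}^{i-1}\bs\varepsilon_{2t}]\,(\E[Y_{t-1}']\otimes\cdot)=0$, which also vanishes independently through $\E[\bs\varepsilon_{2t}\mid\varepsilon_{1t}]=\bs0$;
\item $G_{\pi,1}^*=0$ by lag exogeneity.
\end{itemize}
With $G_\pi^*=0$, the term $G_\pi^*M^{-1}m_t$ in $g_t^{(Z^*),\perp}$ drops out, so $S_\perp^*$ and the asymptotic variance in Proposition~\ref{thm:beta_ISP} are unaffected.

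\textbf{Main obstacle.} The delicate point is the hypothesis $\E[Y_{t-1}]=0$. Because $Y_{t-1}$ contains the constant $1$, this literally fails. I would interpret it as applying to the non-constant (demeaned) components and handle the intercept column separately. For that column the first term reads $\E[\varepsilon_{1t}^i]\otimes(\cdot)$, which does not vanish in general. I expect the resolution to be that this component is orthogonal to $(\mc J-\beta_0\iota_1')$ after interaction with the $\nu$-estimation, or else that the statement implicitly assumes a model without intercept. I would check this carefully, and otherwise restrict the claim to the slope parameters.
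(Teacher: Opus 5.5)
Your proposal is correct and follows the paper's proof essentially step for step: the same product-rule split for $i\ge 2$, the same derivative $\partial z_t^*/\partial\pi'=-\dot\sigma_1^{-1}(Y_{t-1}'\otimes b_0'\Sigma_{u,0}^{-1})$, the same evaluation at $z_t^*=\varepsilon_{1t}/\sigma_1$ and $r_t=(B_{22}-\beta_0b_2')\bs\varepsilon_{2t}$, and the same factorization through independence of $\varepsilon_t$ and $Y_{t-1}$. On the intercept, your second guess is the intended reading: the paper means a VAR without deterministic terms, so $Y_{t-1}=(y_{t-1}',\ldots,y_{t-p}')'$ (see the footnote in Appendix~\ref{app:adaptivity} and condition (ii) of Corollary~\ref{cor:ISP_adapt}); the orthogonality route cannot work, because with an intercept the corresponding block of $G_{\pi,2}^*$ equals $-\E[(\varepsilon_{1t}/\sigma_1)^2](\mc J-\beta_0\iota_1')=-(\mc J-\beta_0\iota_1')\neq\bs 0$.
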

\begin{proof}[Proof of Lemma~\ref{lem:G_pi_star}]
    For $i=1$, we have $ \partial g_{1t}^*(\beta,\pi)/\partial\pi'
=
z_t(\mc J-\beta\iota_1')\frac{\partial u_t(\pi)}{\partial\pi'}
=
-(z_t Y_{t-1}')\otimes(\mc J-\beta\iota_1').$ For $i \geq 2$, note that 
\[
\frac{\partial z_t^*(\beta,\pi,\omega)}{\partial\pi'}
=
-\frac{1}{\dot\sigma_1(\beta,\omega)}
\left(
Y_{t-1}'\otimes
b(\beta)'\Sigma_u(\omega)^{-1}
\right),
\]
and hence
\begin{align*}
    \frac{\partial g_{it}^*(\beta,\pi,\omega)}{\partial\pi'}
&=
-( (z_t^*(\beta,\pi,\omega))^i Y_{t-1}')\otimes(\mc J-\beta\iota_1')\\
\hspace{3ex} &-\frac{i}{\dot\sigma_1(\beta,\omega)}(z_t^*(\beta,\pi,\omega))^{i-1}r_t(\beta,\pi)(Y_{t-1}'\otimes b(\beta)'\Sigma_u(\omega)^{-1}).
\end{align*}
Since $z_t^*(\beta_0,\pi_0,\omega_0) = \varepsilon_{1t}/\sigma_1$ and $r_t(\beta_0,\pi_0) = (B_{22} - \beta_0 b_2') \bs\varepsilon_{2t}$, 
taking expectation and evaluating the expression at the true value yield the result. Hence, $G_\pi^* = 0$ holds if $\E[z_t Y_{t-1}']=0$, and for all $2 \leq i \leq L$, $\E[\varepsilon_{1t}^i Y_{t-1}']=0$ and $\E\left[\varepsilon_{1t}^{i-1}\,\bs\varepsilon_{2t}\,Y_{t-1}'\right]=0$. To see that the latter condition is satisfied under our assumption, note that $\E[\varepsilon_{1t}^i Y_{t-1}']=\E[\varepsilon_{1t}^i]\E[Y_{t-1}'] = 0$ and
\[
\E[\varepsilon_{1t}^{i-1}\bs\varepsilon_{2t}Y_{t-1}']
=
\E[\varepsilon_{1t}^{i-1}\bs\varepsilon_{2t}]\, \E[Y_{t-1}']
=
\E\big[\varepsilon_{1t}^{i-1}\E[\bs\varepsilon_{2t}\mid \varepsilon_{1t}]\big]\,\E[Y_{t-1}']
=0,
\]
by serial independence of $\varepsilon_{t}$ (and hence independent of $Y_{t-1}$), and $\E[Y_{t-1}]=0$. This establishes $G_\pi^*=0$.
\end{proof}

The following lemma clarifies what Assumption~\ref{ass:weak1} implies for higher-order SP relevance in the classical linear measurement-error model.
\begin{lemma}
\label{lem:weak_lin_proxy}
Let $z_{t,T}=\phi_T\varepsilon_{1t}+v_t,$ with $\phi_T=\frac{\phi^*}{\sqrt T}$, $\phi^*\neq0,$
where the noise $v_t \sim (0, \sigma_v^2)$ with $\sigma_v > 0$, is independent of $\varepsilon_{1t}$ and all expectations appearing below exist. Then, Assumption~\ref{ass:weak1} holds with $\gamma_1=\phi^*\sigma_1^2$, and
\[
\forall i =1, \ldots, L, \quad
\rho_{i,T}=O(T^{-k_i/2}),
\quad \text{with }\ k_i \geq 1
\]
In particular, 
\[
\rho_{2,T} = 
\begin{cases}
    O(T^{-1}) & \text{if } \ \E[\varepsilon_{1t}^3]  \neq 0 \\
    0 & \text{else. } 
\end{cases}
\quad
\text{and}
\quad
\rho_{3,T} = O(T^{-1/2}) + o(T^{-1/2}).
\]
\end{lemma}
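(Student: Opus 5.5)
The plan is to compute $\rho_{i,T}:=\E[z_{t,T}^i\varepsilon_{1t}]$ directly by expanding with the binomial theorem and using independence of $v_t$ and $\varepsilon_{1t}$. This gives
\[
\rho_{i,T}=\sum_{j=0}^{i}\binom{i}{j}\phi_T^{\,j}\,\E[\varepsilon_{1t}^{j+1}]\,\E[v_t^{i-j}].
\]
For $i=1$, the $j=0$ term vanishes because $\E[\varepsilon_{1t}]=0$. Hence $\rho_{1,T}=\phi_T\sigma_1^2$ and $\sqrt T\rho_{1,T}\to\phi^*\sigma_1^2=\gamma_1$, which verifies Assumption~\ref{ass:weak1}.

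For general $i$, the $j=0$ term is always $\E[\varepsilon_{1t}]\E[v_t^i]=0$. Every surviving term therefore carries at least one factor $\phi_T=\phi^*/\sqrt T$, so $\rho_{i,T}=O(T^{-1/2})$. More precisely, let $k_i$ be the smallest $j\ge1$ with $\E[\varepsilon_{1t}^{j+1}]\E[v_t^{i-j}]\neq0$. Then $\rho_{i,T}=O(T^{-k_i/2})$, and $\rho_{i,T}\equiv 0$ if no such $j$ exists. This gives the general claim with $k_i\ge1$. I would state that the assumed existence of all expectations makes each term finite, so the finite sum is dominated by its lowest power of $\phi_T$.

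For the special cases, the quadratic proxy has
\[
\rho_{2,T}=2\phi_T\sigma_1^2\E[v_t]+\phi_T^2\E[\varepsilon_{1t}^3].
\]
Here I need $\E[v_t]=0$, which is the reading of $v_t\sim(0,\sigma_v^2)$. Then $\rho_{2,T}=(\phi^{*2}/T)\E[\varepsilon_{1t}^3]$, which is $O(T^{-1})$ if the target shock is skewed and exactly zero otherwise.

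For the cubic proxy,
\[
\rho_{3,T}=3\phi_T\sigma_1^2\sigma_v^2+3\phi_T^2\E[\varepsilon_{1t}^3]\E[v_t]+\phi_T^3\E[\varepsilon_{1t}^4].
\]
The middle term vanishes, the first term is of exact order $T^{-1/2}$ since $\sigma_v>0$, and the last term is $O(T^{-3/2})=o(T^{-1/2})$. This yields the stated decomposition. The leading term $3\phi^*\sigma_1^2\sigma_v^2/\sqrt T$ does not vanish, which shows that a weak linear proxy makes its cubic synthetic weak at the same rate.

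The only real obstacle is bookkeeping rather than analysis. First, the odd moments of $v_t$ must be handled correctly: the $\E[v_t]=0$ assumption is what kills the $T^{-1/2}$ term in $\rho_{2,T}$, while odd moments $\E[v_t^3]$ may enter higher $i$ and can only lower $k_i$ to $1$, never violate $k_i\ge1$. Second, the moment-existence condition must cover $\E[\varepsilon_{1t}^{i+1}]$ and $\E[v_t^i]$ up to $i=L$.
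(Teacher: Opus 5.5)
Your proposal is correct and follows essentially the same route as the paper: a binomial expansion of $\E[z_{t,T}^i\varepsilon_{1t}]$, independence together with $\E[\varepsilon_{1t}]=0$ to eliminate the $j=0$ term, $k_i$ defined as the smallest surviving power of $\phi_T$, and direct computation of $\rho_{2,T}$ and $\rho_{3,T}$. Two small points that the paper leaves implicit are handled explicitly in your version: the reliance on $\E[v_t]=0$, and the case where no $j$ yields a nonzero term, so that $\rho_{i,T}\equiv 0$.
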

\begin{proof}[Proof of Lemma~\ref{lem:weak_lin_proxy}]
Direct calculation of $\E[z_{t,T} \varepsilon_{1t}]$ yields $\gamma_1=\phi^*\sigma_1^2$. 
For any $i = 1, \ldots, L$
\begin{align*}
    \rho_{i,T} 
    &= \E[z_{t,T}^i\varepsilon_{1t}] = \E\!\left[\sum_{k=0}^i {i\choose k}\phi_T^k\varepsilon_{1t}^{k+1}v_t^{i-k}\right] =
\sum_{k=0}^i {i\choose k}\phi_T^k
\E[\varepsilon_{1t}^{k+1}v_t^{i-k}]\\
&=
\sum_{k=1}^i
{i\choose k}
(\phi^*)^kT^{-k/2}\E[\varepsilon_{1t}^{k+1}] \E[v_t^{i-k}],
\end{align*}
where the second line follows from independence between $\varepsilon_{1t}$ and $v_t$ and that $\E[\varepsilon_{1t}]=0$.
Hence, for each $i = 2, \ldots, L$, 
the rate of $\rho_{i,T}$ is governed by $k_i := \min_{k \in \{1, \ldots, i\}}\{k: \E[\varepsilon_{1t}^{k+1}] \E[v_t^{i-k}] \neq 0\}$. Note that $k_i \geq 1$ by construction and the conclusion follows.
Direct calculation yields $\rho_{2,T} = (\phi^*)^2\E[\varepsilon_{1t}^{3}] T^{-1}$ and $\rho_{3,T}
=
3\phi^*\sigma_1^2\sigma_v^2T^{-1/2}
+
(\phi^*)^3\E[\varepsilon_{1t}^{4}]T^{-3/2}.$ This completes the proof.
\end{proof}

Lemma~\ref{lem:weak_lin_proxy} shows that, for a local linear measurement-error proxy, Assumption~\ref{ass:weak1} implies that all $\rho_{i,T}$ decreases at a rate of $T^{-1/2}$ or faster, and therefore, all SPs are weak, i.e., $\bs\rho_{T} = O(T^{-1/2})$. However, the speed, at which $\rho_{i,T}$ vanishes, does not increase monotonically with $i$. As shown in the Lemma, both $\rho_{1,T}$ and $\rho_{3,T}$ decrease at the same rate.
With this being said, if the proxy $z_t= f(\varepsilon_{1,t}, v_t)$ is nonlinear in the target shock $\varepsilon_{1,t}$, higher-order SPs may remain strongly relevant even though the original proxy is weak. 
As an example, let $\varepsilon_{1t}=X_t-1$ with $X_t\sim\mathrm{Exp}(1)$, then
$\E[\varepsilon_{1t}] = 0$, $\sigma_1^2 = \E[\varepsilon_{1t}^2] = 1$, 
$\E[\varepsilon_{1t}^3] = 2$, $\E[\varepsilon_{1t}^4] = 9$ and $\E[\varepsilon_{1t}^5] = 44$.\footnote{Recall that $\E[X_t^i] = i !$. }
To better illustrate this, let $\tilde f(\varepsilon_{1,t}) = \varepsilon_{1,t}^2 - 2 \varepsilon_{1,t} -1$ and define a proxy
\[
z_{t,T} = \tilde f(\varepsilon_{1,t}) + \frac{\gamma_1 \varepsilon_{1,t}}{\sqrt{T}}.
\]
Note that
$\E[\tilde f(\varepsilon_{1,t})\varepsilon_{1,t}] = 0$,
$\E[\tilde f(\varepsilon_{1,t})^2\varepsilon_{1,t}] = 16$,
and $\E[\tilde f(\varepsilon_{1,t})\varepsilon_{1,t}^2] = 4$. 
Hence, $\E[z_{t,T} \varepsilon_{1,T}] = \gamma_1/\sqrt{T}$ and Assumption~\ref{ass:weak1} is respected, whereas
\begin{align*}
    \E[z_{t,T}^2\varepsilon_{1t}]
= \E[\tilde f(\varepsilon_{1,t})^2\varepsilon_{1,t}] +  \frac{2 \gamma_1 }{\sqrt{T}}\E[\tilde f(\varepsilon_{1,t})\varepsilon_{1,t}^2] + \frac{ \gamma_1^2 }{T}\E[\varepsilon_{1,t}^3] =
16+8 \gamma_1 T^{-1/2}+2\gamma_1^2T^{-1}.
\end{align*}
Hence, $\rho_{2,T}\to16\neq0$ as $T \to \infty$. 
Consequently, the second-order synthetic remains relevant despite the weakness of the original proxy, so that
$\lim_{T \to \infty} \bs \rho_T = \bs \rho_\infty$ with $\rho_\infty \neq 0$ and the standard asymptotics apply.

\begin{lemma}\label{lem:Stein}
Under Assumption~\ref{ass:strong_exogeneity} and provided that the expectations exist, for continuously differentiable $h:\R\to\R$, the implied internal moment function $g_{h,t}^*(\beta) := g_{h,t}^*(\beta, \pi_0, \omega_0)  =h(z_t^*(\beta,\pi_0,\omega_0))  r_t(\beta,\pi_0)$ satisfies 
\begin{align*}
	\E_T\left[
\frac{\partial g_{h,t}^*(\beta_0)}{\partial\beta'}  \right]
  =\sigma_{1}\left( \E_T \left[h'(X_{t,T})C_0 \bs\varepsilon_{2t}\bs\varepsilon_{2t}' \Sigma_{22}^{-1} C_0^{-1} \right]-\E_T[X_{t,T}h(X_{t,T})] I_{n-1} \right),
\end{align*}
where $C_0 = C(\beta_0)$ with $C(\beta) := (\mc J-\beta\iota_1') B_2$ and $X_{t,T} := \varepsilon_{1t}/\sigma_1$ is the standardized target shock. If 
$\E[\bs\varepsilon_{2t}\bs\varepsilon_{2t}'\mid \varepsilon_{1t}]=\Sigma_{22}$ a.s., then
\begin{align}\label{eq:Stein_general}
	\E_T\left[ \frac{\partial g_{h,t}^*(\beta_0)}{\partial\beta'}  \right]
  =\sigma_{1}\left( \E_T[h'(X_{t,T})]-\E_T[X_{t,T}h(X_{t,T})] \right)I_{n-1},
\end{align}
Therefore, for $h_i(x)=x^i$, we have
\begin{align}\label{eq:Stein_poly}
  	\E_T\left[ \frac{\partial g_{i,t}^*(\beta_0)}{\partial\beta'}  \right]
  =\sigma_{1}\left( i\E_T[X_{t,T}^{i-1}]-\E_T[X_{t,T}^{i+1}]\right)I_{n-1}, 
  \qquad \text{for } \
  i = 1, 2, \ldots.
\end{align}
In particular, for $i =2$ and $i = 3$, we have 
\[
\E_T\left[ \frac{\partial g_{2,t}^*(\beta_0)}{\partial\beta'}  \right] = - \sigma_{1} \E_T[X_{t,T}^{3}] I_{n-1},
  \quad \text{and} \quad 
\E_T\left[ \frac{\partial g_{3,t}^*(\beta_0)}{\partial\beta'}  \right] = \sigma_{1} \left( 3 - \E_T[X_{t,T}^{4}]\right) I_{n-1}.
\]
\end{lemma}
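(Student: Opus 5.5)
The plan is a direct computation of the Jacobian at $\beta_0$, followed by iterated expectations. Throughout, I write $B_2:=B_0\mc J'$ for the last $n-1$ columns of $B_0$ and $R:=\mc J B_0^{-1}$ for the last $n-1$ rows of $B_0^{-1}$. Since $b(\beta)=\iota_1+\mc J'\beta$, we have $\partial b/\partial\beta'=\mc J'$. Since $r_t(\beta,\pi_0)=(\mc J-\beta\iota_1')u_t$, we have $\partial r_t/\partial\beta'=-u_{1t}I_{n-1}$. The product rule then gives
\[
\frac{\partial g_{h,t}^*(\beta)}{\partial\beta'}=h'(z_t^*(\beta))\,r_t(\beta)\,\frac{\partial z_t^*(\beta)}{\partial\beta'}-h(z_t^*(\beta))\,u_{1t}\,I_{n-1}.
\]
Differentiating the ratio defining $z_t^*$ yields
\[
\frac{\partial z_t^*}{\partial\beta'}=\frac{u_t'\Sigma_u^{-1}\mc J'}{\dot\sigma_1}-\frac{z_t^*\,b(\beta)'\Sigma_u^{-1}\mc J'}{\dot\sigma_1^2}.
\]

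Next I evaluate at $\beta_0$, using $\Sigma_u^{-1}=B_0'^{-1}\Sigma_\varepsilon^{-1}B_0^{-1}$. This gives $u_t'\Sigma_u^{-1}=\varepsilon_t'\Sigma_\varepsilon^{-1}B_0^{-1}$ and $b_0'\Sigma_u^{-1}=\sigma_1^{-2}\iota_1'B_0^{-1}$. The paper's footnote in Section~\ref{sec:intuition} already records $\dot\sigma_1=\sigma_1^{-1}$ and $z_t^*(\beta_0)=X_{t,T}$, which relies on $\E[\varepsilon_{1t}\bs\varepsilon_{2t}]=\bs 0$ from Lemma~\ref{lem:ISP_ass}. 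Splitting $\varepsilon_t'\Sigma_\varepsilon^{-1}B_0^{-1}$ into its $\varepsilon_{1t}$ and $\bs\varepsilon_{2t}$ parts, the $\varepsilon_{1t}$ part is $X_{t,T}\,\iota_1'B_0^{-1}\mc J'$, and it exactly cancels the second term. What remains is
\[
\frac{\partial z_t^*(\beta_0)}{\partial\beta'}=\sigma_1\,\bs\varepsilon_{2t}'\Sigma_{22}^{-1}R\mc J'.
\]
The residual is $r_t(\beta_0)=C_0\bs\varepsilon_{2t}$ with $C_0=(\mc J-\beta_0\iota_1')B_2$.

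The step I expect to be the crux is the algebraic identity $R\mc J'=C_0^{-1}$. I would prove it from $B_0^{-1}B_0=I$, which gives $RB_2=I_{n-1}$ and $Rb_0=\bs 0$. Using $\mc J'\mc J=I_n-\iota_1\iota_1'$, we get $\mc J'(\mc J-\beta_0\iota_1')=I_n-b_0\iota_1'$. Hence
\[
R\mc J'C_0=R(I_n-b_0\iota_1')B_2=RB_2-(Rb_0)\iota_1'B_2=I_{n-1},
\]
so $C_0$ is invertible and its inverse is $R\mc J'$.

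For the second term of the Jacobian, $u_{1t}=\varepsilon_{1t}+b_2'\bs\varepsilon_{2t}$. Since $h(X_{t,T})$ is $\bs\sigma(\varepsilon_{1t})$-measurable and Lemma~\ref{lem:ISP_ass} gives $\E_T[\bs\varepsilon_{2t}\mid\varepsilon_{1t}]=\bs 0$, we get $\E_T[h(X_{t,T})u_{1t}]=\sigma_1\E_T[X_{t,T}h(X_{t,T})]$. Combining with the first term gives the general display of Lemma~\ref{lem:Stein}.

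Under conditional homoskedasticity, I condition on $\varepsilon_{1t}$ inside the first expectation. This replaces $\bs\varepsilon_{2t}\bs\varepsilon_{2t}'$ by $\Sigma_{22}$, so the factor $C_0\Sigma_{22}\Sigma_{22}^{-1}C_0^{-1}$ collapses to $I_{n-1}$, which yields \eqref{eq:Stein_general}. Substituting $h_i(x)=x^i$ gives \eqref{eq:Stein_poly}. The cases $i=2$ and $i=3$ then follow from $\E_T[X_{t,T}]=0$ and $\E_T[X_{t,T}^2]=1$. Differentiating under the expectation and the existence of all moments are taken as given by the proviso in the statement.
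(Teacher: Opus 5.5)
Your proposal is correct and follows essentially the same computation as the paper: the product rule, the evaluation of $\partial z_t^*/\partial\beta'$ at $\beta_0$ in which the $\varepsilon_{1t}$-terms cancel, the identity $\mc J B_0^{-1}\mc J' = C_0^{-1}$, mean independence to remove the $b_2'\bs\varepsilon_{2t}$ part of $u_{1t}$, and conditional homoskedasticity to collapse $C_0\Sigma_{22}\Sigma_{22}^{-1}C_0^{-1}$ to $I_{n-1}$. Your check $R\mc J' C_0 = I_{n-1}$ is a slightly cleaner version of the paper's block-matrix verification, because it also proves that $C_0$ is invertible instead of presupposing it.
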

\begin{proof}[Proof of Lemma~\ref{lem:Stein}]
Taking differential of $r_t(\beta)=:r_t(\beta, \pi_0)$ and $z_t^*(\beta)$, yields $\dd r_t(\beta, \pi_0) = -(\varepsilon_{1t}+ b_2' \bs\varepsilon_{2t}) \dd \beta = -u_{1t} \dd\beta$ 
and
	\begin{align*} 
		\dd z_t^*(\beta) &= \frac{(\dd b(\beta))' \Sigma_u^{-1} u_t}{\dot\sigma_1(\beta, \omega_0)}
		- \frac{1}{2} \frac{ b(\beta)' \Sigma_u^{-1} u_t}{\dot\sigma_1(\beta, \omega_0)} \frac{ \dd \dot\sigma_1^2(\beta, \omega_0)}{\dot\sigma_1^2(\beta, \omega_0)}\\
		&= \frac{(\dd b(\beta))' \Sigma_u^{-1} u_t}{\dot\sigma_1(\beta, \omega_0)}
		-  z_t^*(\beta) \frac{ (\dd b(\beta))' \Sigma_u^{-1}  b(\beta)}{\dot\sigma_1^2(\beta, \omega_0)}
	\end{align*}
	At $\beta = \beta_0$, $z_t^*(\beta_0) = \varepsilon_{1t}/\sigma_1 = X_{t,T}$, $\dot\sigma_1(\beta_0, \omega_0) = 1/\sigma_1$ and
	\begin{align*}
		r_t (\beta_0) &= (\mc J-\beta_0\iota_1') (b_0 \varepsilon_{1t} + B_2 \bs\varepsilon_{2t}) 
		= \beta_0 \varepsilon_{1t} - \beta_0 \iota_1' b_0  \varepsilon_{1t} + (\mc J-\beta_0\iota_1') B_2 \bs\varepsilon_{2t}\\
		&= \beta_0 \varepsilon_{1t} - \beta_0 \varepsilon_{1t}+
		 (\mc J-\beta_0\iota_1') B_2 \bs\varepsilon_{2t} 
		 =  C_0 \bs\varepsilon_{2t}.
	\end{align*}
	Therefore,
	\begin{align*}
		\dd z_t^*(\beta_0) &= \sigma_1 (\dd \beta)' \mc J \Sigma_u^{-1} u_t - X_{t,T} \sigma_1^2 (\dd \beta)' \mc J \Sigma_u^{-1} b_0 \\
		&= \sigma_1^2 (\dd \beta)' \mc J \Sigma_u^{-1} b_0 X_{t,T}
		+ \sigma_1 (\dd \beta)' \mc J \Sigma_u^{-1} B_2  \bs\varepsilon_{2t}  - X_{t,T} \sigma_1^2(\dd \beta)' \mc J \Sigma_u^{-1} b_0 \\
		&= \sigma_1 (\dd \beta)' \mc J \Sigma_u^{-1} B_2  \bs\varepsilon_{2t}, 
	\end{align*}
	thus 
	$\frac{\partial z_t^*(\beta_0)}{\partial \beta'} = \sigma_1 \bs\varepsilon_{2t}' B_2' \Sigma_u^{-1}  \mc J '$. Next, we show that $\frac{\partial z_t^*(\beta_0)}{\partial \beta'} = \sigma_1 \bs\varepsilon_{2t}' \Sigma_{22}^{-1} C_0^{-1}$.
	To prove the claim, note that
	\[
B_{0} \begin{pmatrix}-b_{2}'C_0^{-1}\\ C_0^{-1}\end{pmatrix}
  =\begin{pmatrix}
  1 & b_{2}'\\
  \beta_{0} & B_{22}
  \end{pmatrix}
  \begin{pmatrix}-b_{2}'C_0^{-1}\\ C_0^{-1}\end{pmatrix}
  =\begin{pmatrix}
  \b 0_{1\times(n-1)}\\
  (B_{22}-\beta_{0}b_{2}')C_0^{-1}
  \end{pmatrix}
  =\mc J',
	\]
	where the last equality holds by right-multiplying the
	identity $C_0 = B_{22}-\beta_{0}b_{2}'$ with $C_0^{-1}$. 
	Note that $C_0^{-1}$ is well-defined by non-singularity of $B_0$.
	Next, right-multiplying the identity $B_2'=\mc J B_0'$ by $(B_0')^{-1}$ yields $B_2' (B_0')^{-1} = \mc J$. 
	Hence,
\begin{align*}
	\frac{\partial z_t^*(\beta_0)}{\partial \beta'} &= 
\sigma_1 \bs\varepsilon_{2t}' B_2' \Sigma_u^{-1}  \mc J '
 = \sigma_1 \bs\varepsilon_{2t}' B_2' (B_0')^{-1} \begin{pmatrix}
   1/\sigma_1^2 & 0 \\
	0 & \Sigma_{22}^{-1}
	\end{pmatrix}(B_0)^{-1}  \mc J ' \\
&= \sigma_1 \bs\varepsilon_{2t}' \mc J \begin{pmatrix}
   1/\sigma_1^2 & 0 \\
	0 & \Sigma_{22}^{-1}
	\end{pmatrix}
\begin{pmatrix}-b_{2}'C_0^{-1}\\ C_0^{-1}\end{pmatrix}
 = \sigma_1 \bs\varepsilon_{2t}' \Sigma_{22}^{-1} C_0^{-1}.
\end{align*}
Therefore, 
\begin{align*}
	\frac{\partial g_{i,t}^*(\beta_0)}{\partial \beta'}
	&= h_i'(z_t^*(\beta_0)) r_t(\beta_0) \frac{\partial z_t^*(\beta_0)}{\partial \beta'}
	+ h_i(z_t^*(\beta_0))   \frac{\partial r_t(\beta_0)}{\partial \beta'}\\
	&= \sigma_1 h_i'(X_{t,T})C_0 \bs\varepsilon_{2t}\bs\varepsilon_{2t}' \Sigma_{22}^{-1} C_0^{-1} - h_i(X_{t,T}) u_{1t}I_{n-1} \\
	&=  \sigma_1 h_i'(X_{t,T})C_0 \bs\varepsilon_{2t}\bs\varepsilon_{2t}' \Sigma_{22}^{-1} C_0^{-1} 
	- \left[ \sigma_1 h_i(X_{t,T}) X_{t,T}  + h_i(X_{t,T}) b_2' \bs\varepsilon_{2t} \right] I_{n-1}, 
\end{align*} 
where the last equality holds due to $u_{1t} = \varepsilon_{1t} + b_2' \bs\varepsilon_{2t}  = \sigma_1 X_{t,T} + b_2' \bs\varepsilon_{2t}$.
Taking expectation and the last term vanishes since $\E_T[\bs\varepsilon_{2t} \mid X_{t,T}] = 0$ as implied by Assumption~\ref{ass:strong_exogeneity} the law of iterated expectations, which establishes the general statement. Under conditional homoskedasticity, $\E[\bs\varepsilon_{2t}\bs\varepsilon_{2t}'\mid \varepsilon_{1t}]=\Sigma_{22}$ cancels with $\Sigma_{22}^{-1}$ and $C_0$ with $C_0^{-1}$, which yields \eqref{eq:Stein_general}. The last part of the Lemma is obtained by plugging in the power functions. 
\end{proof}

\begin{lemma}\label{lem:conditional_expectation}
Let $X, Y$ be random vectors on $\R^{d_1}$ and $\R^{d_2}$, s.t., $\E\|Y\| < \infty$. Then, the following statements are equivalent:
\begin{enumerate}[label=(\roman*)]
    \item $\E[Y \mid X] = \bs 0$, a.s.
    \item For every Borel-measurable $f: \R^{d_1} \mapsto \R$, s.t., $\E[|f(X)|\|Y\|] < \infty$, 
    $\E[f(X)Y] = \b 0$.
    \item For every Borel set $B \in \B(\R^{d_1})$, $\E[Y \mathbf{1}_{X\in B}] = \b 0$.
\end{enumerate}
\end{lemma}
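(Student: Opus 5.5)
I will prove the cycle of implications (i)$\Rightarrow$(ii)$\Rightarrow$(iii)$\Rightarrow$(i). Throughout, the statements are understood componentwise in $Y$, so it suffices to treat $d_2=1$ and apply the argument to each coordinate of $Y$.

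For (i)$\Rightarrow$(ii), fix a Borel $f$ with $\E[|f(X)|\|Y\|]<\infty$. The product $f(X)Y$ is integrable and $f(X)$ is $\bs\sigma(X)$-measurable. The ``taking out what is known'' property of conditional expectation applies under exactly this integrability. It gives $\E[f(X)Y\mid X]=f(X)\E[Y\mid X]=0$ a.s. The law of iterated expectations then yields $\E[f(X)Y]=0$. The only care needed is that $f(X)$ need not itself be integrable. I handle this by truncation: set $f_k:=f\mathbf 1_{\{|f|\le k\}}$, note $\E[f_k(X)Y]=\E[f_k(X)\E[Y\mid X]]=0$ for bounded $f_k$, and pass to the limit by dominated convergence with dominating function $|f(X)|\|Y\|$.

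The step (ii)$\Rightarrow$(iii) is immediate. Take $f=\mathbf 1_B$, which is Borel and bounded, so $\E[|f(X)|\|Y\|]\le\E\|Y\|<\infty$.

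For (iii)$\Rightarrow$(i), let $W:=\E[Y\mid X]$. By Doob--Dynkin, $W=\varphi(X)$ for some Borel $\varphi$, and every set in $\bs\sigma(X)$ has the form $\{X\in B\}$. The defining property of conditional expectation then gives $\E[W\mathbf 1_{\{X\in B\}}]=\E[Y\mathbf 1_{\{X\in B\}}]=0$ for all Borel $B$. Choosing $B=\{\varphi>0\}$ gives $\E[W^+]=0$, and choosing $B=\{\varphi<0\}$ gives $\E[W^-]=0$. Hence $W=0$ a.s.

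The main, if modest, obstacle is the integrability bookkeeping in (i)$\Rightarrow$(ii), because $f(X)$ alone may fail to be integrable. The truncation and dominated convergence argument above resolves it.
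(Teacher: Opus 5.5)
Your proof is correct and follows the same route as the paper. Both use the cycle (i)$\Rightarrow$(ii)$\Rightarrow$(iii)$\Rightarrow$(i): truncation of $f$ plus dominated convergence with dominating function $|f(X)|\|Y\|$ for the first step, and indicators $\mathbf{1}_B$ for the second. The only cosmetic difference is in (iii)$\Rightarrow$(i). You show $\E[Y\mid X]=\bs 0$ directly by testing against $\{\varphi>0\}$ and $\{\varphi<0\}$, whereas the paper notes that $\bs 0$ satisfies the defining property of $\E[Y\mid X]$ on $\bs\sigma(X)$ and appeals to a.s.\ uniqueness; your argument simply unpacks that uniqueness.
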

\begin{proof}[Proof of Lemma~\ref{lem:conditional_expectation}]
     To show $(i) \Rightarrow (ii)$, suppose $\E[Y \mid X] = \bs 0$, a.s., and let $f: \R^{d_1} \mapsto \R$ be Borel-measurable. Then, $f(X)$ is $\bs\sigma(X)$-measurable.
     For every $n \in \N$, define truncation 
     $f_n(x) := \max\{-n, \min\{f(x), n\}\}$. Then, $f_n$ is bounded and also $\bs\sigma(X)$-measurable, and thus, $\forall n \in \N$, $\E[f_n(X)Y] = \E[f_n(X)\E[Y \mid X]] = \b 0$.
     Note $f_n(X) \to f(X)$ a.s. and $\|Y(f_n(X) - f(X))\| \leq |f(X)|\|Y\|$. Since $\E[|f(X)|\|Y\|] < \infty$, the dominated convergence theorem entails that $\E[f_n(X)Y] \to \E[f(X)Y]$, where the left-hand side is zero for all $n$. Hence, $\E[f(X)Y] = \b 0$.
     The step $(ii) \Rightarrow (iii)$ follows trivially, since the indicator function $f(X) := \mathbf{1}_{X \in B}$ is a Borel-measurable function and satisfies $\E[|f(X)|\|Y\|] \leq \sup_X|f(X)| \E\|Y\| = \E\|Y\| < \infty$.
     To show $(iii) \Rightarrow (i)$, suppose $\E[Y \mathbf{1}_{X\in B}] =\b 0$, for all $\B(\R^{d_1})$-sets. By definition, $\bs\sigma(X) = \{X^{-1}(B): B \in \B(\R^{d_1}) \}$, s.t., for every $A \in \bs\sigma(X)$, there exists some $B \in \B(\R^{d_1})$, s.t., $A = \{X \in B\}$. Therefore, for every $A \in \bs\sigma(X)$, it holds that $\int_AY d\mathbb{P} = \E[Y \mathbf{1}_{X\in B}] =\b 0$. Note that $\bs 0$ is $\bs\sigma(X)$-measurable, and by the Radon–Nikodym theorem, it is a version of $\E[Y \mid X] $, which is a.s.\ unique. 
     This completes the proof
\end{proof}

\subsection{Asymptotic properties of the strong proxy exogeneity test based on SPs}
\label{app:SP_asymptotics}
Consider a sequence of power functions $\{h_i\}_{i = 1}^L$ with $h_i(z) = z^i$, define 
\begin{equation}\label{eq:g_i}
g_{i,t}(\beta, \pi) \coloneq h_i(z_t) r_t(\beta, \pi),
\quad 
\text{with }
\;
r_t(\beta, \pi) \coloneq \bs{u_{2t}}(\pi) - \beta u_{1t}(\pi) = (\mc J-\beta\iota_1') u_t(\pi),
\end{equation}
for $i =1, \ldots, L$. For $L =1$, relevance of the original proxy, $\rho_1 \neq 0$, yields $n-1$ moment conditions for the $n-1$ components in the structural parameter vector $\beta$, so that $\beta$ is just-identified. When $L > 1$, the additional blocks $ i =2, \ldots, L$ generate over-identifying restrictions, which allow us to test whether any moment condition is violated.
Stacking moments yields $g_t^{(Z)}(\beta,\pi):=(g_{1,t}(\beta,\pi)',\ldots,g_{L,t}(\beta,\pi)')'\in\R^{L(n-1)}$.
For a weighting matrix $\hat W_T$ with $\|\hat W_T-W\|=o_p(1)$ for some symmetric positive definite matrix $W$,
\begin{align}
	\label{eq:beta_GMM}
	\hat\beta_T(W) \coloneq \argmin \limits_{\beta} \bar g_T^{(Z)}(\beta,\hat\pi_T)'\,
\hat W_T\,
    \bar g_T^{(Z)}(\beta,\hat\pi_T), \;
	 \text{with } \
	 \bar g_T^{(Z)}(\beta, \pi)= \frac{1}{T}\sum_{t=1}^{T} g_t^{(Z)}(\beta,\pi),
\end{align}
while the associated $J$-test statistic is
\begin{equation}\label{eq:J_test_stat}
	J_T(W)\coloneq T\,\bar g_T^{(Z)}(\hat\beta_T(W),\hat\pi_T)'\,
\hat W_T\,
\bar g_T^{(Z)}(\hat\beta_T(W),\hat\pi_T).
\end{equation}

\cite{BK2024} analyze the special case for the static SEM/VAR(0) setting, where the reduced-form innovations are directly observed (i.e., there is no first-step estimation of reduced-form parameters). In a general SVAR, by contrast, the reduced-form shocks must be constructed from estimated $\hat\pi_T$, which induces an additional first-step error component in the sample moments that is asymptotically non-negligible. Consequently, the weighting matrix adopted in \cite{BK2024} when $u_t$ is observed, namely $W=S_{gg}^{-1}$ with $S_{gg}:=\E[g_t^{(Z)}(\beta_0,\pi_0)g_t^{(Z)}(\beta_0,\pi_0)']$ is not efficient, and using it can also distort the null distribution of the $J$-statistic in dynamic settings. 

\subsubsection{Standard test under identification}
We start by characterizing large-sample properties of $\hat\beta_T(W)$ and $J_T(W)$ under the condition that the proxy remains relevant asymptotically so that $G_{\beta}$ has full rank.
\begin{assumptionp}{T3}\label{Tass:SP}
\begin{enumerate}[label=(\roman*)]
\item The true values $(\pi_0, \beta_0)$ lie in the interior of the corresponding parameter spaces, which are compact subspaces in $\R^{d_\pi}$ and $\R^{n-1}$, respectively.

\item There exists a symmetric positive definite matrix $W\in\R^{L(n-1)\times L(n-1)}$ such that $\|\hat W_T-W\|=o_p(1)$.

\item The map $(\beta,\pi)\mapsto g_t^{(Z)}(\beta,\pi)$ is almost surely continuously differentiable in a neighbourhood $\mathcal{N}$ of $(\beta_0,\pi_0)$ and
\[
\E\sup_{(\beta,\pi)\in\mathcal{N}}\left\|\frac{\partial g_t^{(Z)}(\beta,\pi)}{\partial\pi'}\right\|<\infty,
\qquad
\E\sup_{(\beta,\pi)\in\mathcal{N}}\left\|\frac{\partial g_t^{(Z)}(\beta,\pi)}{\partial\beta'}\right\|<\infty.
\]
Moreover, $\E\|g_t^{(Z)}(\beta_0,\pi_0)\|^{2+\delta}<\infty$ for some $\delta>0$.

\item  There exists $i\in\{1,\ldots,L\}$ with $\rho_i:=\E[h_i(z_t)\varepsilon_{1t}]\neq 0$ such that $\rank(G_\beta) = n-1$.
\end{enumerate}
\end{assumptionp}
For simplicity, we present the proof under the i.i.d.\ assumption for $(\varepsilon_t,z_t)$ in Assumption~\ref{Tass:base}(ii), which allows us to write down the covariances transparently. This condition can be relaxed to martingale difference sequences (MDSs), in which case the martingale versions of LLN and CLT yield the same expansions.
Assumption~\ref{Tass:base} implies that $\Gamma:=\E[Y_{t-1}Y_{t-1}']$ is finite and non-singular, see, e.g., Lemma 3.1 in \cite{Lutke_MTSA_book}. Existence of fourth-order moments and nonsingularity of $B_0$ in Assumption~\ref{Tass:base}(ii) also imply that $\E\|s_t(\pi_0,\omega_0)\|^{2}<\infty$. Assumption~\ref{Tass:SP}(iii) is satisfied for the polynomial $h_i$ by construction, and $\E\|g_t^{(Z)}(\beta_0,\pi_0)\|^{2+\delta}<\infty$ holds in particular when $\E|z_t|^{2L(2+\delta)}<\infty$ and $\E\|\varepsilon_t\|^{2(2+\delta)}<\infty$. 
Finally, Assumption~\ref{Tass:SP}(iv) is an identification condition, and under Assumption~\ref{ass:strong_proxy_exogeneity} we have $\rho_1=\E[z_t\varepsilon_{1t}]\neq 0$, so $\rank(G_\beta)=n-1$ and $G_\beta'WG_\beta$ is non-singular.

\begin{proposition}\label{thm:beta_SP}
Let $\Gamma:= \E[Y_{t-1} Y_{t-1}']$, 
$\bs{h}(z_t) = (h_1(z_t), \ldots, h_L(z_t))'$
and define $\rho_i \coloneqq \E[h_i(z_t) \varepsilon_{1t}]$,
$\bs\rho = (\rho_1, \ldots, \rho_L)'$,
$
M \coloneqq  - (\Gamma \otimes I_n).$
Let
\[
G_\pi \coloneqq\E \!\left[\frac{\partial g_t^{(Z)}(\beta_0, \pi_0)}{\partial \pi'} \right] = -\E[\bs{h}(z_t)Y_{t-1}'] \otimes  (\mc J - \beta_0 \iota_1'),
\quad
G_\beta \coloneqq \E\! \left[\frac{\partial g_t^{(Z)}(\beta_0, \pi_0)}{\partial \beta'} \right] =  - \bs\rho \otimes I_{n-1},
\]
and define moment functions $g_t^{(Z), \perp}(\beta_0, \pi_0) \coloneqq g_t^{(Z)}(\beta_0, \pi_0) - G_\pi M^{-1} m_t(\pi_0)$. Then, under Assumptions~\ref{ass:strong_proxy_exogeneity},~\ref{Tass:base} and~\ref{Tass:SP}
$$
\sqrt{T} \left(\hat \beta_T (W) - \beta_0 \right) \ \Rightarrow \ N(\bs 0, V_{\beta}),
\qquad
V_{\beta} = (G_\beta' W G_\beta)^{-1} G_\beta' W  S_{\perp} W  G_\beta (G_\beta' W G_\beta)^{-1},
$$
with
$S_{\perp} \coloneqq \E \left[ g_t^{(Z), \perp}(\beta_0, \pi_0)\,  g_t^{(Z), \perp}(\beta_0, \pi_0 )' \right]$.
Moreover, if $L > 1$, the $J$-test satisfies
\[
J_T(W)\ \Rightarrow\ \sum_{i=1}^{L(n-1)} \lambda_i(W) \zeta_i^2,
\qquad
\text{under} \quad H_0:\ \E[g_t^{(Z)}(\beta_0, \pi_0)] = \bs 0,
\]
where $\zeta_i \stackrel{iid}{\sim} N(0,1)$ are standard normal variables and 
$\{\lambda_i(W)\}_{i=1}^{L(n-1)}$ are the eigenvalues of matrix
$
S_\perp^{1/2}W \left(I_{L(n-1)} - G_\beta(G_\beta' W G_\beta)^{-1} G_\beta' W  \right)S_\perp^{1/2}.
$ Under the fixed alternative $H_1: \inf_{\beta}\,\big\|\E[g_t^{(Z)}(\beta, \pi_0)]\big\| > 0$, $J_T(W) \to \infty$ as $T \to \infty$.
\end{proposition}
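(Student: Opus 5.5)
The plan is a standard two-step GMM argument with a generated regressor. First I verify the formulas for $G_\pi$ and $G_\beta$. Differentiating $g_{i,t}(\beta,\pi)=h_i(z_t)(\mc J-\beta\iota_1')u_t(\pi)$ with $u_t(\pi)=y_t-(Y_{t-1}'\otimes I_n)\pi$ gives $\partial g_{i,t}/\partial\pi'=-h_i(z_t)(\mc J-\beta\iota_1')(Y_{t-1}'\otimes I_n)=-(h_i(z_t)Y_{t-1}')\otimes(\mc J-\beta\iota_1')$. Stacking over $i$ yields $G_\pi$.

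For $G_\beta$, note $\partial g_{i,t}/\partial\beta'=-h_i(z_t)u_{1t}I_{n-1}$ and $u_{1t}=\varepsilon_{1t}+b_2'\bs\varepsilon_{2t}$. Under \ref{ass:strong_proxy_exogeneity}, $\E[h_i(z_t)\bs\varepsilon_{2t}]=\E[h_i(z_t)\E[\bs\varepsilon_{2t}\mid z_t]]=\bs 0$, by Lemma~\ref{lem:conditional_expectation}(ii) and the moment bounds in \ref{Tass:SP}(iii). Hence $G_\beta=-\bs\rho\otimes I_{n-1}$. The same argument, using $r_t(\beta_0,\pi_0)=C_0\bs\varepsilon_{2t}$, gives $\E[g_t^{(Z)}(\beta_0,\pi_0)]=\bs 0$.

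Next I linearize the first step. The LS normal equations give $\sqrt T(\hat\pi_T-\pi_0)=-M^{-1}T^{-1/2}\sum_t m_t(\pi_0)+o_p(1)$, since $T^{-1}\sum_t\partial m_t/\partial\pi'=-(T^{-1}YY'\otimes I_n)\to_p M$ by stability and the LLN (\ref{Tass:base}); the remainder is exact because $m_t$ is linear in $\pi$. A mean-value expansion in $\pi$, using the dominance condition \ref{Tass:SP}(iii) and a uniform LLN on $\mathcal N$, then gives
\[
\sqrt T\,\bar g_T^{(Z)}(\beta_0,\hat\pi_T)=T^{-1/2}\sum_t g_t^{(Z),\perp}(\beta_0,\pi_0)+o_p(1).
\]
The summands $g_t^{(Z),\perp}$ form a stationary martingale difference sequence with respect to $\mathcal F_t=\bs\sigma(\varepsilon_s,z_s,s\le t)$. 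This holds because $Y_{t-1}$ is $\mathcal F_{t-1}$-measurable and $(\varepsilon_t,z_t)$ is i.i.d. with mean-zero cross moments; note that $\E[h_i(z_t)\bs\varepsilon_{2t}]=\bs 0$ already holds unconditionally, and by independence it also holds conditionally on $\mathcal F_{t-1}$. With $2+\delta$ moments, the MDS CLT delivers $N(\bs 0,S_\perp)$.

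Consistency of $\hat\beta_T(W)$ follows from identification by $\rho_1\ne0$, compactness and a uniform LLN. Since $g$ is linear in $\beta$, the first-order condition can be solved exactly: $\sqrt T(\hat\beta_T-\beta_0)=-(\hat G_\beta'\hat W_T\hat G_\beta)^{-1}\hat G_\beta'\hat W_T\sqrt T\bar g_T^{(Z)}(\beta_0,\hat\pi_T)$ with $\hat G_\beta\to_p G_\beta$. This yields the sandwich $V_\beta$. For the $J$-statistic, substitution gives $\sqrt T\bar g_T^{(Z)}(\hat\beta_T,\hat\pi_T)=(I-G_\beta(G_\beta'WG_\beta)^{-1}G_\beta'W)S_\perp^{1/2}\bs\zeta_T+o_p(1)$ with $\bs\zeta_T\Rightarrow N(0,I)$. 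Hence $J_T(W)\Rightarrow\bs\zeta'S_\perp^{1/2}P'WP S_\perp^{1/2}\bs\zeta$ with $P:=I-G_\beta(G_\beta'WG_\beta)^{-1}G_\beta'W$. Using $P'WP=WP$, spectral decomposition gives the stated weighted $\chi^2$.

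Under the fixed alternative, $\bar g_T^{(Z)}(\beta,\hat\pi_T)\to_p\E[g_t^{(Z)}(\beta,\pi_0)]$ uniformly on the compact parameter space. Therefore $\inf_\beta\bar g_T'\hat W_T\bar g_T\ge\lambda_{\min}(W)\inf_\beta\|\E g\|^2/2>0$ with probability approaching one, so $J_T\to\infty$.

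The main obstacle I expect is the MDS/CLT step with the generated-regressor correction. It requires checking that $g_t^{(Z)}$ and $m_t$ are jointly martingale differences despite the dependence of $Y_{t-1}$ on the past. It also requires verifying the uniform remainder control in the $\pi$-expansion for polynomial $h_i$ of high degree, where the $2+\delta$ moment assumptions must be used via Hölder's inequality.
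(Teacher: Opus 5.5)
Your proposal is correct and follows essentially the same route as the paper. Both arguments linearize the LS first step and expand the sample moments in $\pi$ to obtain the orthogonalized moments $g_t^{(Z),\perp}$. They then apply an MDS CLT, read off the sandwich variance and the weighted $\chi^2$ limit from $P'WP=WP$ (your $P$ is the paper's $R(W)$), and use a uniform LLN for the fixed alternative.

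The differences are minor refinements. You use that $g_t^{(Z)}$ is affine in $\beta$ to solve the first-order condition exactly rather than through a mean-value expansion. Since $g_t^{(Z)}$ is affine in $\pi$ as well, the $\pi$-expansion you flag as an obstacle is in fact exact and needs only an LLN for $T^{-1}\sum_t \bs h(z_t)Y_{t-1}'$. You also explicitly justify that $b_2'\E[h_i(z_t)\bs\varepsilon_{2t}]=\bs 0$ in $G_\beta$ and that $g_t^{(Z),\perp}$ is a martingale difference sequence, both of which the paper leaves implicit.
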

\begin{proof}[Proof of Proposition~\ref{thm:beta_SP}]
Under Assumption~\ref{Tass:base}(i)--(ii), the stable VAR admits the strictly stationary causal representation, and an ergodic theorem yields, $\left\|\frac{1}{T}\sum_{t=1}^T Y_{t-1}Y_{t-1}'-\Gamma\right\|=o_p(1),$ with $\Gamma:=\E[Y_{t-1}Y_{t-1}']$ finite and non-singular. Moreover,
Since $u_t=B_0\varepsilon_t$ is independent of $\mathcal{F}_{t-1}:=\sigma(\varepsilon_s,z_s:s\le t-1)$ and $\E[u_t]=0$, the score
$m_t(\pi_0)=(Y_{t-1}\otimes I_n)u_t$ satisfies $\E[m_t(\pi_0)\mid\mathcal{F}_{t-1}]=0$, so $\{m_t(\pi_0)\}$ is a square-integrable martingale difference sequence. Moreover, Assumption~\ref{Tass:base} implies
\[
\frac{1}{\sqrt{T}}\sum_{t=1}^T m_t(\pi_0)\ \Rightarrow\ N(0,S_{mm}),
\qquad
S_{mm}:=\E[m_t(\pi_0)m_t(\pi_0)']=\Gamma\otimes\Sigma_u,
\]
where $\Sigma_u:=\E[u_tu_t']$. Since $\partial m_t(\pi)/\partial\pi'=-(Y_{t-1}Y_{t-1}'\otimes I_n)$ does not depend on $\pi$, it follows that $\left\|\frac{1}{T}\sum_{t=1}^T\frac{\partial m_t(\pi_0)}{\partial\pi'}-M\right\|=o_p(1),$ $M:=-(\Gamma\otimes I_n),$ 
and $M$ is non-singular by the non-singularity of $\Gamma$. Therefore,
\[
\sqrt{T}(\hat\pi_T-\pi_0)=-M^{-1}\frac{1}{\sqrt{T}}\sum_{t=1}^T m_t(\pi_0)+o_p(1).
\]

Next, by standard GMM consistency (compactness in Assumption~\ref{Tass:SP}(i), continuity and identification in Assumptions~\ref{ass:strong_proxy_exogeneity} and~\ref{Tass:SP}(iv)), $\hat\beta_T(W)\overset{p}{\to}\beta_0$. Hence, with probability approaching one, $(\hat\pi_T,\hat\beta_T(W))$ lies in the neighbourhood $\mathcal{N}$ from Assumption~\ref{Tass:SP}(iii), we have
\begin{align*}
	\sqrt{T}(\hat\beta_T(W) - \beta_0) &=
	- \left[\left(\frac{1}{T}\sum_{t=1}^T \frac{\partial g_t^{(Z)}(\hat \pi_T, \bar\beta) }{\partial \beta'} \right)'
	\hat W_T
	\left(\frac{1}{T}\sum_{t=1}^T \frac{\partial g_t^{(Z)}(\hat \pi_T, \bar\beta) }{\partial \beta'} \right)
	\right]^{-1} \times \\
	& \hspace{.7cm} \left(\frac{1}{T}\sum_{t=1}^T \frac{\partial g_t^{(Z)}(\hat \pi_T, \bar\beta) }{\partial \beta'} \right)'
	\hat W_T \frac{1}{\sqrt{T}}\sum_{t=1}^T g_t^{(Z)}(\hat \pi_T, \beta_0) 
	\end{align*}
	where $\bar\beta$ is a mean value between $\hat \beta_T(W) $ and $\beta_0$ and
	\begin{align*}
		\frac{1}{\sqrt{T}}\sum_{t=1}^T g_t^{(Z)}(\hat \pi_T, \beta_0)  &= \frac{1}{\sqrt{T}}\sum_{t=1}^T g_t^{(Z)}(\pi_0, \beta_0) + \left(\frac{1}{T}\sum_{t=1}^T \frac{\partial g_t^{(Z)}(\bar\pi, \beta_0) }{\partial \pi'} \right) \sqrt{T}(\hat \pi_T - \pi_0).
	\end{align*}
	By Assumption~\ref{Tass:SP}(iii), a uniform ergodic theorem yields
	\[
	\left\|\frac{1}{T}\sum_{t=1}^T \frac{\partial g_t^{(Z)}(\bar\pi, \beta_0) }{\partial \pi'} - G_\pi \right\| = o_p(1)
	\quad 
	\left\|\frac{1}{T}\sum_{t=1}^T \frac{\partial g_t^{(Z)}(\hat \pi_T, \bar\beta) }{\partial \beta'} - G_\beta \right\| = o_p(1),
	\]
	where $G_\pi  = -\E[\bs{h}(z_t)Y_{t-1}'] \otimes  (\mc J - \beta_0 \iota_1')$ and $G_\beta=  - \bs\rho \otimes I_{n-1},$ 
    since for $i = 1, \ldots, L$,
    \begin{align*}
     \E\left[\frac{\partial g_{i,t}(\pi, \beta)}{\partial \pi'}\right] &=  \E\left[ h_i(z_t)(\mc J - \beta \iota_1') \frac{\partial }{\partial \pi'} u_t(\pi)\right] =  \E\left[- h_i(z_t)(\mc J - \beta \iota_1') (Y_{t-1}' \otimes I_n)\right]\\
     \E\left[\frac{\partial g_{i,t}(\pi_0, \beta_0)}{\partial \beta'} \right] &= - \E[h_i(z_t) \varepsilon_{1t}] I_{n-1} = -\rho_i I_{n-1}
    \end{align*}
    Therefore, and by \ref{Tass:SP}(iii)
	\begin{align*}
	\sqrt{T}(\hat \beta_T(W) - \beta_0) &= -(G_\beta' W G_\beta)^{-1} G_\beta' W \left(\frac{1}{\sqrt{T}}\sum_{t=1}^T g_t^{(Z)}( \pi_0, \beta_0) + G_\pi \sqrt{T}(\hat \pi_T - \pi_0) \right) + o_p(1)\\
	&=  -(G_\beta' W G_\beta)^{-1} G_\beta' W  \frac{1}{\sqrt{T}}\sum_{t=1}^T \left( g_t^{(Z)}( \pi_0, \beta_0) - G_\pi  M^{-1} m_t(\pi_0)\right) + o_p(1).
	\end{align*}
	Hence, the influence function of the second-step estimator is
	\begin{align*}
	\psi_\beta(y_t, z_t) &= -(G_\beta' W G_\beta)^{-1} G_\beta' W \left(g_t^{(Z)}(  \pi_0, \beta_0) - G_\pi M^{-1} m_t(\pi_0)\right)\\ 
    &=: -(G_\beta' W G_\beta)^{-1} G_\beta' W g_t^{(Z), \perp}(\beta_0, \pi_0).
\end{align*}
	By Assumptions~\ref{Tass:base} and~\ref{Tass:SP}, $\{g_t^{(Z), \perp}(\beta_0, \pi_0)\}_t$ is a square-integrable MDS, and by a CLT, $\frac{1}{\sqrt{T}}\sum_{t=1}^T g_t^{(Z), \perp}(\beta_0, \pi_0) \Rightarrow N(0, S_\perp)$ and hence $	\sqrt{T}(\hat \beta_T(W) - \beta_0) \Rightarrow N(0, V_\beta)$,
where $V_\beta = (G_\beta' W G_\beta)^{-1} G_\beta' W  S_{\perp} W  G_\beta (G_\beta' W G_\beta)^{-1}$.\\
Assume $L>1$. Under $H_0:\E[g_t^{(Z)}(\beta_0,\pi_0)]=0$, we have $\E[g_t^{(Z),\perp}(\beta_0,\pi_0)]=0$.
Expanding the sample moments at $(\hat\beta_T(W),\hat\pi_T)$ yields
\[
\sqrt{T}\,\bar g_T(\hat\beta_T(W),\hat\pi_T)
=
\Big(I_{L(n-1)}-G_\beta(G_\beta'WG_\beta)^{-1}G_\beta'W\Big)
\frac1{\sqrt{T}}\sum_{t=1}^T g_t^{(Z),\perp}(\beta_0,\pi_0)
+o_p(1).
\]
Let $S_{\perp}^{1/2}$ be the symmetric square root of $S_\perp$, define
	$\bs\zeta_T \coloneqq   \frac{1}{\sqrt{T}}\sum_{t=1}^T S_{\perp}^{-1/2} g_t^{(Z), \perp}(\beta_0, \pi_0)$ so that $\bs\zeta_T \Rightarrow  N(0, I_{L(n-1)})$ and $R(W) \coloneqq I_{L(n-1)} - G_\beta (G_\beta' W G_\beta)^{-1} G_\beta' W $. Then, the $J$-statistic can be written as
	$
	J_T(W) = \bs\zeta_T' S_\perp^{1/2} R(W)' W R(W) S_\perp^{1/2} \bs\zeta_T + o_p(1).
	$
    Note that $W$ is symmetric, $R(W)' W  = W R(W)$, and $R(W)$ is idempotent. Since $\bs\zeta_T \Rightarrow  N(0, I_{L(n-1)})$, 
	\[
	\bs\zeta_T' S_\perp^{1/2} W R(W) S_\perp^{1/2} \bs\zeta_T \ \Rightarrow \ \sum_{i=1}^{L(n-1)} \lambda_i(W) \zeta_i^2,
	\] where $\{\zeta_i\}_{i=1}^{L(n-1)}$ are independent standard normal random variable and $\{\lambda_i(W)\}_{i=1}^{L(n-1)}$ are the eigenvalues of $S_\perp^{1/2} W R(W) S_\perp^{1/2}$.
    Under the fixed alternative $H_1:\inf_{\beta}\|\E[g_t^{(Z)}(\beta,\pi_0)]\|>0$,
by the consistency theorem,  $\|\bar g_T(\hat\beta_T(W),\hat\pi_T)\| \overset{p}{\to}\inf_\beta\|\E[g_t^{(Z)}(\beta,\pi_0)]\|>0$. Since $W$ is positive definite, $J_T(W) 
=T\cdot \|\bar g_T(\hat\beta_T(W),\hat\pi_T)\|_W^2\overset{p}{\to}\infty$ as $T \to \infty$.
\end{proof}

The weighting $S_{gg}^{-1}$ chosen in previous studies would be optimal if the true value of $\pi$ were known and the VAR residuals $u_t = B_0\varepsilon_{t}$ were perfectly observed by the researchers. In general, since $u_t(\hat \pi_T)$ is retrieved conditional on the estimate $\hat \pi_T$, the naive weighting $S_{gg}^{-1}$ is suboptimal in two senses. First, there exists an optimal weighting that generally differs from $S_{gg}^{-1}$, which yields an efficient estimator (with the smallest asymptotic variance among all GMM estimators based on the same moment conditions). Second and more importantly, the statistic $J_T(S_{gg}^{-1})$ does not converge to the usual $\chi^2$ distribution under the null, such that a test using $\chi^2$ critical values does not have the correct size, not even asymptotically. Instead, the optimal weighting is obtained by taking the inverse of the covariance of the moment function $g_t^{(Z), \perp}$, which is the part of the original moment function $g_t^{(Z)}$ orthogonal to the first-step moment function. Using the optimal weighting will also restore the usual $\chi^2$ null distribution of the $J$-statistic, so that the exogeneity test will have the correct size.
The proposition also suggests that for a fixed $W$, the critical value might be simulated based on estimated $\lambda_i(W)$.\footnote{In additional MC experiments, this procedure with identity weight yields rejection frequencies generally close to the nominal level for the SP test, whereas the analogous ISP test introduced below over-rejects substantially at the sample sizes considered. The distortion becomes negligible for large \(T\), we therefore recommend the feasible optimal weighting for testing.}

\begin{corollary}
The asymptotically efficient GMM choice is $ S_\perp^{-1}$ such that
\[
\sqrt{T} \left(\hat \beta_T (S_\perp^{-1}) - \beta_0 \right) \ \Rightarrow \ N \left(\bs 0, (G_\beta' S_{\perp}^{-1} G_\beta)^{-1}\right),
\]
Moreover, if $L>1$, under $H_0$,
$
J_T\left(S_\perp^{-1}\right)\ \Rightarrow\ \chi^2_{(L-1)(n-1)},
$ while under the fixed alternative $H_1$,  $\Pr\big(J_T\left(S_\perp^{-1}\right) > c^*_{1-\alpha}\big)\to 1$,
where $c^*_{1-\alpha}$ is the $(1-\alpha)$-quantile of $\chi^2_{(L-1)(n-1)}$.
\end{corollary}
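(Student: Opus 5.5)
The corollary follows from Proposition~\ref{thm:beta_SP} by specializing $W=S_\perp^{-1}$; the only extra work is the standard GMM efficiency bound and the collapse of the weighted $\chi^2$ mixture. Throughout, $S_\perp$ is taken to be positive definite, which is implicit in calling $S_\perp^{-1}$ a weighting matrix. A feasible $\hat W_T\to_p S_\perp^{-1}$ satisfies Assumption~\ref{Tass:SP}(ii), so the proposition applies.

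\textbf{Efficiency.} Plugging $W=S_\perp^{-1}$ into $V_\beta$, the sandwich collapses to $(G_\beta'S_\perp^{-1}G_\beta)^{-1}$. To show this is minimal, I would use the usual argument: for any symmetric positive definite $W$, set $A=(G_\beta'WG_\beta)^{-1}G_\beta'W$ and $A^*=(G_\beta'S_\perp^{-1}G_\beta)^{-1}G_\beta'S_\perp^{-1}$. Then $AG_\beta=A^*G_\beta=I$, and
\[
V_\beta(W)-V_\beta(S_\perp^{-1}) = (A-A^*)S_\perp(A-A^*)' \succeq 0,
\]
because the cross terms satisfy $A S_\perp A^{*\prime}=(G_\beta'S_\perp^{-1}G_\beta)^{-1}$.

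\textbf{Null distribution of the $J$-statistic.} Under $H_0$, the proposition gives the limit $\sum_i\lambda_i\zeta_i^2$, where the $\lambda_i$ are the eigenvalues of $S_\perp^{1/2}WR(W)S_\perp^{1/2}$. With $W=S_\perp^{-1}$ this matrix equals
\[
I-S_\perp^{-1/2}G_\beta(G_\beta'S_\perp^{-1}G_\beta)^{-1}G_\beta'S_\perp^{-1/2}.
\]
This is the symmetric idempotent projection orthogonal to the column space of $S_\perp^{-1/2}G_\beta$. That column space has rank $n-1$ by Assumption~\ref{Tass:SP}(iv), so the eigenvalues are $(L-1)(n-1)$ ones and $n-1$ zeros, giving the limit $\chi^2_{(L-1)(n-1)}$. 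The estimation error in $\hat W_T$ is $o_p(1)$ and is multiplied by $O_p(1)$ quantities, so it is negligible.

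\textbf{Consistency.} Under the fixed alternative $H_1$, the proposition gives $J_T\to_p\infty$. Because $c^*_{1-\alpha}$ is a fixed finite constant, $\Pr(J_T>c^*_{1-\alpha})\to1$.

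The main subtlety, modest here, is that under $H_1$ the probability limit of $\hat W_T$ need not equal $S_\perp^{-1}$ evaluated at a pseudo-true value. It is enough that $\hat W_T$ converges to some positive definite matrix, which keeps the quadratic form bounded below by a positive multiple of $T\|\bar g_T\|^2$.
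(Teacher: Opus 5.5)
Your proposal is correct and follows the paper's route: substitute $W=S_\perp^{-1}$ into Proposition~\ref{thm:beta_SP} and note that $I_{L(n-1)}-S_\perp^{-1/2}G_\beta(G_\beta'S_\perp^{-1}G_\beta)^{-1}G_\beta'S_\perp^{-1/2}$ is symmetric idempotent of rank $(L-1)(n-1)$. Your explicit efficiency inequality $V_\beta(W)-V_\beta(S_\perp^{-1})=(A-A^*)S_\perp(A-A^*)'\succeq 0$ and your treatment of the fixed alternative fill in details the paper leaves implicit.
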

The corollary follows by substituting the optimal weighting $S_\perp^{-1}$ into
Proposition~\ref{thm:beta_SP} and noting that $S_\perp^{-1/2}G_\beta(G_\beta'S_\perp^{-1}G_\beta)^{-1}G_\beta' S_\perp^{-1/2}$
is symmetric idempotent with rank $n-1$, so that
$I_{L(n-1)}-S_\perp^{-1/2}G_\beta(G_\beta'S_\perp^{-1}G_\beta)^{-1}G_\beta' S_\perp^{-1/2}$ is symmetric idempotent with rank $L(n-1)-(n-1)$.

\subsubsection{Identification-robust test under weak proxies}
\label{app:weakSP}
As mentioned in Section~\ref{sec:SP_weak}, Assumption~\ref{ass:weak1} does not, by itself, restrict the relevance of the higher-order SPs. It may occur that the higher-order SPs can be strictly relevant although $\rho_{1,T}$ vanishes in the limit. In Appendix~\ref{app:proof}, we give an example in which $\rho_{1,T} = O(T^{-1/2})$ while $\rho_{2,T} = O(1)$. 
This is an important point because identification of $\beta_0$ is governed by $G_{\beta, T}=-(\bs \rho_T\otimes I_{n-1})$, and (strong) identification is provided if $\rank(G_{\beta, T}) = n-1$ (see Assumption~\ref{Tass:SP}(iv)). If there exists at least one $i = 1, \ldots, L$, such that $\rho_{i,T} \to \rho_{i, \infty}$ and $\rho_{i, \infty} \neq 0$, 
we have $\bs \rho_T\to \bs\rho_\infty$ with $\rank(\bs \rho_\infty\otimes I_{n-1})=n-1$, then the rank condition in Assumption~\ref{Tass:SP}(iv) is respected.  Consequently, the standard inference and  $J$-tests discussed in the previous sections remain applicable. 

This suggests a form of robustness of the SP approach under the usual
weak proxy setting, as $\beta$ can remain strongly identified 
even when the correlation between the proxy $z_t$ and the target shock vanishes in the limit.
The genuinely weak identification problem arises only in the stronger sense that the entire $\bs \rho$ vector is jointly local to zero:
\begin{assumptionp}{W3}\label{ass:weak2}
There exists $\bs \gamma \in \R^L$, such that $\sqrt{T} \bs \rho_{T} \to \bs \gamma$ as $T\to \infty$.
\end{assumptionp}
The vector $\bs \gamma$ is allowed to be zero, so that the boundary case of a flat population criterion is also nested within our analysis.
The relevant case is demonstrated in Lemma~\ref{lem:weak_lin_proxy} in Appendix, where we show that for a classical linear measurement-error proxy $z_{t,T} = \phi_T \varepsilon_{1t} + v_t$ with $\phi_T = \phi^*/\sqrt{T}$, the relevance of all SPs is at most of order $O(T^{-1/2})$ under \ref{ass:weak1}. 
However, it is worth highlighting that the speed, at which $\rho_{i,T}$ vanishes, does not increase monotonically with $i$. For example, the lemma shows $\rho_{3,T} = O(T^{-1/2})$ so that both $\rho_{1,T}$ and $\rho_{3,T}$ decrease at the same rate.

Under Assumption~\ref{ass:weak2}, $G_{\beta, T} = O(T^{-1/2})$, the population criterion is nearly flat in the $\beta$ direction and the standard asymptotics no longer apply. Let 
\[
\mc G_T(\beta)
:= \frac1{\sqrt T}\sum_{t=1}^T
\left[
g_t^{(Z)}(\beta,\pi_0)
-\E [g_t^{(Z)}(\beta,\pi_0)]
-G_{\pi}(\beta)M^{-1}m_t(\pi_0)
\right].
\]
be a random $\R^{L(n-1)}$-valued function continuous on $\mc B$. We make the following assumptions:
\begin{assumptionp}{T3-W}
\label{Tass:SPweak}
\begin{enumerate}[label=(\roman*)]
\item The true values $(\pi_0, \beta_0)$ lie in the interior of compact parameter spaces, and in particular, $\beta_0 \in \text{int}(\mc B) \subset \R^{n-1}$ with $\mc B$ compact.

\item 
$\mc G_T(\beta) \ \Rightarrow \ \mc G(\beta),$ 
where $\mc G(\beta)$ is a zero-mean Gaussian stochastic process with covariance function
$\Omega(\beta_1, \beta_2) := \E[\mc G(\beta_1)\mc G(\beta_2)']$, and $\Rightarrow$ denotes
weak convergence of random functions on $\mc B$ with respect to the sup norm, i.e., $d_\infty(f_1,f_2):=\sup_{\beta\in\mathcal B}\|f_1(\beta)-f_2(\beta)\|$. The map $\beta\mapsto\Omega(\beta)$ is continuous and uniformly positive definite on $\mathcal B$.
\item 
The map $(\beta,\pi)\mapsto g_t^{(Z)}(\beta,\pi)$ is almost surely continuously differentiable on $\mc B \times \mc N_{\pi}$ with $\mc N_{\pi}$ being a neighborhood of $\pi_0$ and
$\sup_T\E_T\sup_{\beta \in \mc B, \pi \in \mc N_\pi}\left\|\frac{\partial g_t^{(Z)}(\beta,\pi)}{\partial\pi'}\right\|<\infty.$

\item There exists a feasible estimator $\hat\Omega_T(\beta)$ s.t., $\sup_{\beta\in\mathcal B}
        \|\hat\Omega_T(\beta)-\Omega(\beta)\|
        =o_p(1).$ 

\end{enumerate}
\end{assumptionp}
\ref{Tass:SPweak}(ii) is the analogue of the functional central limit condition in
\cite{StockWright2000}, while the influence function for the structural parameter of interest in our setting also includes the uncertainty coming from the estimation of the reduced-form parameter $\pi$. 
The condition would coincide with that in \cite{StockWright2000} if the adaptivity condition in 
Proposition~\ref{thm:adapt_SP} is satisfied or one uses the efficient synthetic proxies as described in 
Appendix~\ref{app:adaptivity}.
The weak convergence can be established by finite-dimensional convergence and stochastic equicontinuity,
which can be obtained under weaker primitive conditions since both $g_t^{(Z)}(\beta,\pi_0)$ and $G_{\pi,T}(\beta)$ are affine in $\beta$ (see Assumption B and B' in \citealt{StockWright2000}).
\ref{Tass:SPweak}(iii) strengthens \ref{Tass:SP} from a neighborhood of $\beta_0$ to 
the compact set $\mc B$, which is required here since under weak proxy,
the GMM minimizer $\hat\beta_T$ need not lie in the neighborhood of $\beta_0$. 

Define the (rescaled) sample objective of an efficient GMM based on SPs 
\[
Q_T(\beta) :=
        T\bar g_T^{(Z)}(\beta,\hat\pi_T)'
        \hat\Omega_T(\beta)^{-1}
        \bar g_T^{(Z)}(\beta,\hat\pi_T),
\]
with $\hat\Omega_T(\beta) \to \Omega(\beta)$ uniformly on $\mc B$ in probability. 
The following Proposition provides the weak limit of the objective function. 

\begin{proposition}
\label{thm:weak_Q}
Under Assumptions~\ref{ass:strong_proxy_exogeneity},~\ref{ass:weak2},~\ref{Tass:base}, and
\ref{Tass:SPweak},
\begin{align*}
    Q_T(\beta) \ &\Rightarrow \ \left[
        \mc G(\beta)+(\bs\gamma\otimes I_{n-1})(\beta_0-\beta)
        \right]'
        \Omega(\beta)^{-1}
        \left[
        \mc G(\beta)+(\bs\gamma\otimes I_{n-1})(\beta_0-\beta)
        \right]\\ 
        &:= Q_\infty(\beta),
\end{align*}
on $\mc B$ with respect to the sup norm, and hence, at the true value, $Q_T(\beta_0)  \Rightarrow  \chi^2_{L(n-1)}.$
\end{proposition}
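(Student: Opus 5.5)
The plan mirrors the proof of Proposition~\ref{thm:weak_Q_ISP}, but is simpler because the SP moment function is affine in $\beta$. First, I would write the population moment exactly. By \eqref{eq:moment_expansion}, applied blockwise with $h_i(z_t)$ in place of $z_t^2$ and using Assumption~\ref{ass:strong_proxy_exogeneity} (so that $\E[h_i(z_t)\bs\varepsilon_{2t}]=\bs 0$), we get
\[
\E_T[g_t^{(Z)}(\beta,\pi_0)]=(\bs\rho_T\otimes I_{n-1})(\beta_0-\beta)
\]
for every $\beta\in\mc B$. Assumption~\ref{ass:weak2} and compactness of $\mc B$ then give
\[
\sup_{\beta\in\mc B}\bigl\|\sqrt T\,\E_T[g_t^{(Z)}(\beta,\pi_0)]-(\bs\gamma\otimes I_{n-1})(\beta_0-\beta)\bigr\|\to0 .
\]
This is the deterministic drift.

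Second, I would handle the first-step estimation of $\pi$ uniformly in $\beta$. The expansion $\sqrt T(\hat\pi_T-\pi_0)=-M^{-1}T^{-1/2}\sum_t m_t(\pi_0)+o_p(1)$ from the proof of Proposition~\ref{thm:beta_SP} stays valid along the triangular array, because it does not involve $\beta$. A mean-value expansion in $\pi$ gives
\[
\sqrt T\,\bar g_T^{(Z)}(\beta,\hat\pi_T)=\sqrt T\,\bar g_T^{(Z)}(\beta,\pi_0)+\Bigl(T^{-1}\sum_t\partial g_t^{(Z)}(\beta,\bar\pi)/\partial\pi'\Bigr)\sqrt T(\hat\pi_T-\pi_0).
\]
Here $\partial g_{i,t}/\partial\pi'=-h_i(z_t)(\mc J-\beta\iota_1')(Y_{t-1}'\otimes I_n)$ is affine in $\beta$ and free of $\pi$. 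So the sample Jacobian equals $A_{0,T}+\sum_k\beta_k A_{k,T}$, a finite collection of sample averages. Each converges to its mean by a LLN, using the domination in \ref{Tass:SPweak}(iii), and the convergence is automatically uniform over the compact $\mc B$, with limit $G_\pi(\beta)$. Combining the two steps,
\[
\sqrt T\,\bar g_T^{(Z)}(\beta,\hat\pi_T)=\mc G_T(\beta)+\sqrt T\,\E_T[g_t^{(Z)}(\beta,\pi_0)]+o_p(1)
\]
uniformly on $\mc B$. By \ref{Tass:SPweak}(ii) and the first step, this converges weakly in sup norm to $\mc G(\beta)+(\bs\gamma\otimes I_{n-1})(\beta_0-\beta)$.

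Third, I would pass to $Q_T$. By \ref{Tass:SPweak}(iv) and uniform positive definiteness of $\Omega(\beta)$, we have $\sup_\beta\|\hat\Omega_T(\beta)^{-1}-\Omega(\beta)^{-1}\|=o_p(1)$. The map $(f,A)\mapsto f(\cdot)'A(\cdot)f(\cdot)$ is continuous in sup norm on bounded sets, so the continuous mapping theorem (with Slutsky) yields $Q_T\Rightarrow Q_\infty$. At $\beta=\beta_0$ the drift vanishes and $\mc G(\beta_0)\sim N(\bs0,\Omega(\beta_0))$, hence $Q_\infty(\beta_0)\sim\chi^2_{L(n-1)}$.

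The main obstacle is the uniformity in $\beta$ of the remainder in the second step under a triangular array, where $z_{t,T}$ changes with $T$. I would exploit the affine structure so that uniformity reduces to finitely many pointwise LLNs, each justified by a $\sup_T$ moment bound on $h_i(z_{t,T})Y_{t-1}$. I would also check that the mean-value point $\bar\pi$ enters nowhere, since the derivative does not depend on $\pi$.
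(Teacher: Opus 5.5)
Your proposal is correct and follows the paper's proof essentially step by step. The steps are the same: the exact drift $\sqrt T(\bs\rho_T\otimes I_{n-1})(\beta_0-\beta)\to(\bs\gamma\otimes I_{n-1})(\beta_0-\beta)$ uniformly on $\mc B$ under \ref{ass:strong_proxy_exogeneity} and \ref{ass:weak2}; the uniform first-step expansion $\sqrt T\,\bar g_T^{(Z)}(\beta,\hat\pi_T)=\mc G_T(\beta)+\sqrt T\,\E_T[g_t^{(Z)}(\beta,\pi_0)]+o_p(1)$; and then \ref{Tass:SPweak}(ii),(iv) with uniform positive definiteness and the continuous mapping theorem. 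Your added observation that the $\pi$-Jacobian is affine in $\beta$ and free of $\pi$, which reduces the uniform LLN the paper simply invokes to finitely many pointwise LLNs, is a useful sharpening; the paper itself notes this affine structure when discussing \ref{Tass:SPweak}.
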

\begin{proof}[Proof of Proposition~\ref{thm:weak_Q}]
By~\ref{Tass:SPweak}(iii), a uniform LLN entails that
\[\sup_{\beta\in\mathcal B}
\left\|
\sqrt T \left( \bar g_T^{(Z)}(\beta,\hat\pi_T)
-\bar g_T^{(Z)}(\beta,\pi_0) \right)
+
G_{\pi}(\beta)M^{-1}
\frac1{\sqrt T}\sum_{t=1}^T m_t(\pi_0)
\right\|
=o_p(1),\]
hence it holds uniformly on $\mc B$ that
\begin{align*}
    \sqrt T\bar g_T^{(Z)}(\beta,\hat\pi_T)
&=
\sqrt T\bar g_T^{(Z)}(\beta,\pi_0)
-
G_{\pi}(\beta)M^{-1}
\frac1{\sqrt T}\sum_{t=1}^T m_t(\pi_0)
+
o_p(1)\\
&= \frac1{\sqrt T}\sum_{t=1}^T
\left[
g_t^{(Z)}(\beta,\pi_0)
-\E [g_t^{(Z)}(\beta,\pi_0)] 
-G_{\pi}(\beta)M^{-1}m_t(\pi_0) \right] \\
& \hspace{3ex} + \sqrt{T} \E [g_t^{(Z)}(\beta,\pi_0)] +o_p(1) \\
&=: \mc G_T(\beta) + \sqrt{T} \E [g_t^{(Z)}(\beta,\pi_0)] +o_p(1).
\end{align*}
Assumption~\ref{ass:strong_proxy_exogeneity} implies that $\forall 1\leq i \leq L$,
$\E[h_i(z_t)r_t(\beta,\pi_0)] = (\beta_0-\beta)\E[h_i(z_t)\varepsilon_{1t}]$,
and hence Assumption~\ref{ass:weak2} entails that
\[
        \sqrt T\,\E[g_t^{(Z)}(\beta,\pi_0)]
        =
        \sqrt T (\bs\rho_T\otimes I_{n-1})(\beta_0-\beta) \to
        (\bs\gamma\otimes I_{n-1})(\beta_0-\beta)
\]
uniformly on $\mc B$. 
Assumptions \ref{Tass:SPweak}(ii) and (iv), uniform positive definiteness of $\Omega(\beta)$ and the continuous mapping theorem 
yield the stated convergence of the objective process. Finally, if $\beta=\beta_0$, the local drift in the limit vanishes, s.t.,
$\sqrt T\bar g_T^{(Z)}(\beta_0,\hat\pi_T) \Rightarrow \mc G(\beta_0)$, 
and consequently,
\[
        Q_T(\beta_0)
        \ \Rightarrow \
        \mc G(\beta_0)'\Omega(\beta_0)^{-1} \mc G(\beta_0)
        \sim
        \chi^2_{L(n-1)}.
\]
\end{proof}

In general, following Proposition~\ref{thm:weak_Q}, the GMM estimator need not be consistent and the $J$-test statistics do not have the usual $\chi^2_{(L-1)(n-1)}$ distribution under the null, even when efficient weighting is adopted.
The next Corollary summarizes the limiting behavior of GMM estimators, where the efficient weighting is
either estimated in two steps or via continuous updating, as well as their associated $J$-test statistics.
Denote the initial estimator $\hat \beta_{1,T}$ obtained as $\argmin_{\beta\in\mathcal B} T\bar g_T^{(Z)}(\beta,\hat\pi_T)'W_1 \bar g_T^{(Z)}(\beta,\hat\pi_T)$ for some fixed positive definite $W_1$ (e.g., $I_{L(n-1)}$) and the two-step efficient weighting as  $\hat\Omega_T(\hat\beta_{1,T})^{-1}$. 
The two-step estimator, denoted $\hat \beta_{2,T}$ is obtained  as $\argmin_{\beta\in\mathcal B} T\bar g_T^{(Z)}(\beta,\hat\pi_T)' \hat\Omega_T(\hat\beta_{1,T})^{-1} \bar g_T^{(Z)}(\beta,\hat\pi_T)$, while the continuous updating estimator, denoted $\hat \beta_{CUE,T}$, is obtained as $\argmin_{\beta\in\mathcal B} Q_T(\beta)$. 
Let $J_{2,T}$ and $J_{CUE, T}$ be the $J$-test statistics based on the two-step and the continuous updating estimator, respectively.
\begin{corollary}\label{cor:weak_failure}
Under assumptions of Proposition~\ref{thm:weak_Q}, 
\begin{align*}
    \hat\beta_{1,T} \ \Rightarrow \ \beta_1^\dagger, 
    \qquad
\hat\beta_{2,T} \ \Rightarrow \ \argmin_{\beta\in\mathcal B} Q_\infty^\dagger(\beta),
    \qquad
\hat\beta_{CUE,T} \ \Rightarrow \ \argmin_{\beta\in\mathcal B} Q_\infty(\beta),
\end{align*}
where $\beta_1^\dagger := \argmin_{\beta\in\mathcal B}
        \left[\mc G(\beta)+(\bs\gamma\otimes I_{n-1})(\beta_0-\beta)\right]'
        W_1
        \left[\mc G(\beta)+(\bs\gamma\otimes I_{n-1})(\beta_0-\beta)\right]$
and\\ 
$Q_\infty^\dagger(\beta) := \left[\mc G(\beta)+(\bs\gamma\otimes I_{n-1})(\beta_0-\beta)\right]'
        \Omega(\beta_1^\dagger)^{-1}
        \left[\mc G(\beta)+(\bs\gamma\otimes I_{n-1})(\beta_0-\beta)\right]$. 
Moreover, 
\[
J_{2,T} \ \Rightarrow \ \inf_{\beta \in\mathcal B} Q_\infty^\dagger(\beta)
\qquad
J_{CUE,T} \ \Rightarrow \ \inf_{\beta \in\mathcal B} Q_\infty(\beta).
\]
\end{corollary}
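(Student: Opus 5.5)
The plan is to derive both limits from the uniform weak convergence of the sample objectives established in Proposition~\ref{thm:weak_Q}, together with an argmin continuous mapping theorem. By Proposition~\ref{thm:weak_Q}, uniformly on $\mc B$,
\[
\sqrt T\,\bar g_T^{(Z)}(\beta,\hat\pi_T)\Rightarrow \mc G(\beta)+(\bs\gamma\otimes I_{n-1})(\beta_0-\beta)=:\xi(\beta).
\]
This is a weak limit in $\ell^\infty(\mc B)$ with a.s.\ continuous sample paths, since $\mc G$ has continuous paths and the drift is affine in $\beta$.

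\emph{First step.} The first-step objective is $T\bar g_T^{(Z)}(\beta,\hat\pi_T)'W_1\bar g_T^{(Z)}(\beta,\hat\pi_T)$. Because $W_1$ is fixed, the continuous mapping theorem gives weak convergence of this process to $\xi(\beta)'W_1\xi(\beta)$ in the sup norm on the compact set $\mc B$. I would then invoke the argmin continuous mapping theorem (e.g.\ van der Vaart--Wellner, Thm.~3.2.2) to obtain $\hat\beta_{1,T}\Rightarrow\beta_1^\dagger$. This requires that the limit process has an a.s.\ unique minimizer, which I would either add as a regularity condition, as in \cite{StockWright2000}, or note as holding generically for Gaussian $\mc G$. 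The theorem also requires that $\hat\beta_{1,T}$ is uniformly tight, which is immediate from compactness of $\mc B$.

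\emph{Two-step estimator.} The key extra point is that the weight depends on $\hat\beta_{1,T}$. I would establish the joint convergence
\[
\bigl(\sqrt T\bar g_T^{(Z)}(\cdot,\hat\pi_T),\ \hat\beta_{1,T}\bigr)\Rightarrow(\xi,\beta_1^\dagger).
\]
This holds because $\hat\beta_{1,T}$ is itself a (measurable, a.s.\ continuous at $\xi$) functional of the same process, so both components come from one continuous mapping. By Assumption~\ref{Tass:SPweak}(iv), $\sup_\beta\|\hat\Omega_T(\beta)-\Omega(\beta)\|=o_p(1)$; combined with continuity of $\Omega$, this gives $\hat\Omega_T(\hat\beta_{1,T})=\Omega(\hat\beta_{1,T})+o_p(1)\Rightarrow\Omega(\beta_1^\dagger)$ jointly with $\xi$. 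Uniform positive definiteness makes inversion continuous. Hence the two-step objective converges weakly in sup norm to $Q_\infty^\dagger$. The argmin continuous mapping theorem, again assuming a unique minimizer, gives the limit of $\hat\beta_{2,T}$.

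\emph{Continuous updating estimator and $J$-statistics.} For the CUE, Proposition~\ref{thm:weak_Q} already delivers $Q_T\Rightarrow Q_\infty$ uniformly, so the argmin theorem yields the limit of $\hat\beta_{CUE,T}$ directly. For the $J$-statistics, $J_{2,T}=\inf_{\beta\in\mc B}$ of the two-step objective and $J_{CUE,T}=\inf_\beta Q_T(\beta)$. The map $f\mapsto\inf_{\mc B}f$ is Lipschitz in the sup norm, so the continuous mapping theorem gives the stated limits $\inf Q^\dagger_\infty$ and $\inf Q_\infty$ without any uniqueness requirement.

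The main obstacle is the argmin step. Well-separatedness and uniqueness of the minimizer of the random Gaussian-plus-drift criterion are not implied by the stated assumptions. I would handle this by imposing a.s.\ uniqueness explicitly, as Stock and Wright do; alternatively, I would state the estimator limits in the set-valued or measurable-selection sense.
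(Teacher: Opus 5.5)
Your proposal is correct and is essentially the argument the paper invokes when it defers to Corollary 4 of \cite{StockWright2000}. That argument combines uniform weak convergence of $\sqrt T\,\bar g_T^{(Z)}(\cdot,\hat\pi_T)$ from the proof of Proposition~\ref{thm:weak_Q}, the continuous mapping theorem (applied jointly with the first-step estimator to handle the two-step weight), the argmin continuous mapping theorem, and Lipschitz continuity of the infimum for the $J$-statistics. You are right that the estimator limits require an a.s.\ unique minimizer of the limiting criteria, which the stated assumptions do not imply; the paper leaves this implicit, while Stock and Wright impose it explicitly.
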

The corollary follows from Proposition~\ref{thm:weak_Q} by the same argument used in Corollary 4 in \cite{StockWright2000}. In particular, the limiting distributions of $J_{2,T}$ and $J_{CUE,T}$ 
are nonstandard and depend on the nuisance parameter such as $\bs \gamma$, 
which is not known to the econometrician in practice.
As a consequence, standard inference that involves rejecting the null when the $J$-statistic exceeds
the $\chi^2_{(L-1)(n-1)}$ critical value is not valid.
With this being said, Proposition~\ref{thm:weak_Q} shows that the sample objective function converges to a standard $\chi^2$ distribution at the true value $\beta_0$, despite being weakly identified, 
provided that an efficient weighting $\Omega(\beta)^{-1}$ is employed. This gives rise to a weak proxy robust test based on inverting the objective function. 
Specifically, for a nominal level $\alpha$, define the set $\mc C_T(1-\alpha)$ and test statistic $\varphi_{T}(\alpha)$:
\begin{align}\label{eq:CT_varphi}
\mc C_T(1-\alpha)
    :=
    \left\{
    \beta\in\mathcal B:
    Q_T(\beta)\le \chi^2_{L(n-1),1-\alpha}
    \right\},
    \qquad
\varphi_{T}(\alpha)
    :=
\mathbf 1 \left\{ \mc C_T(1-\alpha)=\varnothing \right\},
\end{align}
where $\mathbf 1\{.\}$ is the indicator function.
The null hypothesis of strong proxy exogeneity is rejected if $\varphi_{T}(\alpha) = 1$, i.e., 
if the moment restrictions implied by SPs are incompatible with the data such that
the set $\mc C_T(1-\alpha)$ is empty.

Next, we show that the test size based on $ \varphi_{T}(\alpha)$ is asymptotically controlled -- albeit conservative -- under weak identification and consistent against fixed alternatives.
\begin{proposition}
\label{thm:weak_test}
Under assumptions of Proposition~\ref{thm:weak_Q}, under $H_0:\ \E[g_t^{(Z)}(\beta_0, \pi_0)] = \bs 0,$
\[
\lim_{T\to\infty} \Pr \left(\beta_0\in \mc C_T(1-\alpha) \right)  = 1-\alpha,
\qquad
\limsup_{T\to\infty} \Pr \left(\varphi_{T}(\alpha)=1 \right)
        \le \alpha.
\]
Under the fixed $H_1: \liminf_{T \to \infty} \inf_{\beta}\,\big\|\E[g_t^{(Z)}(\beta, \pi_0)]\big\| > 0$, we have
$\Pr \left(\varphi_{T}(\alpha)=1 \right)\to 1.$
\end{proposition}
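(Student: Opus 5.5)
The plan is to deduce both size claims from Proposition~\ref{thm:weak_Q}, and to obtain consistency from a uniform expansion of the sample moments that has no $\sqrt T$ scaling. \emph{Coverage.} Proposition~\ref{thm:weak_Q} gives $Q_T(\beta_0)\Rightarrow\chi^2_{L(n-1)}$ under $H_0$. The $\chi^2_{L(n-1)}$ law is continuous, so the critical value $\chi^2_{L(n-1),1-\alpha}$ is a continuity point of the limit distribution. The portmanteau theorem therefore gives
\[
\Pr\big(\beta_0\in\mc C_T(1-\alpha)\big)=\Pr\big(Q_T(\beta_0)\le\chi^2_{L(n-1),1-\alpha}\big)\to 1-\alpha .
\]
\emph{Size.} Since $\beta_0\in\mc B$, the event $\{\beta_0\in\mc C_T(1-\alpha)\}$ is contained in $\{\mc C_T(1-\alpha)\neq\varnothing\}$. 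Hence $\Pr(\varphi_T(\alpha)=1)\le 1-\Pr(\beta_0\in\mc C_T(1-\alpha))\to\alpha$, and the $\limsup$ bound follows. This part does not depend on the nuisance parameter $\bs\gamma$. That independence is exactly why the test is conservative rather than exact: the set may be nonempty because some $\beta\neq\beta_0$ has a small $Q_T$.

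\emph{Consistency.} Under the fixed alternative the argument works at the level of $\bar g_T^{(Z)}$. First, I show that
\[
\sup_{\beta\in\mc B}\big\|\bar g_T^{(Z)}(\beta,\hat\pi_T)-\E_T[g_t^{(Z)}(\beta,\pi_0)]\big\|=o_p(1).
\]
This comes from the uniform expansion in the proof of Proposition~\ref{thm:weak_Q}, divided by $\sqrt T$. The stochastic term is $T^{-1/2}\mc G_T(\beta)=O_p(T^{-1/2})$ uniformly, by the functional CLT in~\ref{Tass:SPweak}(ii) and tightness of the sup norm. The first-step term $G_\pi(\beta)M^{-1}T^{-1}\sum_t m_t(\pi_0)$ is $o_p(1)$ uniformly, using the bounded derivative in~\ref{Tass:SPweak}(iii) and $\hat\pi_T\overset{p}{\to}\pi_0$. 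Alternatively, since $g_t^{(Z)}$ is affine in $\beta$, a direct uniform LLN over the compact $\mc B$ also works.

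Next, let $c:=\liminf_T\inf_\beta\|\E_T[g_t^{(Z)}(\beta,\pi_0)]\|>0$. With probability approaching one, $\inf_{\beta}\|\bar g_T^{(Z)}(\beta,\hat\pi_T)\|\ge c/2$. By~\ref{Tass:SPweak}(iv) and uniform positive definiteness of $\Omega$, the smallest eigenvalue of $\hat\Omega_T(\beta)^{-1}$ is bounded below uniformly in $\beta$ by some $\underline\lambda>0$ w.p.a.1, using $\sup_\beta\|\Omega(\beta)\|<\infty$ by continuity on the compact set. This gives
\[
\inf_\beta Q_T(\beta)\ge T\,\underline\lambda\,c^2/4\to\infty
\]
w.p.a.1. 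Eventually this exceeds the fixed critical value, so $\mc C_T(1-\alpha)=\varnothing$ and $\Pr(\varphi_T(\alpha)=1)\to1$.

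The main obstacle is the uniform (in $\beta$) control under the alternative. The FCLT in~\ref{Tass:SPweak}(ii) is stated around the centered moments, so I must make sure it, or the uniform LLN, remains valid when $\E_T g_t^{(Z)}$ does not vanish at any $\beta$. I would rely on the affine structure of $g_t^{(Z)}$ in $\beta$. Writing $g_t^{(Z)}(\beta,\pi)=a_t(\pi)-(\bs h(z_t)u_{1t}(\pi))\otimes\beta$, with the $\beta$-part handled the same way, reduces uniform convergence over $\mc B$ to pointwise LLNs for finitely many sample averages, multiplied by bounded functions of $\beta$.
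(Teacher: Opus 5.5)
Your proposal is correct and follows essentially the same route as the paper: coverage from $Q_T(\beta_0)\Rightarrow\chi^2_{L(n-1)}$, size from the inclusion $\{\beta_0\in\mc C_T(1-\alpha)\}\subseteq\{\mc C_T(1-\alpha)\neq\varnothing\}$, and consistency from a uniform LLN for $\bar g_T^{(Z)}(\beta,\hat\pi_T)$ combined with a uniform lower bound on the eigenvalues of $\hat\Omega_T(\beta)^{-1}$, giving $\inf_\beta Q_T(\beta)\ge T\,\underline\lambda\,c^2/4\to\infty$. Your added details are correct sharpenings of the paper's sketch. These are the continuity-point justification, the observation that the eigenvalue lower bound really comes from $\sup_{\beta}\|\Omega(\beta)\|<\infty$ (continuity on compact $\mc B$) rather than from uniform positive definiteness, and the care about whether the uniform expansion still holds under the alternative.
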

\begin{proof}[Proof of Proposition~\ref{thm:weak_test}]
By Proposition~\ref{thm:weak_Q}, $Q_T(\beta_0)  \Rightarrow  \chi^2_{L(n-1)}$ and therefore,
\[
\Pr \left(\beta_0\in \mc C_T(1-\alpha) \right) = 
\Pr \left( Q_T(\beta_0)\le \chi^2_{L(n-1),1-\alpha} \right) \to 1-\alpha,
\]
under the null. Next, since $\beta_0\in\mc B$, the event $\inf_{\beta\in\mathcal B}Q_T(\beta)> \chi^2_{L(n-1),1-\alpha} $ implies $Q_T(\beta_0)> \chi^2_{L(n-1),1-\alpha}$, i.e.,
\[
\left\{ \inf_{\beta\in\mathcal B}Q_T(\beta)> \chi^2_{L(n-1),1-\alpha} \right\}
\subseteq
\left\{ Q_T(\beta_0)> \chi^2_{L(n-1),1-\alpha} \right\},
\]
and the monotonicity of the probability measure entails that
\[
\Pr \left(\varphi_{T}(\alpha)=1 \right)
\le
\Pr \left( Q_T(\beta_0)>  \chi^2_{L(n-1),1-\alpha} \right) \ \to \ \alpha,
\]
which shows the asymptotic size control. Note that the inequality can be strict, so the test is asymptotically  conservative. 
For the consistency statement, under the fixed alternative, there exists $c_1 > 0$, s.t., $ \inf_{\beta}\,\big\|\E[g_t^{(Z)}(\beta, \pi_0)]\big\| \geq  c_1$ for all sufficiently large $T$. 
By the Assumption~\ref{Tass:SPweak}(iii), (iv), a uniform LLN entails that
$\sup_{\beta\in\mc B}\left\| \bar g_T^{(Z)}(\beta,\hat\pi_T)- \E[g_t^{(Z)}(\beta, \pi_0)] \right\| =o_p(1)$, which implies that $\inf_{\beta\in\mathcal B} \left\|\bar g_T^{(Z)}(\beta,\hat\pi_T) \right\| \ge c_1/2$ with probability approaching one (w.p.a.o). 
By Assumption~\ref{Tass:SPweak}(iv) and uniform positive definiteness of $\Omega(\beta)$ on the compact set $\mc B$, the eigenvalues of the estimated efficient weighting are strictly positive, so there exists a constant $c_2 > 0$, s.t.,
$\lambda_{\min}\left (\hat\Omega_T(\beta)^{-1}\right) \ge  c_2 $  w.p.a.o.\ uniformly on $\mc B$.
Hence,
\[
\inf_{\beta\in\mathcal B}Q_T(\beta) \ge
T c_2 \inf_{\beta\in\mathcal B} \left\| \bar g_T^{(Z)}(\beta,\hat\pi_T) \right\|^2
\ge T c_2 c_1^2/4 \to\infty
\]
in probability, and therefore $\Pr \left(\varphi_{T}(\alpha)=1 \right)\to 1$, which proves consistency of the test.
\end{proof}

For given $\alpha$, if the set $\mc C_T(1-\alpha)$ is empty, the test rejects the strong proxy exogeneity. Note that $\Pr \left(\varphi_{T}(\alpha)=1 \right)$ can also be smaller than $\alpha$ under the null in large samples. Hence, the test is asymptotically conservative. 
If the set $\mc C_T(1-\alpha)$ is not empty, the test fails to reject the null, which however 
does not necessarily lead to the conclusion that the strong proxy exogeneity holds.
A failure of rejection can reflect either validity of the moment restrictions or 
that the data contain little information about the weakly identified parameter.
If the SPs are jointly weak or not relevant at all ($\bs \gamma = \bs 0$, i.e., lack of identification), the
set can be large or contain the whole parameter space $\mc B$, respectively.

\subsection{Adaptive testing}
\label{app:adaptivity}
Proposition~\ref{thm:beta_SP} also reveals an interesting adaptivity property. We next derive the necessary and sufficient condition for $G_\pi=\bs 0$, under which estimation of the reduced form does not affect the estimation and testing of the structural parameter.
\begin{proposition}\label{thm:adapt_SP}
   Under the conditions in Proposition~\ref{thm:beta_SP}, $G_\pi=\bs 0$ if and only if $\E[\bs h(z_t)Y_{t-1}']=\bs 0.$
   In this case, $\hat\beta_T(W)$ is adaptive such that the reduced-form parameter $\pi_0$ can be treated as if its true value were known. Moreover, $S_\perp=S_{gg}$, $\hat\beta_T(S_{gg}^{-1})$ is asymptotically efficient and, for $L>1$, $J_T(S_{gg}^{-1})\Rightarrow\chi^2_{(L-1)(n-1)}$ under $H_0$.
\end{proposition}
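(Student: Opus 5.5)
The statement to prove is Proposition~\ref{thm:adapt_SP}. I will read off both directions of the equivalence from the closed form of $G_\pi$ in Proposition~\ref{thm:beta_SP}, namely $G_\pi=-\E[\bs h(z_t)Y_{t-1}']\otimes(\mc J-\beta_0\iota_1')$.

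The ``if'' direction is immediate: if $\E[\bs h(z_t)Y_{t-1}']=\bs 0$, the Kronecker product vanishes. For ``only if'', I use that $\mc J-\beta_0\iota_1'=(-\beta_0\ \ I_{n-1})$ has full row rank $n-1$, so it is nonzero. For matrices $A$ and $C$ with $C\neq 0$, we have $A\otimes C=0$ if and only if $A=0$. To see this, pick an entry $c_{kl}\neq0$ of $C$; then every entry $a_{ij}c_{kl}$ of the Kronecker product must vanish, which forces $a_{ij}=0$. This gives the equivalence.

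For adaptivity, I substitute $G_\pi=\bs 0$ into the expansion in the proof of Proposition~\ref{thm:beta_SP}. There, $\sqrt T(\hat\beta_T(W)-\beta_0)=-(G_\beta'WG_\beta)^{-1}G_\beta'W\,T^{-1/2}\sum_t[g_t^{(Z)}(\beta_0,\pi_0)-G_\pi M^{-1}m_t(\pi_0)]+o_p(1)$. With $G_\pi=\bs 0$, the first-step term drops out, and the influence function coincides with that of the infeasible estimator that uses $u_t(\pi_0)$. Hence $\pi_0$ can be treated as known. It also follows that $g_t^{(Z),\perp}=g_t^{(Z)}(\beta_0,\pi_0)$, so $S_\perp=\E[g_tg_t']=S_{gg}$.

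Efficiency of $\hat\beta_T(S_{gg}^{-1})$ and the $\chi^2_{(L-1)(n-1)}$ limit of $J_T(S_{gg}^{-1})$ then follow from the Corollary to Proposition~\ref{thm:beta_SP} with $S_\perp^{-1}=S_{gg}^{-1}$. Equivalently, the matrix $S_\perp^{1/2}WR(W)S_\perp^{1/2}$ with $W=S_\perp^{-1}$ is the symmetric idempotent $I-S_\perp^{-1/2}G_\beta(G_\beta'S_\perp^{-1}G_\beta)^{-1}G_\beta'S_\perp^{-1/2}$, which has rank $(L-1)(n-1)$.

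I expect no real obstacle here. The only point needing care is the Kronecker-factor cancellation in the ``only if'' direction, which hinges on $\mc J-\beta_0\iota_1'\neq 0$, and that always holds because it contains $I_{n-1}$.
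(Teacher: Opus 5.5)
Your proposal is correct and follows the paper's proof essentially step for step. Both arguments read the equivalence off the Kronecker form $G_\pi=-\E[\bs h(z_t)Y_{t-1}']\otimes(\mc J-\beta_0\iota_1')$ using $\mc J-\beta_0\iota_1'\neq 0$, then substitute $G_\pi=\bs 0$ to obtain $g_t^{(Z),\perp}=g_t^{(Z)}(\beta_0,\pi_0)$ and $S_\perp=S_{gg}$, and finally invoke the corollary. Your entrywise argument for cancelling the nonzero Kronecker factor fills in a step the paper leaves implicit.
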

\begin{proof}[Proof of Proposition~\ref{thm:adapt_SP}]
By Proposition~\ref{thm:beta_SP}, the influence function of $\hat\beta_T(W)$ does not depend on $\pi$
if and only if $G_\pi=0$. For $n\ge2$, we show that $\mc J-\beta_0\iota_1'\neq0$.
Indeed, for any $j\in\{2,\ldots,n\}$, the $(j-1)$-th row of $\mc J-\beta_0\iota_1'$
equals $\iota_j'-\beta_{0,j-1}\iota_1'$, whose $j$-th entry is unity, i.e.,
$(\iota_j'-\beta_{0,j-1}\iota_1')\iota_j=1$,
so this row is nonzero. Therefore $\mc J-\beta_0\iota_1'\neq0$.
Since
$
G_\pi = -\E[\bs{h}(z_t)Y_{t-1}'] \otimes (\mc J-\beta_0\iota_1'),
$
it follows that $G_\pi=0$ if and only if $\E[\bs h(z_t)Y_{t-1}']=0$.
The remaining claims follow by substituting $G_\pi=0$ into the expressions for
$g_t^{(Z),\perp}$ and $S_\perp$, i.e., 
$g_t^{(Z),\perp}(\beta_0,\pi_0)=g_t^{(Z)}(\beta_0,\pi_0)$ and hence $S_\perp=S_{gg}$.
\end{proof}
When adaptivity applies,  orthogonality to the nuisance tangent space would lead to the same limiting behavior of $\hat \beta_T$, as if the reduced-form parameter $\pi$ were perfectly known. For example, if the proxy is strongly lag-exogenous $\E[\varepsilon_s | z_t]=0$ for $s<t$ and the VAR process (for demeaned data) has no deterministic term s.t.\ $\E[Y_{t-1}] = 0$, then $\E[\bs h(z_t)Y_{t-1}']=\bs 0$ and adaptivity holds.\footnote{Note that for such a model, the regressor vector reads as $Y_{t-1} = (y_{t-1}',\ldots,y_{t-p}')'$.} But Proposition~\ref{thm:adapt_SP} also suggests an easy way to construct \emph{efficient synthetic proxies}, defined in population as
\[
\bs h^\perp(z_t)  :=\ \bs h(z_t)-\E[\bs h(z_t)Y_{t-1}']\,\Gamma^{-1}Y_{t-1}
\]
which will satisfy $\E[\bs h^\perp(z_t) Y_{t-1}'] = \bs 0 $ and lead to adaptivity automatically. In practice, we can obtain efficient SPs by  regressing synthesized instruments on the VAR regressor and taking the residual from the regression as new instruments. Replacing $\bs h(z_t)$ by $\bs h^\perp(z_t) $ leads to an adaptive procedure for estimation and testing, so that the naive weighting is optimal and yields the correct size in over-identification tests.

For testing with ISP, the next corollary shows that, for the case when the researcher employs a single quadratic ISP ($L=2$), a simple set of primitive conditions yields \emph{joint adaptivity} with respect to $(\pi,\omega)$, so that their estimation uncertainty becomes asymptotically negligible.
\begin{corollary}\label{cor:ISP_adapt}
Let $L=2$ with moment stack
$
g_t^{(Z^*)}(\beta,\pi,\omega)
=
\big(g_{1,t}^*(\beta,\pi)',\,g_{2,t}^*(\beta,\pi,\omega)'\big)',
$
where $g_{1,t}^*(\beta,\pi)=z_t r_t(\beta,\pi)$ and
$g_{2,t}^*(\beta,\pi,\omega)=\big(z_t^*(\beta,\pi,\omega)\big)^2 r_t(\beta,\pi)$.
Under the conditions in Proposition~\ref{thm:beta_ISP} and suppose, in addition, that
i) $\E[z_t Y_{t-1}']=\bs 0$,
ii) VAR has no deterministic terms s.t., $\E[Y_{t-1}]=\bs 0$,
and
iii)  $\E[\bs \varepsilon_{2t}\bs \varepsilon_{2t}'\mid \varepsilon_{1t}]=\Sigma_{22}$,
then, the GMM estimator $\hat\beta_T^*$ based on $g_t^{(Z^*)}(\beta,\hat\pi_T,\hat\omega_T)$ is \emph{jointly adaptive} with respect to $(\pi,\omega)$ in the sense that $G_\pi^*=\bs 0$ and $G_\omega^*=\bs 0$,
so that the influence function of $\hat\beta_T^*$ coincides with the influence function obtained when $(\pi_0,\omega_0)$ are treated as known. Consequently, $S_\perp^* = S_{gg}^*:=\E[g_t^{(Z^*)}(\beta_0,\pi_0,\omega_0)g_t^{(Z^*)}(\beta_0,\pi_0,\omega_0)']$ and $S_{gg}^{*-1}$ is the optimal weighting and $J_T^*\left(S_{gg}^{*-1}\right) \ \Rightarrow\ \chi^2_{(L-1)(n-1)}$ under $H_0$.
\end{corollary}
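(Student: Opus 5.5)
The strategy is to compute $G_\pi^*$ and $G_\omega^*$ block by block using Lemma~\ref{lem:G_pi_star} and Lemma~\ref{lem:G_omega_star} with $L=2$. Once both vanish, the orthogonalized moment satisfies $g_t^{(Z^*),\perp}=g_t^{(Z^*)}(\beta_0,\pi_0,\omega_0)$, and the remaining claims follow from Proposition~\ref{thm:beta_ISP}.

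\textbf{The $\omega$-Jacobian.} Lemma~\ref{lem:G_omega_star} under condition iii) gives $G_{\omega,1}^*=0$ and $G_{\omega,2}^*=C_2\,\E[\varepsilon_{1t}]$. Since $\E[\varepsilon_{1t}]=0$, we get $G_\omega^*=\bs 0$. This is where restricting to $L=2$ matters: for $i=3$ the factor would be $\E[\varepsilon_{1t}^2]=\sigma_1^2\neq0$.

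\textbf{The $\pi$-Jacobian, first block.} Lemma~\ref{lem:G_pi_star} gives $G_{\pi,1}^*=-\E[z_tY_{t-1}']\otimes(\mc J-\beta_0\iota_1')$, which is zero by condition i).

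\textbf{The $\pi$-Jacobian, second block.} Here $G_{\pi,2}^*$ contains two terms:
\[
\E[\varepsilon_{1t}^2 Y_{t-1}'] \qquad\text{and}\qquad \E[\varepsilon_{1t}\bs\varepsilon_{2t}(Y_{t-1}'\otimes b_0'\Sigma_{u,0}^{-1})].
\]
Lemma~\ref{lem:G_pi_star} kills these using serial independence of $\varepsilon_t$, which is not listed as a hypothesis here. It is, however, delivered by the i.i.d.\ part of Assumption~\ref{Tass:base}(ii). Under stability, $Y_{t-1}$ is a measurable function of $\{\varepsilon_s\}_{s<t}$, so $Y_{t-1}$ is independent of $\varepsilon_t$. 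Factoring the expectations, the first term becomes $\E[\varepsilon_{1t}^2]\,\E[Y_{t-1}']=\sigma_1^2\cdot\bs 0$ by condition ii). The second becomes
\[
\E[\varepsilon_{1t}\bs\varepsilon_{2t}]\,\bigl(\E[Y_{t-1}']\otimes b_0'\Sigma_{u,0}^{-1}\bigr),
\]
which vanishes both because $\E[\varepsilon_{1t}\bs\varepsilon_{2t}]=\E[\varepsilon_{1t}\E[\bs\varepsilon_{2t}\mid\varepsilon_{1t}]]=\bs 0$ (Lemma~\ref{lem:ISP_ass}) and because $\E[Y_{t-1}]=\bs 0$. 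Hence $G_\pi^*=\bs 0$.

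\textbf{Conclusion.} Substituting $G_\pi^*=G_\omega^*=\bs 0$ into $g_t^{(Z^*),\perp}$ yields $g_t^{(Z^*),\perp}=g_t^{(Z^*)}(\beta_0,\pi_0,\omega_0)$. The expansion in the proof of Proposition~\ref{thm:beta_ISP} then shows that the influence function of $\hat\beta_T^*$ equals the known-$(\pi_0,\omega_0)$ one. It follows that $S_\perp^*=S_{gg}^*$, so $S_{gg}^{*-1}$ is the efficient weight and $J_T^*(S_{gg}^{*-1})\Rightarrow\chi^2_{(L-1)(n-1)}$, directly from Proposition~\ref{thm:beta_ISP}.

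The main obstacle is the independence of $Y_{t-1}$ and $\varepsilon_t$, including the cross-moment $\E[\varepsilon_{1t}\bs\varepsilon_{2t}Y_{t-1}']$. I would justify it explicitly from the i.i.d.\ assumption together with the causal MA representation, rather than leaving it as an extra hypothesis.
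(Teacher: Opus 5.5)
Your proposal is correct and follows the paper's argument. The paper likewise gets $G_\omega^*=\bs 0$ from Lemma~\ref{lem:G_omega_star} (using $\E[\varepsilon_{1t}]=0$ when $L=2$) and $G_\pi^*=\bs 0$ from Lemma~\ref{lem:G_pi_star}, then substitutes into Proposition~\ref{thm:beta_ISP}; your explicit check that the lemma's serial-independence hypothesis comes from the i.i.d.\ part of Assumption~\ref{Tass:base}(ii) is a useful clarification the paper leaves implicit.
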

The corollary follows from Lemma~\ref{lem:G_omega_star} and Lemma~\ref{lem:G_pi_star}  in Appendix~\ref{app:proof}.

\subsection{Feasible estimation of the optimal ISP weighting matrix}\label{app:weighting_est}
This appendix details the computation of the feasible counterpart of the efficient weighting matrix
$S_\perp^{*-1}$ for the ISP-GMM estimator. The key object is the plug-in covariance of the
orthogonalized moments
\[
\hat S_{\perp,T}^* \;:=\; \frac1T\sum_{t=1}^T
\hat g_{t,T}^{(Z^*),\perp}\,\hat g_{t,T}^{(Z^*),\perp\,\prime}
\]
where the orthogonalized moment contribution at $t$ is
\[
\hat g_{t,T}^{(Z^*),\perp}
:=
g_t^{(Z^*)}(\tilde\beta_T^*,\hat\pi_T,\hat\omega_T)
-\hat G_{\pi,T}^*\,\hat M_T^{-1}\,m_t(\hat\pi_T)
+\hat G_{\omega,T}^*\,s_t(\hat\pi_T,\hat\omega_T),
\]
and $\tilde\beta_T^*$ is any preliminary consistent estimator of $\beta$ (e.g., one-step GMM
under identity or naive weighting).

\begin{enumerate}
    \item Estimate the reduced-form parameters $(\hat \pi_T, \hat \omega_T)$:  
    \[
    \hat \pi_T = [(YY')^{-1}Y\otimes I_n]\b y 
    \quad
    \text{and}
    \quad
    \hat \omega_T = \vect{\hat \Sigma_u},
    \, \text{ with }
    \hat \Sigma_u = \frac1T\sum^T_{t=1} u_t(\hat \pi_T) u_t(\hat \pi_T)'.
    \]
    \item Compute a preliminary GMM estimator $\tilde\beta_T^*$ with identity or naive weighting.
    \item Compute the moment contributions for each $t = 1, \ldots, T$:
        \[
        g_t^{(Z^*)}(\tilde \beta_T^*, \hat \pi_T, \hat \omega_T):=\big(g_{1,t}^*(\tilde \beta_T^*, \hat \pi_T)',g_{2,t}^*(\tilde \beta_T^*, \hat \pi_T, \hat \omega_T)',\ldots,g_{L,t}^*(\tilde \beta_T^*, \hat \pi_T, \hat \omega_T)'\big)',
        \]
        where $g_{1,t}^*(\tilde \beta_T^*, \hat \pi_T):=h_1(z_t) r_t(\tilde \beta_T^*, \hat \pi_T) = z_t r_t(\tilde \beta_T^*, \hat \pi_T)$
        is the original (external) proxy moment (which does not depend on $\omega$) and for $i=2,\ldots,L$,
        \[
        g_{i,t}^*(\tilde \beta_T^*, \hat \pi_T, \hat \omega_T):=h_i (\hat z_t^*)\,r_t(\tilde \beta_T^*, \hat \pi_T),
        \]
        with ISP given by
        \[
        \hat z_t^*:= \hat z_t^*(\tilde \beta_T^*, \hat \pi_T, \hat \omega_T) = \frac{b(\tilde \beta_T^*)'\Sigma_u(\hat \omega_T)^{-1}u_t(\hat \pi_T)}{\dot\sigma_1(\tilde \beta_T^*, \hat \omega_T)}.
        \]
    Also compute the $\pi$- and $\omega$-moment contributions:
    \[
    m_t(\hat \pi_T) =(Y_{t-1} \otimes I_n)  u_t(\hat \pi_T) 
    \quad
    \text{and}
    \quad
    s_t(\hat\pi_T,\hat\omega_T) = \vect{ u_t(\hat \pi_T) u_t(\hat \pi_T)' - \Sigma_u(\hat \omega_T)}.
    \]
    \item Compute Jacobian estimates
    \begin{enumerate}
        \item Compute the $d_\pi\times d_\pi$ matrix: $\hat M_T = -(\hat \Gamma_T \otimes I_n)$, where $\hat \Gamma_T = \frac1T \sum^T_{t=1} Y_{t-1} Y_{t-1}'$ 
        \item Compute the $L(n-1) \times d_\pi$ matrix:
        \[
        \hat G_{\pi,T}^* = \frac1T \sum^T_{t=1} 
        \begin{pmatrix}
            \frac{\partial g_{1,t}^*(\tilde\beta_T^*,\hat\pi_T)}{\partial\pi'}\\
            \frac{\partial g_{2,t}^*(\tilde\beta_T^*,\hat\pi_T,\hat\omega_T)}{\partial\pi'}\\
            \vdots\\
            \frac{\partial g_{L,t}^*(\tilde\beta_T^*,\hat\pi_T,\hat\omega_T)}{\partial\pi'}
        \end{pmatrix},
        \quad 
        \text{where }
        \frac{\partial g_{1,t}^*(\tilde\beta_T^*,\hat\pi_T)}{\partial\pi'} = -z_t(Y_{t-1}'\otimes(\mc J-\tilde\beta_T^* \iota_1'))
        \]
        and for $i \geq 2$,
        \begin{align*}
            \frac{\partial g_{i,t}^*(\tilde\beta_T^*,\hat\pi_T,\hat\omega_T)}{\partial\pi'} &=
        -(\hat z_t^*)^i (Y_{t-1}'\otimes(\mc J- \tilde \beta_T^* \iota_1'))\\
        &\hspace{3ex} -\frac{i}{\dot\sigma_1(\tilde \beta_T^*, \hat \omega_T)}(\hat z_t^*)^{i-1}\,r_t(\tilde \beta_T^*, \hat \pi_T)\,(Y_{t-1}'\otimes b(\tilde \beta_T^*)'\hat \Sigma_u^{-1}).
        \end{align*}
        \item Compute the $L(n-1) \times n^2$ matrix:
        \[
        \hat G_{\omega,T}^*
        =
        \frac1T\sum_{t=1}^T
        \begin{pmatrix}
        0\\
        \frac{\partial g_{2,t}^*(\tilde\beta_T^*,\hat\pi_T,\hat\omega_T)}{\partial\omega'}\\
        \vdots\\
        \frac{\partial g_{L,t}^*(\tilde\beta_T^*,\hat\pi_T,\hat\omega_T)}{\partial\omega'}
        \end{pmatrix},
        \]
        where for $i \geq 2$,
        \begin{align*}
        \frac{\partial g_{i,t}^*(\tilde \beta_T^*, \hat \pi_T,\hat \omega_T)}{\partial\omega'} &=
        i (\hat z_t^*)^{i-1}\, r_t(\tilde \beta_T^*, \hat \pi_T)
        \Bigg[-\frac{1}{\dot\sigma_1(\tilde \beta_T^*, \hat \omega_T)}(\hat u_t' \hat \Sigma_u^{-1} \otimes b(\tilde \beta_T^*)' \hat\Sigma_u^{-1})  \\
        &\hspace{22ex} + \frac{\hat z_t^*}{2 \dot\sigma_1^2(\tilde \beta_T^*, \hat \omega_T)}
        (b(\tilde \beta_T^*)' \hat\Sigma_u^{-1} \otimes b(\tilde \beta_T^*)' \hat\Sigma_u^{-1})
        \Bigg]
        \end{align*}
    \end{enumerate}
    \item Assemble orthogonalized moments and estimate $\hat S_{\perp, T}^*$. A feasible efficient estimator is then obtained by re-estimating $\beta$ via GMM using weighting $\hat S_{\perp, T}^{*-1}$.
\end{enumerate}

\subsection{Further applications}\label{app:empirical}

In this appendix, we employ the SP and ISP tests to assess the strong exogeneity of various proxies employed in a wide range of existing proxy SVAR applications. Table~\ref{tab:exploratory-applications} documents the results of these tests, along with the corresponding $F$-statistics, estimated skewness and excess kurtosis of the target shock, which help gauge the power properties of the tests.

We apply the tests to a variety of monetary policy applications for each individual proxy. 
First, we consider the UK monetary policy SVAR from \cite{braun2025measuring}, using their two proxies for the target and path shocks.
Second, we consider the US monetary policy SVAR from \cite{jarocinski2020deconstructing} with eight different proxies. FF4 is the high-frequency change in three-month-ahead federal funds futures around FOMC announcements. CBI is the sign-decomposed central-bank-information component from \cite{jarocinski2020deconstructing}, while MP is the sign-decomposed monetary policy component from \cite{jarocinski2020deconstructing}. MPI is the information-adjusted high-frequency monetary policy instrument from \cite{miranda2021transmission}.
CMP is based on the conventional monetary policy shocks from \cite{georgiadis2025global} OFG is based on the Odyssean forward-guidance shocks from \cite{georgiadis2025global}, LSAP is based on the large-scale asset-purchase shocks from \cite{georgiadis2025global}, and DFG is based on the Delphic forward-guidance shocks from \cite{georgiadis2025global}. 
We find that most monetary policy proxies display small $F$-values, indicating only weak correlation between the proxy and the target shock. Moreover, most identified monetary policy shocks are left-skewed and exhibit positive excess kurtosis. We cannot reject the null of strong exogeneity for any of the monetary policy proxies. However, particularly in applications with low proxy relevance, failure to reject the null of proxy exogeneity may reflect low power due to weak proxy strength.

In addition, we analyze two proxies from the oil-market literature. First, we replicate \cite{kanzig2021macroeconomic} using the oil-supply news proxy. Second, we replicate the three-variable oil-market SVAR used in \cite{MontielOleaStockWatson2021}, using the \cite{kilian2008exogenous} oil supply shock measure as an oil supply proxy.
We find that both oil-market proxies have $F$-values above $10$ and are associated with left-skewed shocks with positive excess kurtosis. In this setting, the strong exogeneity tests are expected to have power, yet neither test rejects the null of proxy exogeneity.

Moreover, we analyze three proxies in the fiscal policy SVAR from \cite{mertens2014reconciliation}.
The spending proxy uses the growth rate of military spending as a proxy for government spending shocks, similar to \cite{keweloh2025estimating}. The tax proxy uses the narrative series of exogenous tax changes from \cite{mertens2014reconciliation} as a proxy for tax shocks, while the TFP proxy uses the TFP series from \cite{fernald2014quarterly}, as used by \cite{caldara2017analytics}, as a non-fiscal proxy.
The fiscal proxy tests are similar to those considered in \cite{BK2024}, however, we use $L=3$ instead of only $L=2$. Similar to \cite{BK2024}, we find no evidence against exogeneity for the spending and TFP proxies, while the ISP test rejects the null of strong exogeneity for the tax proxy, which is in line with the results in \cite{keweloh2025estimating}.

Finally, we consider four additional proxy applications. We replicate \cite{kanzig2023unequal} to analyze the exogeneity of the carbon policy surprise proxy, \cite{piffer2018identifying} to analyze the exogeneity of the uncertainty proxy based on the gold price, \cite{miranda2026patents} to test the exogeneity of the patent technology news proxy, and \cite{lagerborg2023sentimental} to evaluate the exogeneity of the fatalities proxy for sentiment shocks.
We find no evidence against strong exogeneity for the carbon, technology, and sentiment proxies, although the small $F$-value of the technology proxy suggests that the tests may have little power in this application.
For the uncertainty proxy based on the gold price, however, the ISP test provides evidence against strong exogeneity. This is in line with the results of \cite{piffer2018identifying}, who presume that their gold proxy is simultaneously affected by uncertainty and news shocks. The ISP test supports this presumption.

\begin{sidewaystable}[p]
\centering
\caption{Overview of strong exogeneity tests in different applications}
\label{tab:exploratory-applications}
\footnotesize
\setlength{\tabcolsep}{3pt}
\renewcommand{\arraystretch}{1.08}

\begin{tabular}{@{}
    >{\raggedright\arraybackslash}p{5cm}
    >{\raggedright\arraybackslash}p{2cm}
    >{\centering\arraybackslash}p{2.5cm}
    >{\centering\arraybackslash}p{2.2cm}
    >{\centering\arraybackslash}p{3.2cm}
    >{\centering\arraybackslash}p{2.2cm}
    >{\centering\arraybackslash}p{2.2cm}
    @{}}
\toprule
Application & Proxy & $F$ ($p$) & Robust $F$ ($p$) & \shortstack{Target-shock skewness /\\ excess kurtosis} & \shortstack{ISP test:\\ $J$ ($p$)} & \shortstack{SP test:\\ $J$ ($p$)} \\
\midrule
\cite{braun2025measuring} & Target & $38.61$ ($0.000$) & $26.32$ ($0.000$) & $-0.10 / 1.21$ & $10.54\;(0.569)$ & $14.85\;(0.249)$ \\
\cite{braun2025measuring} & Path & $5.30$ ($0.022$) & $3.30$ ($0.070$) & $-0.16 / -0.01$ & $8.69\;(0.729)$ & $14.31\;(0.282)$ \\
\cite{jarocinski2020deconstructing} & FF4 & $13.15$ ($0.000$) & $7.64$ ($0.006$) & $-0.06 / 0.81$ & $5.22\;(0.734)$ & $9.51\;(0.301)$ \\
\cite{jarocinski2020deconstructing} & CBI & $1.12$ ($0.290$) & $0.24$ ($0.623$) & $-0.30 / 1.01$ & $7.54\;(0.480)$ & $4.64\;(0.795)$ \\
\cite{jarocinski2020deconstructing} & MP & $5.03$ ($0.025$) & $3.32$ ($0.069$) & $0.31 / 2.96$ & $9.22\;(0.324)$ & $10.51\;(0.231)$ \\
\cite{jarocinski2020deconstructing} & MPI & $10.87$ ($0.001$) & $5.13$ ($0.024$) & $-0.18 / 0.09$ & $1.71\;(0.989)$ & $11.17\;(0.192)$ \\
\cite{jarocinski2020deconstructing}  & CMP  & $6.62$ ($0.011$) & $4.00$ ($0.047$) & $0.14 / 2.07$ & $6.29\;(0.615)$ & $13.26\;(0.103)$ \\
\cite{jarocinski2020deconstructing}  & OFG & $9.60$ ($0.002$) & $5.49$ ($0.020$) & $-0.13 / 1.05$ & $12.05\;(0.149)$ & $4.63\;(0.796)$ \\
\cite{jarocinski2020deconstructing}  & LSAP  & $1.71$ ($0.192$) & $1.73$ ($0.189$) & $-0.16 / 0.11$ & $4.40\;(0.819)$ & $6.24\;(0.620)$ \\
\cite{jarocinski2020deconstructing}  & DFG  & $28.62$ ($0.000$) & $16.37$ ($0.000$) & $-0.51 / 2.74$ & $5.50\;(0.703)$ & $8.95\;(0.347)$ \\
\cite{kanzig2021macroeconomic} & Supply & $22.67$ ($0.000$) & $10.55$ ($0.001$) & $-0.17 / 1.20$ & $8.19\;(0.611)$ & $9.56\;(0.480)$ \\
\cite{MontielOleaStockWatson2021} & Supply & $15.97$ ($0.000$) & $8.96$ ($0.003$) & $-0.65 / 2.75$ & $2.73\;(0.604)$ & $3.68\;(0.451)$ \\
\cite{mertens2014reconciliation} & Spending & $131.49$ ($0.000$) & $69.80$ ($0.000$) & $0.07 / 1.00$ & $4.85\;(0.303)$ & $5.32\;(0.256)$ \\
\cite{mertens2014reconciliation} & Tax  & $4.22$ ($0.041$) & $1.62$ ($0.204$) & $-0.25 / 2.03$ & $\mathbf{10.12\;(0.039)}$ & $7.69\;(0.103)$ \\
\cite{mertens2014reconciliation} & TFP & $57.62$ ($0.000$) & $39.40$ ($0.000$) & $-0.13 / 1.51$ & $6.86\;(0.143)$ & $0.66\;(0.957)$ \\
\cite{kanzig2023unequal} & CPS & $5.95$ ($0.015$) & $16.85$ ($0.000$) & $-0.27 / 0.59$ & $6.70\;(0.946)$ & $6.82\;(0.941)$ \\
\cite{piffer2018identifying} & Gold & $19.38$ ($0.000$) & $2.99$ ($0.085$) & $1.07 / 4.39$ & $\mathbf{42.52\;(0.000)}$ & $11.88\;(0.616)$ \\
\cite{miranda2026patents} & Technology & $0.20$ ($0.659$) & $0.56$ ($0.455$) & $-0.52 / 3.11$ & $2.01\;(0.981)$ & $4.78\;(0.781)$ \\
\cite{lagerborg2023sentimental} & Fatalities  & $11.63$ ($0.001$) & $18.04$ ($0.000$) & $-0.32 / 1.22$ & $8.11\;(0.777)$ & $15.60\;(0.210)$ \\

\bottomrule
\end{tabular}

\vspace{0.5em}
\begin{minipage}{0.95\textheight}
\footnotesize
\textit{Notes:} All reduced form VARs are estimated by OLS, and all exogeneity tests use $L=3$.
\end{minipage}
\end{sidewaystable}

\end{document}